\newtheorem{lemma}{Lemma}
\newtheorem{corollary}{Corollary}
\newtheorem{theorem}{Theorem}
\newtheorem{definition}{Definition}
\newtheorem{prop}{Proposition}
\newcommand\ii{\mathrm i}
\newcommand{\jk}[1]{{\color{red} #1}}
\definecolor{cbORANGE}{RGB}{213, 94, 0}
\definecolor{cbBLUE}{RGB}{0, 114, 178}
\definecolor{cbGREEN}{RGB}{0,158,115}
\definecolor{DARKRED}{RGB}{181,0,0}
\definecolor{urlpurple}{RGB}{103,0,181}
\NewDocumentCommand{\rotateReset}{s m}
{
	\IfBooleanTF #1
	{
		\pgftransformreset
		\pgftransformscale{#2}
	}
	{
		\pgftransformresetnontranslations
		\pgftransformscale{#2}
	}
	
}
\newcommand{\rotateFlat}[3]
{
	\rotateReset{#3}
	\pgftransformcm{1}{0}{0.3}{0.35}{\pgfpoint{#1cm}{#2cm}}
}
\newcommand{\rotateSide}[3]
{
	\rotateReset{#3}
	\pgftransformcm{0.3}{0.35}{0}{1}{\pgfpoint{#1cm}{#2cm}}
}
\DeclareDocumentCommand{\oddface}{s o o s}
{
	\begin{scope}[very thick,decoration={
			markings,
			mark=at position 0.5 with {\arrow{>}}},
		shift = {(.5,.5)},
		line cap = round
		]
		\IfNoValueF{#2}{
			\pgftransformxshift{#2cm}}
		\IfNoValueF{#3}{
			\pgftransformyshift{#3cm}}
		\IfBooleanT{#1}{\pgftransformrotate{90};}
		\fill[blue,opacity=0.2] (-.5,-.5) rectangle (.5,.5);
		
		\IfBooleanTF #4
		{
			\begin{scope}[white!40!black]
				\draw[postaction={decorate}] (-.5,-.5)--(.5,-.5);
				\draw[postaction={decorate}] (.5,.5)--(.5,-.5);
				\draw[postaction={decorate}] (.5,.5)--(-.5,.5);
				\draw[postaction={decorate}] (-.5,-.5)--(-.5,.5);
				\draw (0,0) circle (.1cm);
			\end{scope}
		}
		{
			
			\draw[postaction={decorate}] (-.5,-.5)--(.5,-.5);
			\draw[postaction={decorate}] (.5,.5)--(.5,-.5);
			\draw[postaction={decorate}] (.5,.5)--(-.5,.5);
			\draw[postaction={decorate}] (-.5,-.5)--(-.5,.5);
			\draw (0,0) circle (.1cm);
		}
		
	\end{scope}
}
\DeclareDocumentCommand{\whiteoddface}{s o o s}
{
	\begin{scope}[very thick,decoration={
			markings,
			mark=at position 0.5 with {\arrow{>}}},
		shift = {(.5,.5)},
		line cap = round
		]
		\IfNoValueF{#2}{
			\pgftransformxshift{#2cm}}
		\IfNoValueF{#3}{
			\pgftransformyshift{#3cm}}
		\IfBooleanT{#1}{\pgftransformrotate{90};}
		\fill[white,opacity=0.3] (-.5,-.5) rectangle (.5,.5);
		
		\IfBooleanTF #4
		{
			\begin{scope}[white!40!black]
				\draw[postaction={decorate}] (-.5,-.5)--(.5,-.5);
				\draw[postaction={decorate}] (.5,.5)--(.5,-.5);
				\draw[postaction={decorate}] (.5,.5)--(-.5,.5);
				\draw[postaction={decorate}] (-.5,-.5)--(-.5,.5);
				\draw (0,0) circle (.1cm);
			\end{scope}
		}
		{
			
			\draw[postaction={decorate}] (-.5,-.5)--(.5,-.5);
			\draw[postaction={decorate}] (.5,.5)--(.5,-.5);
			\draw[postaction={decorate}] (.5,.5)--(-.5,.5);
			\draw[postaction={decorate}] (-.5,-.5)--(-.5,.5);
			\draw (0,0) circle (.1cm);
		}
		
	\end{scope}
}
\NewDocumentCommand{\evenface}{s o o s}
{
	\begin{scope}[very thick,decoration={
			markings,
			mark=at position 0.5 with {\arrow{>}}},
		shift = {(.5,.5)},
		line cap = round
		] 
		\IfNoValueF{#2}{
			\pgftransformxshift{#2cm}}
		\IfNoValueF{#3}{
			\pgftransformyshift{#3cm}}
		
		\fill[white,opacity=0.5] (-.5,-.5) rectangle (.5,.5);
		
		\IfBooleanTF{#1}{
			\IfBooleanTF #4
			{
				\begin{scope}[white!40!black]
					\draw[postaction={decorate}] (-.5,-.5)--(.5,-.5);
					\draw[postaction={decorate}] (.5,-.5)--(.5,.5);
					\draw[postaction={decorate}] (.5,.5)--(-.5,.5);
					\draw[postaction={decorate}] (-.5,.5)--(-.5,-.5);
				\end{scope}
			}
			{
				\draw[postaction={decorate}] (-.5,-.5)--(.5,-.5);
				\draw[postaction={decorate}] (.5,-.5)--(.5,.5);
				\draw[postaction={decorate}] (.5,.5)--(-.5,.5);
				\draw[postaction={decorate}] (-.5,.5)--(-.5,-.5);
			}
		}
		{
			\IfBooleanTF #4
			{
				\begin{scope}[white!40!black]
					\draw[postaction={decorate}] (.5,-.5)--(-.5,-.5);
					\draw[postaction={decorate}] (.5,.5)--(.5,-.5);
					\draw[postaction={decorate}] (-.5,.5)--(.5,.5);
					\draw[postaction={decorate}] (-.5,-.5)--(-.5,.5);
				\end{scope}
			}
			{
				\draw[postaction={decorate}] (.5,-.5)--(-.5,-.5);
				\draw[postaction={decorate}] (.5,.5)--(.5,-.5);
				\draw[postaction={decorate}] (-.5,.5)--(.5,.5);
				\draw[postaction={decorate}] (-.5,-.5)--(-.5,.5);
			}
		}

	\end{scope}
}
\NewDocumentCommand{\Kagome}{o o s}{
	\begin{scope}[shift={#1}, rotate=#2]
		\foreach \x in {0,60,120,180,240,300}{
			\draw (\x:1)--(\x+60:1);
		}
		\foreach \x in {0,60,120,180,240,300}{
			\node at (\x:1){};
		}
		
		\IfBooleanTF #3
		{  }
		{ \begin{scope}[shift={(1,0)}]
				\node at (60:1) {};
				\node at (-60:1) {};
				\draw (0,0)--(60:1);
				\draw (0,0)--(-60:1);
				\draw (60:1)--(120:1);
				\draw (-60:1)--(-120:1);
		\end{scope}}
	\end{scope}
}
\NewDocumentCommand{\sixfourthreefour}{o o}{
	\begin{scope}[shift={#1},rotate={#2}]
		\foreach \x in {0,60,120,180,240,300}{
			\node at (\x:1){};
			\draw (\x:1)--(\x+60:1);
		}
		
		\begin{scope}[shift={(120:1)}]
			\node at (0,1){};
			\node at (1,1){};
			\draw (0,0)--(0,1);
			\draw (1,0)--(1,1);
			\draw (0,1)--(1,1);
			\draw (1,1)--($(1,1)+(-30:1)$);
		\end{scope}
		
		\begin{scope}[rotate=-60,shift={(120:1)}]
			\node at (0,1){};
			\node at (1,1){};
			\draw (0,0)--(0,1);
			\draw (1,0)--(1,1);
			\draw (0,1)--(1,1);
			\draw (1,1)--($(1,1)+(-30:1)$);
		\end{scope}
		
		\begin{scope}[rotate=-120,shift={(120:1)}]
			\node at (0,1){};
			\node at (1,1){};
			\draw (0,0)--(0,1);
			\draw (1,0)--(1,1);
			\draw (0,1)--(1,1);
		\end{scope}
	\end{scope}
}
\NewDocumentCommand{\twelvethree}{o o}{
	\begin{scope}[shift={#1},rotate=#2]
		\foreach \x in {15,45,75,105,135,165,195,225,255,285,315,345}{
			\node at (\x:{0.5/sin(15)}){};
			\draw[] (\x:{0.5/sin(15)})--(\x+30:{0.5/sin(15)});
		}
		
		\begin{scope}[shift={(15:{0.5/sin(15)})}]
			\node at (60:1){};
			\draw[] (0,0)--(60:1);
			\draw[] (60:1)--(120:1);
		\end{scope}
		
		\begin{scope}[shift={(-15:{0.5/sin(15)})}]
			\node at (-60:1){};
			\draw[] (0,0)--(-60:1);
			\draw[] (-60:1)--(-120:1);
		\end{scope}
	\end{scope}
}
\NewDocumentCommand{\eightfour}{o o}{
	\begin{scope}[shift={#1},rotate=#2]
		\foreach \x in {22.5,67.5,112.5,157.5,202.5,247.5,292.5,337.5}{
			\node at (\x:{0.5/sin(22.5)}){};
			\draw (\x:{0.5/sin(22.5)})--(\x+45:{0.5/sin(22.5)});
			\begin{scope}[shift={(45:{1/tan(22.5)})}]
				\node at (\x:{0.5/sin(22.5)}){};
				\draw (\x:{0.5/sin(22.5)})--(\x+45:{0.5/sin(22.5)});
			\end{scope}
		}
		
		\begin{scope}[shift={(112.5:{0.5/sin(22.5)})}]
			\node at (0,1){};
			\draw (0,0)--(0,1);
			\draw (0,1)--(1,1);
		\end{scope}
		
		\begin{scope}[shift={(-22.5:{0.5/sin(22.5)})}]
			\node at (1,0){};
			\draw (1,0)--(1,1);
			\draw (0,0)--(1,0);
		\end{scope}
	\end{scope}
}
\NewDocumentCommand{\twelvesixfour}{o o}{
	\begin{scope}[shift={#1},rotate=#2]
		\foreach \x in {15,45,75,105,135,165,195,225,255,285,315,345}{
			\node at (\x:{0.5/sin(15)}){};
			\draw (\x:{0.5/sin(15)})--(\x+30:{0.5/sin(15)});
		}
		
		\begin{scope}[rotate=30,shift={(105:{0.5/sin(15)})}]
			\node at (1,1){};
			\node at (0,1){};
			\draw (0,0)--(0,1);
			\draw (1,0)--(1,1);
			\draw (0,1)--(1,1);
			\draw (1,1)--($(1,1)+(30:1)$);
			\node at ($(1,1)+(30:1)$){};
			\draw ($(1,1)+(30:1)$)--($(1,1)+(30:1)+(-30:1)$);
			\node at ($(1,1)+(30:1)+(-30:1)$){};
			\draw ($(1,1)+(30:1)+(-30:1)$)--($(1,1)+(30:1)+(-30:1)+(-90:1)$);
		\end{scope}
		
		\begin{scope}[rotate=-30,shift={(105:{0.5/sin(15)})}]
			\node at (1,1){};
			\node at (0,1){};
			\draw (0,0)--(0,1);
			\draw (1,0)--(1,1);
			\draw (0,1)--(1,1);
			\draw (1,1)--($(1,1)+(30:1)$);
			\node at ($(1,1)+(30:1)$){};
			\draw ($(1,1)+(30:1)$)--($(1,1)+(30:1)+(-30:1)$);
			\node at ($(1,1)+(30:1)+(-30:1)$){};
			\draw ($(1,1)+(30:1)+(-30:1)$)--($(1,1)+(30:1)+(-30:1)+(-90:1)$);
		\end{scope}
		
		\begin{scope}[rotate=-90,shift={(105:{0.5/sin(15)})}]
			\node at (1,1){};
			\node at (0,1){};
			\draw (0,0)--(0,1);
			\draw (1,0)--(1,1);
			\draw (0,1)--(1,1);
		\end{scope}
	\end{scope}
}
\title{A Compact Fermion to Qubit Mapping Part 2: Alternative Lattice Geometries}
\author[1]{Charles Derby}
\author[2]{Joel Klassen}
\affil[1,2]{Phasecraft Ltd.}
\affil[1]{University College London}
\begin{document}

\maketitle

\begin{abstract}
In recent work \cite{Derby20} a novel fermion to qubit mapping -- called the compact encoding -- was introduced which outperforms all previous local mappings in both the qubit to mode ratio, and the locality of mapped operators. There the encoding was demonstrated for square and hexagonal lattices. Here we present an extension of that work by illustrating how to apply the compact encoding to other regular lattices. We give constructions for variants of the compact encoding on all regular tilings with maximum  degree $4$. These constructions yield edge operators with Pauli weight at most $3$ and use fewer than $1.67$ qubits per fermionic mode. Additionally we demonstrate how the compact encoding may be applied to a cubic lattice, yielding edge operators with Pauli weight no greater than $4$ and using approximately $2.5$ qubits per mode. In order to properly analyse the compact encoding on these lattices a more general group theoretic framework is required, which we elaborate upon in this work. We expect this framework to find use in the design of fermion to qubit mappings more generally.
\end{abstract}

\section{Introduction}

The simulation of fermionic systems is a promising application of quantum computing which is likely to find use in fields such as quantum chemistry, material science, and high energy physics.  
An indispensable part of such simulations is a mapping from the modes of some fermionic system to the qubits of a quantum computer via a fermion to qubit mapping. 

For the purposes of quantum simulation a desirable feature of any fermion to qubit mapping is that commonplace fermionic operators are mapped to qubit operators that act on a small number of qubits (are algebraically local) which are nearby in terms of the interaction hardware of the device (are geometrically local). This reduces the circuit depth required to perform unitary operations generated by these operators, such as simulating the time evolution of the system or performing VQE. Typically these fermionic operators will conform to some natural local structure of the system under consideration, the canonical example being a square lattice. 

Early instances of fermion to qubit mappings  \cite{JordanWigner,bravyi2002fermionic} are neither algebraically nor geometrically local. These mappings employ $M$ qubits to represent $M$ fermionic modes, but map local fermionic interactions to qubit operators whose Pauli weights scale with system size by at least a logarithmic factor (shown to be optimal for encodings of this kind \cite{jiang2020optimal}).


Subsequently a number of mappings have been developed to solve this problem. These mappings achieve Pauli weights for fermionic operators -- tailored to specific lattices -- which do not scale with the system size while also preserving the geometric locality of the fermionic system (presuming the available hardware has similar connectivity)
\cite{bravyi2002fermionic, verstraete2005mapping, whitfield2016local, jiang2018majorana, chen2018exact, steudtner2019square, setia2019superfast, chien2020}. All of these mappings are stabilizer codes which encode the fermionic system into an entangled subspace of a multi-qubit system. Accordingly, these encodings use a greater number of qubits than the number of modes being simulated. We refer to such mappings as \emph{fermionic encodings}. 

It is worth noting that none of these fermionic encodings on qubits are truly local. Levin an Wen argue in \cite{levin2003fermions} that for any spin system with fermionic excitations, pairs of creation operators are necessarily connected by non-trivial string like structures, even if acting in geometrically separated regions.


In recent work \cite{Derby20} we introduced a new fermionic encoding -- which we call the compact encoding -- that outperforms all other local encodings in the Pauli weights of the ``edge'' and ``vertex'' generators of the even fermionic algebra and also in the ratio of qubits to simulated fermionic modes. We demonstrated the compact encoding for square and hexagonal lattices. For these lattice geometries the encoding gives rise to generators of the even fermionic algebera with maximum Pauli weight 3, and a qubit to mode ratio of less than $1.5$. Contrast this with the Verstraete-Cirac (VC) encoding \cite{verstraete2005mapping} which encodes these generators with maximum weight 4 with a qubit to mode ratio of 2, the lowest out of prior encodings in both these metrics. We closed that work with the suggestion that the scheme presented there could be generalized to other lattice geometries. In this work we go into more detail how this can be done.  

We illustrate how the compact encoding may be applied to all uniform tilings with degree less than $4$. In these cases the generators of the even fermionic algebra are maximum weight $3$, and the qubit to mode ratios range from $<1.25$ to $<1.67$. Some readers may be particularly interested in the $4.8.8$ uniform tiling, in which a spinful fermionic system on a square lattice may be embedded. Additionally we demonstrate how the compact encoding may be applied to a cubic lattice. In this case the generators of the even fermionic algebra are maximum weight $4$, and the qubit to mode ratio is less than $2.5$. Again, this is in contrast to the VC encoding which has maximum weight $4$ and qubit to mode ratio of 3 for this geometry. To facilitate the analysis of these encodings, we explain the general group theoretic features of this encoding scheme. A surprising emergent property of some of the encodings presented here is the disparity between the size of the code space and the fermionic space to be encoded.
In the cases where the code space is larger, multiple species of Majorana particles emerge in the encoded system. We show how additional stabilizers may be defined to remove these species.

The paper is structured as follows. In Section \ref{subsec:Encodings_on_graphs} we describe a group theoretic framework for understanding local fermionic encodings, which we will make use of throughout. In Section \ref{LW encoding} we review the compact encoding on a square lattice. In Section \ref{sec:species} we discuss the emergence of particle species in local fermionic encodings, and give a bound on their number. In Section \ref{sec:General planar codes} we define a restricted family of compact encodings on planar graphs about which we are able to prove some useful properties, and in \ref{sec:other codes} we apply this restricted family to all uniform planar graphs with degree less than $4$. In Section \ref{sec:cubic} we describe how the compact encoding may be applied to a cubic lattice.


 

\section{Local Fermionic Encodings on Graphs}\label{subsec:Encodings_on_graphs}
The aim of any fermionic mapping is to represent  fermionic operators by qubit operators.  All such mappings are necessarily restricted to discrete sets of fermionic modes, since the qubit system is finite dimensional. An important feature of natural fermionic systems is parity superselection, which forbids observables that do not commute with fermion number parity. Thus, insofar as one is concerned with representing natural observables, it suffices for fermionic mappings to represent operators which preserve parity, so called even fermionic operators. 

The even fermionic operators form a group algebra $\mathbb{C}[M_E]$ where $M_E$ is the group of even products of Majorana operators, with factors $\pm1$ and $\pm \ii$:
\begin{align}
&M_E = \left\lbrace \pm \ii^{0/1} \prod_j \gamma_j^{b_{2j}} \bar{\gamma}_j^{b_{2j+1}}  \; | \; b \textrm{ is an even parity bit string}\right\rbrace \\
&\gamma_j = a_j + a^\dagger_j, \;\;
\overline{\gamma}_j = (a_j - a^\dagger_j)/\ii.
\end{align}
Here $a_j^{\dagger}$ ($a_j$) are the standard fermionic creation (annihilation resp.) operators. Thus a fermionic encoding must at minimum constitute a group representation $\tau$ of $M_E$:
\begin{equation}
\tau: M_E \rightarrow L(\mathcal{H})\;,\; \mathcal{H} \textrm{ is the Hilbert space of a system of qubits.}
\end{equation}

The group $M_E$, and by extension the group algebra $\mathbb{C}[M_E]$, are generated by the ``edge'' $E_{ij}$ and ``vertex'' $V_j$ operators
\begin{equation}\label{eq:EVops1}
\begin{split}
E_{ij} := -\ii \gamma_i\gamma_j, \;
V_j := -\ii \gamma_i\overline{\gamma}_j,
\end{split}
\end{equation}
and phases $\{\pm 1, \pm \ii\}$,
which satisfy the relations \footnote{In the context of a group the relations $[x, y]=0$ and $\{x, y\}=0$ indicate $xy=yx$ and $xy=-yx$ resp.} 
\begin{equation}\label{eq:EVconds}
\begin{gathered}
E_{ij}^2=\mathbb{I},\quad V_j^2=\mathbb{I},\quad E_{ij}^\dagger=E_{ij},\quad V_j^\dagger=V_j,\quad E_{ij}=-E_{ji}, \\
i\neq j\neq k\neq l \neq m \, : \quad [V_i,V_j]=0,\quad [E_{ij},V_k]=0,\quad [E_{ij},E_{kl}]=0, \\
\{E_{ij},V_j\}=0,\quad \{E_{ij},E_{jk}\}=0
\end{gathered}
\end{equation}
and for any ordered set of modes $L=(i_1, i_2, \dots, i_{|L|})$, with $i_1 = i_{|L|}$ so that $L$ constitutes a cyclic path, the relation
\begin{equation}\label{eq:Eloopcond}
\ii^{|L|}\prod_{x=1}^{|L|-1}E_{i_x,i_{x+1}}=\mathbb{I}.
\end{equation}
These relations completely fix the group structure of $M_E$ and thus the group representation $\sigma$ is completely specified by qubit operators satisfying these relations. The edge and vertex operators are self-inverse, hermitian, and only mutually commute or anticommute. This makes multi-qubit Pauli operators natural candidates for their representation. Indeed every existing fermionic mapping represents edge and vertex operators as Pauli operators -- in some cases projected onto a subspace. 

There are two notions of locality which a design of a local fermionic encoding may wish to pursue: algebraic locality, wherein the \emph{number} of fermionic modes or qubits an operator acts upon is bounded; and geometric locality, wherein the \emph{maximum distance} -- for some distance measure -- between modes or qubits an operator acts upon is bounded. 
The compact encoding, and generalizations presented here, aims to preserve the geometric locality of operators, while minimizing the algebraic locality on qubits. We consider a distance measure on a fermionic system represented by a connected graph -- which we call a fermionic graph -- whose vertices correspond to fermionic modes, with distance given by the minimal path between modes. In the examples we consider the graphs are embedded in real space, and we employ proximity in real space as the distance measure on the qubit system. To this end it suffices to encode the vertex and edge operators associated with the edges and vertices of the graph to local edge and vertex operators on qubits, all other operators may be decomposed into these edge and vertex operators in a way that inherits this locality. 

In general it is not feasible to assign local Pauli operators to the edges and vertices of the fermionic graph which satisfy all of relations \ref{eq:EVconds} and \ref{eq:Eloopcond}. This is because Relation \ref{eq:Eloopcond} has a highly non-local character. Instead the strategy is to assign local Pauli operators which satisfy relations \ref{eq:EVconds} and project via the stabilizer formalism into the subspace which respects Relation \ref{eq:Eloopcond}. In this case the stabilizers are the products of closed loops of edges in the fermionic graph. In this way, one is constructing a representation of a group structure on the graph, defined by relations \ref{eq:EVconds}, and quotienting out the subgroup corresponding to the cycle space of the graph.

Formally, given an undirected, connected graph $G=(\textbf{E}=\{\{v_i,v_j\} \},\textbf{V}=\{v_i\})$ we define the finitely presented group $M_G$ whose presentation comprises the vertices $\bf{V}$, the directed versions of the edges $ \textbf{E}_D:=\{e_{ij}=(v_i, v_j), e_{ji}=(v_j, v_i) \;|\; \forall \{v_i, v_j \} \in \textbf{E}\}$  and $\{\pm 1, \pm \ii\}$, along with the relations
\begin{equation}\label{eq:MGconds}
\begin{gathered}
e_{ij}^2=\mathbb{I},\quad v_j^2=\mathbb{I},\quad e_{ij}=-e_{ji}, \\
i\neq j\neq k\neq l \neq m \, : \quad [v_i,v_j]=0,\quad [e_{ij},v_k]=0,\quad [e_{ij},e_{kl}]=0, \\
\{e_{ij},v_j\}=0,\quad \{e_{ij},e_{jk}\}=0 
\end{gathered}
\end{equation}
In the notation of group presentations:
\begin{equation}
M_G := \langle \mathbf{V} \cup \mathbf{E}_D \cup \{\pm 1, \pm i\} \vert \textrm{Eqs. \ref{eq:MGconds}}\rangle
\end{equation}
We define the abelian normal subgroup of directed cycles $\mathcal{C}_G \triangleleft M_G$ (see Appendix \ref{app:CGProperties} for details)
\begin{equation}\label{eq:cycleGroup}
\mathcal{C}_G = \left\lbrace \ii^{|L|}\prod_{x=1}^{|L|-1}e_{i_x,i_{x+1}} \; \vert \; L=(i_1, i_2, \dots, i_{|L|}),  i_1 = i_{|L|}, e_{i_x, i_{x+1}} \in \textbf{E}_D \right \rbrace
\end{equation}
We note that $\mathcal{C}_G$ is isomorphic to the cycle space of $G$ (see Appendix \ref{app:CGProperties}), and its elements are invariant under a choice of first and last element, or total orientation.

In Appendix \ref{app:CGProperties} we show that given a fermionic system, and a corresponding connected fermionic graph $G$, the group of even Majorana operators $M_E$ is isomorphic to the quotient group $M_G/\mathcal{C}_G$. Thus if we can identify a  representation 
\begin{equation}
\sigma: M_G \rightarrow L(\mathcal{H})\;,\; \mathcal{H} \textrm{ is the Hilbert space of a system of qubits,}
\end{equation}
such that all elements of $\sigma(\mathcal{C}_G)$ have a common $+1$ eigenspace $\mathcal{U} \subseteq \mathcal{H}$, then we may construct a representation $\tau$ of $M_E$ by considering the projection of $\sigma$ into $\mathcal{U}$ ie 
\begin{equation}
\sigma_{\mathcal{U}} := \textrm{Proj}_{\mathcal{U}} \circ \sigma.
\end{equation}
 Noting that $\mathcal{C}_G \subseteq ker(\sigma_{\mathcal{U}})$ we are free to define the action of $\tau$
\begin{equation}
\tau(m \mathcal{C}_G) = \sigma_{\mathcal{U}}(m) \;,\; \forall m \in M_G
\end{equation}
which constitutes a faithful representation of $M_E$. If we choose a representation $\sigma$ which maps into the multi qubit Pauli group
\begin{equation}
\{\pm 1, \pm \ii \} \times_i^N \{I_i, X_i, Y_i, Z_i\},
\end{equation}
then since $\mathcal{C}_G$ is abelian, and provided that $-1 \not \in \sigma(\mathcal{C}_G)$, $\mathcal{U}$ automatically exists \cite{gottesman1997} and corresponds to a stabilizer code space of the stabilizer group $S:=\sigma(\mathcal{C}_G).$

To summarize, given a connected fermionic graph $G$, corresponding to some fermionic system, to construct a local fermionic encoding it suffices to specify a local mapping $\sigma$ of the edges and vertices of the graph to multi-qubit Pauli operators, satisfying the relations \ref{eq:MGconds}, such that no element of $\mathcal{C}_G$ is mapped to $-1$. An example is the Jordan Wigner encoding, where the fermionic graph is a line, there are no cycles and so $\mathcal{C}_G$ is trivial, and $\sigma(e_{ij})=X_iY_j$, $\sigma(v_i)=Z_i$.

Throughout the text, we use the tilde superscript to denote the representation of an operator, ie $\tilde{e}_{ij}:=\sigma(e_{ij})$ and $\tilde{E}_{ij}:=\tau(E_{ij})$. In cases where we set $\tilde{E}_{ij}$ or $\tilde{V}_{ij}$ equal to a Pauli operator, it is implicit that this Pauli operator is projected into the subspace $\mathcal{U}$. Thus in the case where $e_{ij}$ is defined for some edge $(i,j)$, $\tilde{E}_{ij}$ and $\tilde{e}_{ij}$ may be used interchangeably (similarly for $\tilde{V}_i$ and $\tilde{v}_i$), as was the case in \cite{Derby20}. Nevertheless we wish to emphasize here the conceptual difference between $e_{ij}$, $v_{i}$ and $E_{ij}$, $V_{i}$. The former are elements of an abstract group $M_G$ corresponding only to edges and vertices of a particular graph, while the latter are elements of the group of even Majorana monomials $M_E$, which has no particular graph structure.

\subsection{Counting Stabilizers}

Using the construction described in the previous section, one specifies a representation $\sigma$ of $M_G$ which prescribes how the even Majorana monomials $M_E$ of a fermionic system are encoded into a stabilizer code space $\mathcal{U}$. However the logical operators that may possibly act on $\mathcal{U}$ -- the operators which commute with the stabilizer group $S:=\sigma(\mathcal{C}_G)$ -- may be larger than the group algebra $\mathbb{C}[M_E]$. In other words the code space may encode more than just parity preserving fermionic states, there may be additional structure.

This additional structure will be indicated by the dimension of the code space. The dimension of a fermionic system with $M$ modes is $2^M$. Fixing parity reduces the dimension by half, ie $2^{M-1}$. The dimension of the code space $\dim(\mathcal{U})$ will depend on the dimension of the original Hilbert space, and the size of the minimal set of generators of $S$, and may diverge from this value.


For a group $X$ we denote the minimum size of a set of generators -- the \textit{rank} -- of $X$ by $D(X)$. For an encoding employing $N$ qubits, the dimension of the code space is $\dim(\mathcal{U}) = 2^{N-D(S)}$. For our purposes it is most useful to consider how the degrees of freedom in the encoded space differ from the usual degrees of freedom of the fermionic space:
\begin{equation}
\dim(\mathcal{U}) = 2^{M + \Delta}
\end{equation}
\begin{equation}\label{eq:disparity}
\Delta := N-M-D(S)
\end{equation}
We call $\Delta$ the \textit{disparity}. When the disparity is $-1$ the code space encodes only the even fermionic states. When the disparity is $0$ the code space encodes the full fermionic Hilbert space. When the disparity is positive, the code space encodes $\Delta$ additional qubit degrees of freedom.

Here and throughout an important subgroup of $M_G$ is the group of all cycles that are mapped to the identity under $\sigma$, ie $ker(\sigma\vert_{\mathcal{C}_G})$, where $\sigma \vert_{\mathcal{C}_G}$ is the restriction of the representation $\sigma$ to the subgroup $\mathcal{C}_G$. For notational ease we define
\begin{equation}\label{eq:kernel}
K:= ker(\sigma\vert_{\mathcal{C}_G})
\end{equation}

\begin{theorem}
Given a fermionic encoding $\sigma$ for a connected fermionic graph $G$, the rank of the stabilizer group $S$ is
$D(S) =  D(\mathcal{C}_G) -D(K).$
\end{theorem}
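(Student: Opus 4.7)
The plan is to apply the first isomorphism theorem to the restricted homomorphism $\sigma|_{\mathcal{C}_G}: \mathcal{C}_G \to S$. Since $S$ is by definition the image $\sigma(\mathcal{C}_G)$, this map is surjective, and its kernel is exactly $K$ as defined in Equation \ref{eq:kernel}. Hence $S \cong \mathcal{C}_G / K$ as groups.

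To promote this group isomorphism to the claimed identity on ranks, I would invoke the appendix result that $\mathcal{C}_G$ is isomorphic to the cycle space of $G$, and is therefore an elementary abelian $2$-group, i.e.\ an $\mathbb{F}_2$-vector space. Any subgroup and any quotient of such a group is again elementary abelian $2$, and on such groups the minimum generating set size $D(\cdot)$ coincides with the $\mathbb{F}_2$-dimension. The short exact sequence $0 \to K \to \mathcal{C}_G \to S \to 0$ is then an exact sequence of $\mathbb{F}_2$-vector spaces, and dimension is additive along it: $\dim_{\mathbb{F}_2} \mathcal{C}_G = \dim_{\mathbb{F}_2} K + \dim_{\mathbb{F}_2} S$. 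Rearranging yields $D(S) = D(\mathcal{C}_G) - D(K)$.

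The step carrying the actual content, if one does not treat the appendix as a black box, is verifying that $\mathcal{C}_G$ really is elementary abelian despite the phase factor $\ii^{|L|}$ attached to each cycle word. One has to check that squaring a canonical cycle element $\ii^{|L|}\prod_x e_{i_x,i_{x+1}}$ returns $\mathbb{I}$: the $(-1)^{|L|}$ from squaring the prefactor must be cancelled by the sign accumulated when commuting the second copy of each edge past its neighbours using $\{e_{ij},e_{jk}\}=0$. The phase $\ii^{|L|}$ was evidently chosen in the definition of $\mathcal{C}_G$ precisely to make this cancellation work, so this is a bookkeeping computation rather than a conceptual difficulty. Once it is in place, the remainder of the theorem is the first isomorphism theorem plus rank additivity for $\mathbb{F}_2$-vector spaces.
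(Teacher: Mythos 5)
Your proposal is correct and is essentially the paper's own argument: the paper likewise identifies $S$ with a faithful representation of $\mathcal{C}_G/K$ (your first isomorphism theorem step), applies Lagrange's theorem, and uses the fact that every element of $\mathcal{C}_G$, $K$, and $S$ squares to the identity so that $\vert X \vert = 2^{D(X)}$, which is just your $\mathbb{F}_2$-dimension additivity phrased via cardinalities. Your added remark that one must verify the phase $\ii^{\vert L \vert}$ makes cycle elements genuinely self-inverse is a point the paper asserts without proof, but it does not change the route.
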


\begin{proof}
$S$ corresponds to a faithful representation of the quotient group $\mathcal{C}_G/K$, and so $\vert S \vert = \vert \mathcal{C}_G/K \vert$. Furthermore, by Lagrange's theorem $$\vert \mathcal{C}_G/K \vert = \vert \mathcal{C}_G \vert / \vert K \vert.$$ Because $a^2=I\;,\; \forall a \in \mathcal{C}_G$, the elements of $\mathcal{C}_G$ correspond to the elements of the vector space $\mathbb{Z}_2^{D(\mathcal{C}_G)}$ so that $\vert \mathcal{C}_G \vert =  \vert \mathbb{Z}_2^{D(\mathcal{C}_G)}\vert = 2^{D(\mathcal{C}_G)}$. Similarly for $S$ and $K$:  $\vert S \vert = 2^{D(S)}$ and $\vert K \vert = 2^{D(K)}$. Thus
\begin{equation} \label{eq:stabilizerRank}
D(S)=D(\mathcal{C}_G) -D(K)
\end{equation}
\end{proof}

\begin{corollary}
\begin{equation} \label{eq:disparity2}
\Delta = (N-M)-(D(\mathcal{C}_G)-D(K))
\end{equation}
\end{corollary}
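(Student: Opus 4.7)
The plan is to observe that this corollary is an immediate consequence of the preceding theorem together with the definition of disparity. Specifically, I would start from the definition given in Equation \ref{eq:disparity}, namely $\Delta := N - M - D(S)$, and then substitute in the expression $D(S) = D(\mathcal{C}_G) - D(K)$ established in the theorem. After the substitution and a trivial rearrangement of signs, the claimed identity $\Delta = (N - M) - (D(\mathcal{C}_G) - D(K))$ falls out directly.

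Since all of the heavy lifting — the enumeration argument that $\vert S \vert = \vert \mathcal{C}_G \vert / \vert K \vert$ and the conversion of group orders to ranks via the $\mathbb{Z}_2$ vector space structure — has already been carried out in the proof of the theorem, there is no further content to establish here. The only thing to verify is that the definition of $\Delta$ is the correct starting point, and that $N$, $M$, $D(\mathcal{C}_G)$, and $D(K)$ are all well-defined in the context at hand (which they are, given that $\sigma$ is a fixed representation on a fixed number of qubits $N$ encoding $M$ modes).

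There is no real obstacle: this is a bookkeeping corollary whose entire purpose is to repackage the theorem in a form that is more directly useful for the subsequent sections, where $D(\mathcal{C}_G)$ and $D(K)$ can be computed separately — typically $D(\mathcal{C}_G)$ from graph-theoretic data about $G$ (the dimension of the cycle space) and $D(K)$ from the particular choice of $\sigma$ (which cycles collapse to the identity). I would therefore present the proof as a single displayed computation, noting at the end that the utility of the corollary lies in decomposing the disparity into a purely graph-theoretic contribution and a representation-dependent contribution.
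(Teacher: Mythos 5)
Your proposal is correct and matches the paper's (implicit) argument exactly: the corollary follows by substituting the theorem's identity $D(S) = D(\mathcal{C}_G) - D(K)$ into the definition $\Delta = N - M - D(S)$ from Equation \ref{eq:disparity}. There is nothing more to it, and the paper itself gives no further proof for this reason.
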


\begin{prop}
$D(\mathcal{C}_G) = \vert \textbf{E}\vert - \vert  \textbf{V} \vert+1$
\end{prop}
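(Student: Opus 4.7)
The plan is to reduce the statement to the standard graph-theoretic fact that the cycle space of a connected graph has dimension $|\mathbf{E}|-|\mathbf{V}|+1$, and then apply the isomorphism (established in Appendix~\ref{app:CGProperties}) between $\mathcal{C}_G$ and the cycle space of $G$ over $\mathbb{Z}_2$.

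First I would observe that every element of $\mathcal{C}_G$ squares to $\mathbb{I}$, so $\mathcal{C}_G$ is an elementary abelian $2$-group and hence may be regarded as a vector space over $\mathbb{Z}_2$; in this case the rank $D(\mathcal{C}_G)$ coincides with the vector space dimension. By the isomorphism $\mathcal{C}_G \cong Z(G)$ between $\mathcal{C}_G$ and the cycle space $Z(G)$ of $G$ (Appendix~\ref{app:CGProperties}), it suffices to show $\dim_{\mathbb{Z}_2}(Z(G)) = |\mathbf{E}|-|\mathbf{V}|+1$.

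To show this, I would carry out the usual spanning-tree argument. Since $G$ is connected, fix a spanning tree $T\subseteq G$; it contains $|\mathbf{V}|-1$ edges, leaving $|\mathbf{E}|-|\mathbf{V}|+1$ chord edges. For each chord $e$, adding $e$ to $T$ creates a unique cycle $C_e$ (the \emph{fundamental cycle} of $e$), containing $e$ and otherwise only tree edges. I would then show that the collection $\{C_e : e \text{ a chord}\}$ is a basis for $Z(G)$. Linear independence is immediate: each $C_e$ contains the chord $e$ and no other chord, so no nontrivial $\mathbb{Z}_2$-combination of fundamental cycles can be zero. To show spanning, take any cycle $C \in Z(G)$, let $e_1,\dots,e_k$ be its chord edges, and verify that $C' := C + C_{e_1} + \cdots + C_{e_k}$ (sum in $Z(G)$) is a cycle supported entirely on tree edges; since $T$ is acyclic, $C'=0$, so $C$ lies in the span of the fundamental cycles.

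The only real subtlety is the appeal to the isomorphism $\mathcal{C}_G \cong Z(G)$, which packages the phase conventions in Equation~\ref{eq:cycleGroup} (the factor $\ii^{|L|}$ and the invariance under orientation/basepoint choice) and shows that distinct cycles as subsets of edges correspond to distinct group elements in $\mathcal{C}_G$; this is where the content of Appendix~\ref{app:CGProperties} does the work. Once that is granted, the proof is the standard cyclomatic-number computation outlined above, and there is no essential obstacle.
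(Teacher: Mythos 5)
Your proof is correct and follows essentially the same route as the paper: both reduce the claim to the isomorphism between $\mathcal{C}_G$ and the cycle space of $G$ and then invoke the cyclomatic-number formula $|\mathbf{E}|-|\mathbf{V}|+1$ for a connected graph. The only difference is that you spell out the standard spanning-tree/fundamental-cycle argument for that formula, whereas the paper simply cites the circuit rank.
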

\begin{proof}
Recalling that $\mathcal{C}_G$ is isomorphic to the cycle space of $G$, $D(\mathcal{C}_G)$ is equal to the circuit rank of $G$, which satisfies:
\begin{equation} \label{eq:circuitRank}
D(\mathcal{C}_G)= \vert \textbf{E}\vert - \vert  \textbf{V} \vert+\beta_0
\end{equation}
$\beta_0$ is the zeroth Betti number, ie the number of connected components of the graph, in this case $1$. 
\end{proof}

\section{A Review of the Compact Encoding}\label{LW encoding}

The Low Weight encoding introduced in \cite{Derby20} can be understood through the concepts explained in the previous sections. This encoding uses the fermionic graph formalism introduced in \cref{subsec:Encodings_on_graphs}, where the graph is a square lattice, with each vertex corresponding to a fermionic mode. It defines qubit representations of the edge and vertex operators from \cref{eq:EVops1} such that any local interaction term has a Pauli weight no greater than 3 and does this with a qubit to mode ratio of $<1.5$. A review of the encoding follows. 


For each vertex $j$ assign a vertex qubit indexed by $j$. Label the faces of the lattice even and odd in a checkerboard pattern and for each odd face assign a face qubit as shown in \cref{fig:Scheme}. Give an orientation to each edge on the lattice such that they circulate around even faces, clockwise or anticlockwise, alternating every row of faces, also illustrated in \cref{fig:Scheme}.

\begin{figure}[ht]
	\begin{center}
		\begin{tikzpicture}[scale=1.18,>=stealth,thick]
		\foreach \x in {0,1,2,3,4}{
			\foreach \y in {0,1,2,3}{
				\node at (\x,\y)[circle,fill=black,scale=0.5]{};
			}
		}
		
		\foreach \x in {0.5,2.5}{
			\node at (\x+1,1.5)[circle,fill=black,scale=0.5]{};
			\foreach \y in {0.5,2.5}{
				\node at (\x,\y)[circle,fill=black,scale=0.5]{};
			}
		}
		
		\foreach \x in {0,1,2,3,4}{
			\foreach \y in {0,1,2}{
				\draw[->] (\x,{\y+0.5-pow(-1,\x)*0.3})--(\x,{\y+0.5+pow(-1,\x)*0.3});
			}
		}
		
		\foreach \x in {0,1,2,3}{
			\foreach \y in {0,1,2,3}{
				\draw[->] ({\y+0.5-pow(-1,\x)*0.3},\x)--({\y+0.5+pow(-1,\x)*0.3},\x);
			}
		}
		
		\foreach \x in {1,2,3,4,5}{
			\node at (\x-1.2,3.2)[scale=0.7]{$\x$};
		}
		\foreach \x in {6,7,8,9,10}{
			\node at (\x-6.2,2.2)[scale=0.7]{$\x$};
		}
		\foreach \x in {11,12,13,14,15}{
			\node at (\x-11.2,1.2)[scale=0.7]{$\x$};
		}
		\foreach \x in {16,17,18,19,20}{
			\node at (\x-16.2,0.2)[scale=0.7]{$\x$};
		}
		
		\begin{scope}[shift={(5.5,0)},faint/.style={opacity=0.2}]
		\foreach \x in {0,1,2,3,4}{
			\foreach \y in {0,1,2}{
				\draw[->,faint] (\x,{\y+0.5-pow(-1,\x)*0.3})--(\x,{\y+0.5+pow(-1,\x)*0.3});
			}
		}
		
		\foreach \x in {0.5,2.5}{
			\node at (\x+1,1.5)[circle,fill=black,scale=0.5]{};
			\foreach \y in {0.5,2.5}{
				\node at (\x,\y)[circle,fill=black,scale=0.5]{};
			}
		}
		
		\foreach \x in {0,1,2,3}{
			\foreach \y in {0,1,2,3}{
				\draw[->,faint] ({\y+0.5-pow(-1,\x)*0.3},\x)--({\y+0.5+pow(-1,\x)*0.3},\x);
			}
		}

		\foreach \x in {0,1,2,3,4}{
			\foreach \y in {0,1,2,3}{
				\node at (\x,\y)[circle,fill=black,scale=0.5]{};
			}
		}
		
		\node at (-.2,3)[scale=0.7]{$Z$};
		\node at (0,3+0.25)[scale=0.8]{$V_1$};
		
		\node at (0.5,0.5)[circle,fill=black,scale=0.5]{};
		\node at (1.5,1.5)[circle,fill=black,scale=0.5]{};
		
		\draw[->,white] (3-0.2,3)--(2.2,3)[];    
		\node at (2.5,3.2)[scale=0.8]{$E_{4,3}$};
		\node at (2.18,3)[scale=0.7]{$Y$};
		\node at (3-0.18,3)[scale=0.7]{$X$};
		\node at (2.5,2.7)[scale=0.7]{$Y$};
		\draw[->] (3-0.25,3)--(2.25,3)[];
		\draw[] (2.5,3)--(2.5,2.8)[];
		
		\draw[<-,white] (1,0.2)--(1,1-0.2)[];
		\node at (1.45,0.55)[scale=0.8]{$E_{12,17}$};
		\node at (1,0.2)[scale=0.7]{$X$};
		\node at (1,1-0.2)[scale=0.7]{$Y$};
		\node at (0.68,0.5)[scale=0.7]{$X$};
		\draw[<-] (1,0.3)--(1,1-0.3)[];
		\draw[] (1,0.5)--(0.75,0.5)[];
		
		\draw[<-,white] (4-0.2,0)--(3.2,0)[];    
		\node at (3.5,-0.3)[scale=0.8]{$E_{19,20}$};
		\node at (3.18,0)[scale=0.7]{$X$};
		\node at (4-0.18,0)[scale=0.7]{$Y$};
		\draw[<-] (4-0.25,0)--(3.25,0)[];
		
		\node[white] at (4.2,3.2){1};
		\end{scope}
		\end{tikzpicture}
	\end{center}
	\caption{(Left) Edge orientation for the encoding on a $4\times 5$ square lattice. (Right) Examples of encoded edge and vertex operators based on this layout.}
	\label{fig:Scheme}
\end{figure}
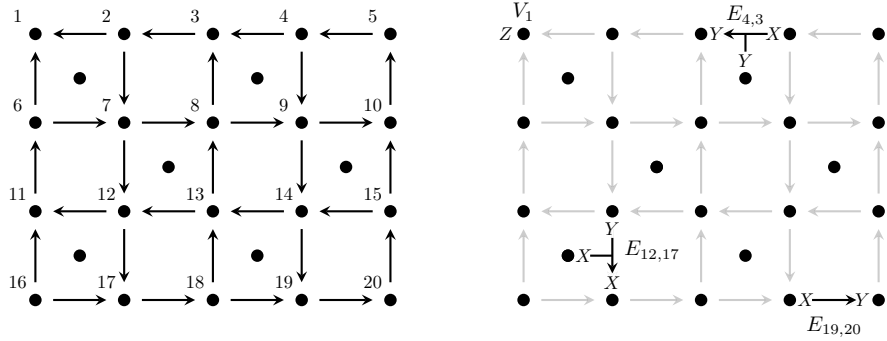

Let $f(i,j)$ index the unique odd face adjacent to edge $(i,j)$.  
For every edge $(i,j)$, with $i$ pointing to $j$, define the following encoded edge operators\footnote{The difference in sign introduce between the vertical up and down arrows is merely to ensure that cycles around odd faces are equal to $1$ and not $-1$.}. 
\begin{equation}
\tilde{e}_{ij} :=  \left\{ \begin{array}{rl}  X_i Y_j X_{f(i,j)} & \textrm{ if $(i,j)$ is oriented downwards} \\
-X_i Y_j X_{f(i,j)} & \textrm{ if $(i,j)$ is oriented upwards}  \\
X_i Y_j Y_{f(i,j)} & \textrm{ if $(i,j)$ is horizontal} \end{array} \right.
\end{equation}
\begin{equation}
\tilde{e}_{ji} := - \tilde{e}_{ij}.
\end{equation}
For those edges on the boundary which are not adjacent to an odd face, we omit the third Pauli operator which is meant to be acting on the non-existent face qubit. For every vertex $j$ define the encoded vertex operators
\begin{equation}
\tilde{v}_{j}:= Z_j
\end{equation}
This specifies all encoded vertex and edge operators. These operators satisfy the conditions in \ref{eq:MGconds}. This encoding is illustrated in \cref{fig:Scheme}.

\begin{figure}
	\centering
	\begin{tikzpicture}[scale=1.2,>=stealth,thick]
	\foreach \x in {0,1}{
		\foreach \y in {0,1}{
			\node at (\x,\y)[circle,fill=black,scale=0.5]{};
			\node at (0.5+\x-\y,-0.5+\x+\y)[circle,fill=black,scale=0.5]{};
		}
	}
	
	\node at (0.18,0)[scale=0.8]{$Y$};
	\node at (1-.18,0)[scale=0.8]{$X$};
	\node at (0.5,-.32)[scale=0.8]{$Y$};
	\draw[->] (1-0.25,0)--(0.25,0)[];
	\draw[] (.5,0)--(.5,-.2)[];
	
	\begin{scope}[rotate=180,shift={(-1,-1)}]
	\node at (0.18,0)[scale=0.8]{$Y$};
	\node at (1-.18,0)[scale=0.8]{$X$};
	\node at (0.5,-.32)[scale=0.8]{$Y$};
	\draw[->] (1-0.25,0)--(0.25,0)[];
	\draw[] (.5,0)--(.5,-.2)[];
	\end{scope}
	
	\node at (0,.18)[scale=0.8]{$X$};
	\node at (0,1-.18)[scale=0.8]{$Y$};
	\node at (0.68-1,0.5)[scale=0.8]{$X$};
	\draw[->] (0,0.3)--(0,1-0.3)[];
	\draw[] (0,0.5)--(-.25,0.5)[];
	
	\begin{scope}[rotate=180,shift={(-1,-1)}]
	\node at (0,.18)[scale=0.8]{$X$};
	\node at (0,1-.18)[scale=0.8]{$Y$};
	\node at (0.68-1,0.5)[scale=0.8]{$X$};
	\draw[->] (0,0.3)--(0,1-0.3)[];
	\draw[] (0,0.5)--(-.25,0.5)[];
	\end{scope}
	
	\node at (2,0.5)[]{$=$};
	
	\begin{scope}[shift={(3,0)}]
	\draw (0,0)--(1,0)--(1,1)--(0,1)--cycle;
	
	\foreach \x in {0,1}{
		\foreach \y in {0,1}{
			\node at (\x,\y)[circle,fill=black,scale=0.5]{};
			\node at (0.5+\x-\y,-0.5+\x+\y)[circle,fill=black,scale=0.5]{};
			\node at (0.15+0.7*\x,0.15+0.7*\y)[scale=0.8]{$Z$};
			
			\draw[dashed] (-0.5+1.5*\x,\y)--(1.5*\x,\y);
			\draw[dashed] (\y,-0.5+1.5*\x)--(\y,1.5*\x);
		}
	}
	
	\node at (0.5,1.7-2)[scale=0.8]{$Y$};
	\node at (0.5,1.5-0.2)[scale=0.8]{$Y$};
	\node at (-0.3,0.5)[scale=0.8]{$X$};
	\node at (1.3,0.5)[scale=0.8]{$X$};
	
	\end{scope}
	
	\begin{scope}[shift={(0,-2.5)}]
	\node at (.5,.5)[circle,fill=black,scale=0.5]{};
	\foreach \x in {0,1}{
		\foreach \y in {0,1}{
			\node at (\x,\y)[circle,fill=black,scale=0.5]{};
		}
	}
	
	\node at (0.18,0)[scale=0.8]{$Y$};
	\node at (1-.18,0)[scale=0.8]{$X$};
	\node at (0.5,+.32)[scale=0.8]{$Y$};
	\draw[->] (1-0.25,0)--(0.25,0)[];
	\draw[] (.5,0)--(.5,+.2)[];
	
	\begin{scope}[rotate=180,shift={(-1,-1)}]
	\node at (0.18,0)[scale=0.8]{$Y$};
	\node at (1-.18,0)[scale=0.8]{$X$};
	\node at (0.5,.32)[scale=0.8]{$Y$};
	\draw[->] (1-0.25,0)--(0.25,0)[];
	\draw[] (.5,0)--(.5,.2)[];
	\end{scope}
	
	\node at (0,.18)[scale=0.8]{$Y$};
	\node at (0,1-.18)[scale=0.8]{$X$};
	\node at (-.68+1,0.5)[scale=0.8]{$X$};
	\draw[<-] (0,0.3)--(0,1-0.3)[];
	\draw[] (0,0.5)--(.25,0.5)[];
	
	\begin{scope}[rotate=180,shift={(-1,-1)}]
	\node at (0,.18)[scale=0.8]{$Y$};
	\node at (0,1-.18)[scale=0.8]{$X$};
	\node at (-.68+1,0.5)[scale=0.8]{$X$};
	\draw[<-] (0,0.3)--(0,1-0.3)[];
	\draw[] (0,0.5)--(.25,0.5)[];
	\end{scope}
	
	\node at (2,0.5)[]{$=$};
	
	\begin{scope}[shift={(3,0)}]
	\draw (0,0)--(1,0)--(1,1)--(0,1)--cycle;
	\node at (.5,.5)[circle,fill=black,scale=0.5]{};
	\foreach \x in {0,1}{
		\foreach \y in {0,1}{
			\node at (\x,\y)[circle,fill=black,scale=0.5]{};
			\node at (0.15+0.7*\x,0.15+0.7*\y)[scale=0.8]{$\mathbb{I}$};
			
			\draw[dashed] (-0.5+1.5*\x,\y)--(1.5*\x,\y);
			\draw[dashed] (\y,-0.5+1.5*\x)--(\y,1.5*\x);
		}
	}
	
	\node at (.65,.5)[scale=0.8]{$\mathbb{I}$};
	
	\end{scope}
	\end{scope}
	\end{tikzpicture}
	\caption{Loops of edge operators around faces on the square lattice. Note that loops around even faces are non-trivial Pauli operators and loops around odd faces cancel out to identity. Phases have been omitted.}\label{fig:2D stabilizer}
\end{figure}
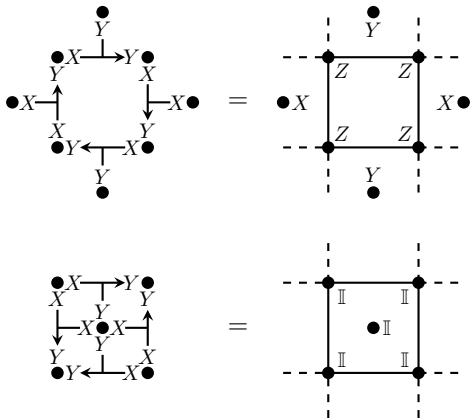 
 The cycle group $\mathcal{C}_G$ of a planar graph is generated by the cycles around faces. However the cycles around odd faces are mapped to the identity under $\sigma$. Thus $K$ is minimally generated by the cycles around odd faces, and so


\begin{equation}\label{eq:sq lattice ker}
D(K)=OF
\end{equation}
where $OF$ denotes the number of odd faces on the lattice. Since the cycles on any planar graph are minimally generated by all face cycles, the stabilizer group is then generated by the even face cycles (see \Cref{thm:planar kernel face cycles} in \Cref{sec:General planar codes}).

From \cref{eq:sq lattice ker,eq:disparity2} and the fact that $N-M=OF$ and $D(\mathcal{C}_G)=EF+OF$, where $EF$ is the number of even faces, we have that
\begin{equation}
\Delta = OF-EF
\end{equation}
for the encoding on a square lattice.

\section{Particle Species on Fermionic Encodings}\label{sec:species}

Different values of the disparity $\Delta$ cause an encoding to have different properties. In cases where $\Delta>0$ one can define what we call \textit{distinct particle species}. These find use in defining stabilizers to restrict excess space in these encodings and to detect errors. To illustrate this we will first demonstrate the effects of different disparities on the square lattice encoding and then present a result on particle species for general values of $\Delta$.

On a square lattice with a checkerboard pattern there are either equal numbers of even and odd faces or one more even/odd face so the disparity $\Delta$ may take the values -1, 0, 1. One may wonder what these different values imply, in particular the case where the encoding represents a larger space than the fermionic system it simulates.

In the $\Delta=-1$ case the codespace has dimension $2^{M-1}$, the even fermionic operators on $M$ modes provide a faithful representation of the set of linear operators on a space of this size so no other operators may be defined.
The fact that only even fermionic operators are represented implies that this is a fixed fermion parity sector. The operator $\prod_j V_j$ is equal to identity up to stabilizers in this case, identifying the parity sector as even. The odd parity sector may also be simulated by flipping the sign of one vertex operator (or indeed any odd number).

In the $\Delta=0$ case the codespace has dimension $2^{M}$. The even fermionic operators do not fully represent all linear operators over this space, the representation is only complete with the inclusion of odd (parity violating) fermionic operators. 
It suffices to define a representation of a single Majorana operator, since all other subsequent odd operators may be generated by applying edge and vertex operators. A single Majorana operator $\gamma_{j}$ satisfies the relations
\begin{equation}
\begin{gathered}
\{\gamma_i, \gamma_j \} = 0 \;,\quad\{V_i, \gamma_i \} = 0 \;,\quad [V_i, \gamma_j ]=0 \quad j\neq i \\
\{E_{ij}, \gamma_i \}=0 \;,\quad [E_{ij}, \gamma_k]=0 \quad k\neq j\quad k \neq i \\
\gamma_i^2=I \;,\quad E_{ij}\gamma_j = \gamma_i \;,\quad V_i\gamma_i=\ii \overline{\gamma}_i. \\
\end{gathered}
\end{equation}

Consider the corners of the lattice associated with an odd face. Such a corner $j$ either has arrows pointing into it or pointing away from it. If the arrows point into the corner then define the encoded Majorana operator $\tilde{\gamma}_j := X_j$, otherwise define it to be $\tilde{\gamma}_j:=Y_j$ (the encoded Majorana of the second kind, $\tilde{\overline{\gamma}}_j$, is given, up to a phase, by multiplying by a vertex operator). These single Majorana operators may be moved around the lattice by multiplication with strings of edge operators as shown in \cref{fig:SingleParticles}. From this it is clear that encoded single Majorana operators $\{\tilde{\gamma}_i,\tilde{\overline{\gamma}}_i\}$ take the form of strings of Paulis anchored at one end to their corner of origin, we denote this corner the \textit{injection point} of the Majorana operators. On the $\Delta=0$ square lattice, there are two equally suitable injection points from which single Majorana operators may be defined. If one is chosen to be the injection point for single Majorana operators then the single particle operators injected at the other corner will correspond to encoded Majorana hole operators $\{\tilde{h}_i,\tilde{\overline{h}}_i\}$, where $h_i:=\gamma_i \prod_j V_j$ and $\overline{h}_i:=\overline{\gamma}_i \prod_j V_j$.

\begin{figure}
	\centering
	\begin{tikzpicture}[scale=1.2,>=stealth,thick,faint/.style={opacity=0.2},line cap=round]
	\foreach \x in {0,1,2,3,4}{
		\foreach \y in {0,1,2,3,4}{
			\node at (\x,\y)[circle,fill=black,scale=0.5]{};
		}
	}
	
	\foreach \x in {0,1,2,3,4}{
		\foreach \y in {0,1,2,3}{
			\draw[<-,faint] (\x,{\y+0.5-pow(-1,\x)*0.3})--(\x,{\y+0.5+pow(-1,\x)*0.3});
			\draw[->,faint] ({\y+0.5-pow(-1,\x)*0.3},\x)--({\y+0.5+pow(-1,\x)*0.3},\x);
		}
	}
	
	\node at (2.5,1.5)[circle,fill=black,scale=0.5]{};
	\node at (2.5,3.5)[circle,fill=black,scale=0.5]{};
	\node at (0.5,1.5)[circle,fill=black,scale=0.5]{};
	\node at (0.5,3.5)[circle,fill=black,scale=0.5]{};
	\node at (1.5,2.5)[circle,fill=black,scale=0.5]{};
	\node at (1.5,.5)[circle,fill=black,scale=0.5]{};
	\node at (3.5,.5)[circle,fill=black,scale=0.5]{};
	\node at (3.5,2.5)[circle,fill=black,scale=0.5]{};
	
	\begin{scope}[shift={(0,1)}]
	\draw[cbBLUE,->] (0,2.8)--(0,2.2);
	\draw[cbBLUE] (0,2.5)--(0.3,2.5);
	\draw[cbBLUE,->] (0,1.8)--(0,1.2);
	\draw[cbBLUE,->] (0.2,1)--(0.8,1);
	\draw[cbBLUE] (0.5,1)--(0.5,.7);
	
	\node at (0,3.25)[scale=.8]{\textbf{\textit{Y}}};
	\node at (-.25,3)[cbBLUE,scale=.8]{$Z$};
	\node at (-.2,2)[cbBLUE,scale=.8]{$Z$};
	\node at (-.2,1)[cbBLUE,scale=.8]{$Z$};
	\node at (.5,2.7)[cbBLUE,scale=.8]{$X$};
	\node at (.3,.5)[cbBLUE,scale=.8]{$Y$};
	\node at (.8,1.2)[cbBLUE,scale=.8]{$Y$};
	\node at (0,3)[scale=.8,circle,draw=black,fill=none]{};
	\node at (1,1)[scale=.8,circle,draw=cbBLUE,fill=none]{};
	\end{scope}
	
	\begin{scope}[shift={(1,0)}]
	\draw[cbORANGE,<-] (2.8,0)--(2.2,0);
	\draw[cbORANGE] (2.5,0)--(2.5,.3);
	\draw[cbORANGE,->] (2,.2)--(2,.8);
	\draw[cbORANGE] (2,.5)--(2.3,.5);
	\draw[cbORANGE,->] (2,1.2)--(2,1.8);
	\draw[cbORANGE] (2,1.5)--(1.7,1.5);
	
	\node at (3,-.2)[scale=.8,cbORANGE]{$Z$};
	\node at (2,-.2)[scale=.8,cbORANGE]{$Z$};
	\node at (2.2,1.2)[scale=.8,cbORANGE]{$Z$};
	\node at (2.2,2.2)[scale=.8,cbORANGE]{$X$};
	\node at (2.5,.7)[scale=.8,cbORANGE]{$Z$};
	\node at (1.5,1.7)[scale=.8,cbORANGE]{$X$};
	\node at (2,2)[scale=.8,circle,draw=cbORANGE,fill=none]{};
	\end{scope}
	
	\end{tikzpicture}
	
	\caption{Single Majorana and hole operators at the circled vertices on a $\Delta=0$ lattice. \textbf{Bold black}: A single particle operator at an odd corner, we choose this to be the source of single Majorana operators $\gamma_j$. \textcolor{cbBLUE}{Blue}: A single Majorana operator in the bulk of the, transported from the top left odd corner by edge operators. \textcolor{cbORANGE}{Orange}: A Majorana hole operator in the bulk of the, transported from the other odd corner by edge operators.}
	\label{fig:SingleParticles}
\end{figure}
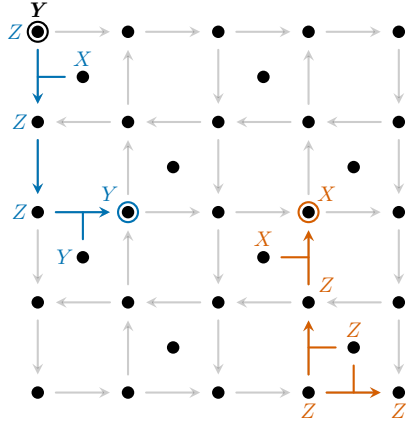

The encoded single Majorana operators $\{\tilde{\gamma}_i,\tilde{\overline{\gamma}}_i\}$ and $\{\tilde{h}_i,\tilde{\overline{h}}_i\}$ are examples of what we refer to as distinct Particle species.
\begin{definition}[Particle Species]\label{def:Maj Species}
	Given a fermionic encoding, a set of Pauli operators $\mathcal{M}=\{\mathcal{M}_i,\overline{\mathcal{M}}_i\}$, indexed over vertices is called a Particle Species if it satisfies the algebraic relations of single Majorana operators with respect to the encoded fermionic operators, i.e.
	\begin{equation}
	\begin{gathered}
	\{\mathcal{M}_i, \mathcal{M}_j \} = 0 \;,\quad\{\tilde{V}_i, \mathcal{M}_i \} = 0 \;,\quad [\tilde{V}_i, \mathcal{M}_j ]=0 \quad j\neq i \\
	\{\tilde{E}_{ij}, \mathcal{M}_i \}=0 \;,\quad [\tilde{E}_{ij}, \mathcal{M}_k]=0 \quad k\neq j\quad k \neq i \\
	\mathcal{M}_i^2=I \;,\quad \tilde{E}_{ij}\mathcal{M}_j = \mathcal{M}_i \;,\quad \tilde{V}_i\mathcal{M}_i=\ii \overline{\mathcal{M}}_i. \\
	\end{gathered}
	\end{equation}
\end{definition}
\begin{definition}[Distinct Particle Species]\label{def:Distinct Species}
	Particle Species $\mathcal{M}$ and $\mathcal{M}^\prime$ are said to be distinct if
	\begin{equation}
	\{\mathcal{M}_i,\mathcal{M}^\prime_i\}=0 \;,\quad [\mathcal{M}_i,\mathcal{M}^\prime_j]=0 \;,\quad j\neq i.
	\end{equation}
\end{definition}


%

In the $\Delta=1$ case the codespace is larger than the fermionic system with dimension $2^{M+1}$, we interpret this as simulating the full fermionic system and an extra logical qubit degree of freedom, i.e. $\mathcal{F}\otimes \mathbb{C}^2$. As with $\Delta=0$, single particle operators may be ``injected'' at the odd corners but instead there are four choices, meaning that four distinct particle species may be simultaneously defined.  One may always choose one of these species to be the encoded single Majorana operators of the fermionic system $\{\tilde{\gamma}_i,\tilde{\overline{\gamma}}_i\}$. The other three Majorana species correspond to $h_i\otimes X$, $h_i\otimes Y$, $h_i\otimes Z$ where the Paulis act on the extra logical qubit degree of freedom. The Paulis may be assigned to species arbitrarily so long as they form an anticommuting set. These Pauli operators can be isolated by fusing pairs of these Majorana species on the same vertex such that the fermionic part cancels to identity. As shown in \cref{fig:nonlocal XY}, the Paulis have a non-local string like form across the lattice. Having isolated these non-local Pauli operators, one of them may be taken as a stabilizer such that only the fermionic space is encoded.

\begin{figure}[ht]
	\centering
	\begin{tikzpicture}[scale=1.2,>=stealth,thick]
	\foreach \x in {0,1,2,3}{
		\foreach \y in {0,1,2,3}{
			\node at (\x,\y)[circle,fill=black,scale=0.5]{};
		}
	}

	\foreach \x in {0,1,2}{
		\foreach \y in {0,2}{
			\draw[<-] (0.2+\x,\y)--(0.8+\x,\y);
			\draw[<-] (\y,0.2+\x)--(\y,0.8+\x);
			\draw[->] (0.2+\x,1+\y)--(0.8+\x,1+\y);
			\draw[->] (1+\y,0.2+\x)--(1+\y,0.8+\x);
		}
	}
	
	\node at (2.5,0.5)[circle,fill=black,scale=0.5]{};
	\node at (2.5,2.5)[circle,fill=black,scale=0.5]{};
	\node at (0.5,0.5)[circle,fill=black,scale=0.5]{};
	\node at (0.5,2.5)[circle,fill=black,scale=0.5]{};
	\node at (1.5,1.5)[circle,fill=black,scale=0.5]{};

	\foreach \y in {0,1,2,3}{
		\node at (3+.2,\y+.2)[scale=.8]{$Z$};
	}
	
	\node at (2.5+.2,.5+.2)[scale=.8]{$X$};
	\node at (2.5+.2,2.5+.2)[scale=.8]{$X$};
	
	\node at (1.5,-0.4)[]{$\tilde{X}$};
	
	\begin{scope}[shift={(4.5,0)}]
	\foreach \x in {0,1,2,3}{
		\foreach \y in {0,1,2,3}{
			\node at (\x,\y)[circle,fill=black,scale=0.5]{};
		}
	}
	
	\foreach \x in {0,1,2}{
		\foreach \y in {0,2}{
			\draw[<-] (0.2+\x,\y)--(0.8+\x,\y);
			\draw[<-] (\y,0.2+\x)--(\y,0.8+\x);
			\draw[->] (0.2+\x,1+\y)--(0.8+\x,1+\y);
			\draw[->] (1+\y,0.2+\x)--(1+\y,0.8+\x);
		}
	}
	
	\node at (2.5,0.5)[circle,fill=black,scale=0.5]{};
	\node at (2.5,2.5)[circle,fill=black,scale=0.5]{};
	\node at (0.5,0.5)[circle,fill=black,scale=0.5]{};
	\node at (0.5,2.5)[circle,fill=black,scale=0.5]{};
	\node at (1.5,1.5)[circle,fill=black,scale=0.5]{};
	
	\foreach \x in {0,1,2,3}{
		\node at (\x+.2,0.2)[scale=.8]{$Z$};
	}
	
	\node at (2.5+.2,.5+.2)[scale=.8]{$Y$};
	\node at (.5+.2,.5+.2)[scale=.8]{$Y$};
	
	\node at (1.5,-0.4)[]{$\tilde{Y}$};
	\end{scope}

	\end{tikzpicture}
	\caption{Non-local $\tilde{X}$ and $\tilde{Y}$ operators formed by fusing different particle species. The pictured operators are formed by fusing particles from the right two corners and the bottom two corners respectively. The string operator may be deformed by applying stabilizers but will remain anchored at these corners, meaning they will always commute with single $X$ or $Y$ errors that may occur there.}\label{fig:nonlocal XY}
\end{figure}
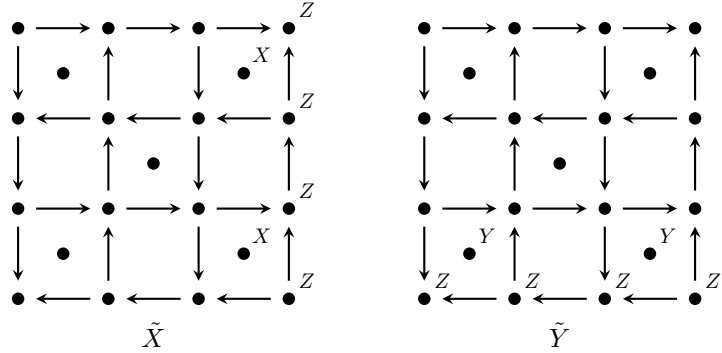

It may be desirable to restrict this excess Hilbert space via this stabilizer as it reduces the number of logical operators that act nontrivially on the fermionic space and therefore reduces the number of possible errors. In particular, restricting via one of these stabilizers on the $\Delta=1$ square lattice means that single $X$ and $Y$ errors may only occur on two of the corners rather than 4.

Other encodings introduced in this paper have $\Delta>1$. This greater disparity precipitates more particle species.

%

\begin{theorem}\label{thm:Delta Species relation}
	For any fermionic encoding, the number $m$ of distinct Majorana species is bounded from above by
	\begin{equation}\label{eq:Delta Species relation}
	m \leq 2\Delta+2,
	\end{equation}
\end{theorem}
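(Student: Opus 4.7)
The plan is to reformulate the claim as a bound on the size of a pairwise anticommuting family in a symplectic $\mathbb{F}_2$-vector space, and then invoke a standard rank count. The central object is the collection of species operators restricted to a single fixed vertex, say vertex $1$: $\mathcal{M}_1^{(1)},\dots,\mathcal{M}_1^{(m)}$.

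First I set up the ambient symplectic space. A short check using the relations in Definition~\ref{def:Maj Species} shows that each $\mathcal{M}_i^{(\alpha)}$ commutes with every product of encoded edge operators around a closed loop (since such a loop hits every vertex an even number of times), hence in particular with every stabilizer; so each species operator represents a class in the logical Pauli group $P_{\mathrm{log}}$. Modulo phases, $P_{\mathrm{log}}$ is a symplectic $\mathbb{F}_2$-vector space of dimension $2(M+\Delta)$, with form given by (anti)commutation. The encoded even Majorana algebra $\widetilde{M_E}$ has $\mathbb{F}_2$-dimension $2M-1$ (the number of independent even-parity Majorana monomials on $M$ modes), so its symplectic complement $C:=\widetilde{M_E}^\perp$ has dimension $2\Delta+1$. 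The only non-trivial central element of $M_E$ is the total parity $P=\prod_j V_j$ (the Clifford pseudoscalar), so $C\cap\widetilde{M_E}=\langle P\rangle$ and the form on $C$ has one-dimensional radical $\langle P\rangle$; the quotient $C/\langle P\rangle$ is therefore a non-degenerate symplectic space of dimension $2\Delta$.

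Second, I associate to the family of species a pairwise anticommuting family in $C$. Set $T_\alpha:=\mathcal{M}_1^{(1)}\mathcal{M}_1^{(\alpha)}$ for $\alpha=2,\dots,m$. Each $T_\alpha$ lies in $C$: both factors anticommute with $\tilde V_1$ and with each $\tilde E_{1j}$ incident at vertex $1$, and commute with every other encoded generator of $\widetilde{M_E}$, so the two minus signs cancel. Using $\{\mathcal{M}_1^{(\alpha)},\mathcal{M}_1^{(\beta)}\}=0$ from Definition~\ref{def:Distinct Species}, a direct computation yields $\{T_\alpha,T_\beta\}=0$ for $\alpha\neq\beta$, giving $m-1$ pairwise anticommuting elements of $C$. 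These survive the quotient by $\langle P\rangle$: if $T_\alpha T_\beta\in\langle P\rangle$ it would commute with every element of $C$, contradicting the fact that $T_\alpha\in C$ anticommutes with $T_\beta$. A parallel argument, together with the observation that a hypothetical species of the form $\mathcal{M}^{(1)}\cdot P$ would violate distinctness at pairs of distinct sites (since $P$ anticommutes with each $\mathcal{M}_i^{(1)}$), ensures each individual $T_\alpha$ is nonzero in $C/\langle P\rangle$.

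Finally, I apply the standard bound that a non-degenerate symplectic $\mathbb{F}_2$-space of dimension $2n$ contains at most $2n+1$ pairwise anticommuting nonzero elements; this follows from a rank count on the Gram matrix $J_k-I_k$ (rank $k$ if $k$ is even, rank $k-1$ if $k$ is odd) combined with evenness of symplectic rank. With $n=\Delta$ this yields $m-1\le 2\Delta+1$, i.e., $m\le 2\Delta+2$. The most delicate step is the bookkeeping around the radical $\langle P\rangle$: I must verify both that $P$ is the unique source of degeneracy of the symplectic form restricted to $C$, giving $\dim C=2\Delta+1$ exactly with radical of dimension exactly $1$, and that the $T_\alpha$ remain pairwise distinct and nonzero after quotienting by $\langle P\rangle$. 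Once those are in hand, the rest reduces to routine symplectic linear algebra over $\mathbb{F}_2$.
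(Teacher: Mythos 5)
Your core strategy coincides with the paper's: fix a vertex, fuse the reference species with each of the others, observe that the resulting $m-1$ operators commute with the encoded even algebra and pairwise anticommute, and then apply the $2n+1$ bound on pairwise anticommuting families. The paper packages the last step differently: it multiplies the fusion $\mathcal{M}^{(1)}_i\mathcal{M}^{(k)}_i$ by the total parity $\prod_j\tilde V_j$, so that the fused operators commute with a full (odd as well as even) Majorana algebra and can be read off as anticommuting Paulis supported on the identified $(\mathbb{C}^2)^{\otimes\Delta}$ tensor factor, then cites Lemma 4.5 of Sarkar--van den Berg. Your symplectic-complement-plus-radical bookkeeping is a legitimate substitute for that tensor-factor identification, and it avoids having to argue that the excess degrees of freedom literally factorize as qubits; the price is that you must track the one-dimensional radical by hand, and that is where your argument slips.

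Specifically, your claim that a species of the form $\mathcal{M}^{(1)}\cdot P$, with $P=\prod_j\tilde V_j$, ``would violate distinctness at pairs of distinct sites'' is false. Setting $\mathcal{M}^{(\alpha)}_i=\mathcal{M}^{(1)}_iP$: for $i\neq j$ the sign from $\{\mathcal{M}^{(1)}_i,\mathcal{M}^{(1)}_j\}=0$ cancels against the sign from $P$ anticommuting with $\mathcal{M}^{(1)}_j$, so $[\mathcal{M}^{(\alpha)}_i,\mathcal{M}^{(1)}_j]=0$ exactly as \cref{def:Distinct Species} requires, while on a common site the two operators anticommute. This is precisely the hole species $h_i=\gamma_i\prod_jV_j$ that the paper constructs explicitly, so it cannot be excluded, and the corresponding $T_\alpha$ really can equal $P$ and vanish in $C/\langle P\rangle$. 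The theorem survives without this claim: if $m\geq 3$, each $T_\alpha$ anticommutes with some $T_\beta\in C$, whereas every element of the radical $\langle P\rangle$ commutes with all of $C$, so no $T_\alpha$ lies in $\langle P\rangle$ and your count of $m-1$ nonzero pairwise anticommuting classes goes through; if $m\leq 2$ the bound is immediate once $\Delta\geq 0$, which holds whenever any species exists (a species anticommutes with the image of $P$, so $P$ cannot be represented by a phase). Relatedly, you should state that the dimension count $\dim\widetilde{M_E}=2M-1$ presupposes faithfulness modulo phases, which fails when $\Delta=-1$; in that case $C$ is trivial and $m=0$, consistent with the bound. The paper likewise disposes of $\Delta\leq 0$ separately before running the main argument.
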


\begin{proof}

	The earlier arguments for the $\Delta=-1,0$ cases on the square lattice apply generally and the number of particles in each case is consistent with \cref{eq:Delta Species relation}.
	
	Consider the $\Delta\geq 1$ case, in which the encoding represents a Hilbert space of dimension $2^{M+\Delta}$ which may be taken to be a fermionic system of $M$ modes composed with $\Delta$ qubits, $\mathcal{F}\otimes(\mathbb{C}^2)^{\otimes\Delta}$.  
	
	Consider a set of $m$ different simultaneously defined species $\{\mathcal{M}^{(k)}_i\}_{k=1}^{m}$. Any one of these may be chosen to be the encoded single Majoranas, choose $\mathcal{M}^{(1)}=\{\tilde{\gamma}_i,\tilde{\overline{\gamma}}_i\}$. For any given vertex, consider the set of operators 
	\begin{equation}\label{eq:particle fusion}
	\{P^i_k\}_{k=2}^{m}=\{\ii\;\mathcal{M}^{(1)}_i\mathcal{M}^{(k)}_i\prod_j\tilde{V}_j\}_{k=2}^{m}=\{-\ii\;\tilde{h}_i\;\mathcal{M}^{(k)}_i\}_{k=2}^{m}.
	\end{equation}
	Note that for any two vertices $i$ and $j$, $P^i_k$ and $P^j_k$ are related by a loop of edge operators and are therefore in fact the same operator, accordingly the vertex index is now redundant and will be dropped. 
	
	By the relations in \cref{def:Maj Species,def:Distinct Species}, each operator $P_k$ commutes with every fermionic operator, including our chosen single Majoranas $\mathcal{M}^{(1)}$ 
	and so each element only acts non-trivially on the excess $(\mathbb{C}^2)^{\otimes\Delta}$ space. All $P_k$ are Pauli operators as they are the product of Pauli operators (all single particle operators must be Paulis otherwise they could not be mapped to Majorana operators). Finally all $P_k$ are mutually anticommuting.
	
	Combining the above properties we see that $\{P_k\}_{k=2}^{m}$ is a set of mutually anticommuting Paulis acting only on the $(\mathbb{C}^2)^{\otimes\Delta}$ with an element for each defined single particle species except $\mathcal{M}^{(1)}$. The maximum size of such a set is $2\Delta+1$ by Lemma 4.5 from \cite{sarkar2019sets} and by Corollary 4.6 from the same paper, a set of this size exists so there can therefore be a maximum of $2\Delta+1$ $P_k$ operators and the maximum number of single particle species is $2\Delta+2$.
	
%
\end{proof}

From the above proof we see that particle species not chosen to be encoded Majoranas constitute Pauli operators on the excess space which can be isolated by fusion of different species on the same vertex. Thus in the case where a maximal set of these particle species can be identified 
they may be used to define stabilizers which restrict the code space to one with $\Delta =0$. 

\section{Generalizing the Compact Encoding}\label{sec:General planar codes}

The essential form of the compact encoding is as follows. A graph is supplied, and every edge of the graph is assigned an orientation. A vertex qubit is associated with every vertex of the graph. The vertex operators are defined to be $Z$ operators on their associated vertex qubits, and the edge operators are tentatively defined to act with an $X$ operator on the vertex which the edge is pointing away from, and a $Y$ operator on the vertex which the edge is pointing towards. Finally, auxilliary qubits are introduced, and the edge operators are made to act additionally on these auxilliary qubits in order to resolve any instances where pairs of edge operators sharing a vertex do not yet commute. The essential feature which makes the compact encoding compact is that the vertex operators are weight $1$, and the tentative edge operators are weight $2$, with an increase in weight bounded by the number of remaining anti-commutation relations needing to be resolved. An important additional feature which appears in the particular implementation of this encoding procedure is that the auxilliary qubits may be used to resolve many anti-commutation relations, thus significantly reducing the number of auxilliary qubits required.

In the square lattice encoding, the auxilliary qubits are confined to faces adjacent to the edges, and edges are only permitted to act on those adjacent auxilliary qubits. However in principle there is no reason why this must be the case. Auxilliary qubits could be used to resolve anti-commutation relations between edges not sharing a face. Furthermore, in the square lattice encoding, each face is associated with at most one auxilliary qubit. This is also not essential. However, one valuable consequence of imposing these kinds of constraints is that it makes the design and analysis of the encoding simpler. This motivates introducing a particular subclass of the compact encoding which lends itself well to analysis, and to which the square and hexagonal cases belong. This subclass aims to preserve the geometric locality and extremely low Pauli weight of the edge and vertex operators.


\begin{definition}[Weight 3 Planar Encoding]
\label{def:planar W3 encoding}
	A fermionic encoding on a planar fermionic graph where:
	\begin{itemize}
		\item Every vertex has one vertex qubit assigned;
		\item Every auxiliary qubit is associated with a unique face;
		\item Every face is associated with at most one auxiliary qubit;
		\item Every vertex operator is a single Pauli $Z$  on the assigned vertex qubit;
		\item Every edge operator is composed of a Pauli $X$ or $Y$ on each incident vertex qubit and a Pauli $X$, $Y$ or $Z$ on at most one face qubit from an adjacent face.
	\end{itemize}
\end{definition}


The following theorem allows the disparity of codes that fit \cref{def:planar W3 encoding} to be easily determined by counting faces. We note here that for conceptual simplicity we exclude the ``unbounded'' face surrounding the graph when considering faces, but it may be included with minor modifications.

\begin{theorem}\label{thm:planar kernel face cycles}
	Given a Weight 3 planar encoding, $K$ (as defined in eq. \ref{eq:kernel}) is minimally generated by face cycles. 
\end{theorem}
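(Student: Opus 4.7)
The plan is to exploit the fact that for a planar graph the cycle space $\mathcal{C}_G$ admits the face cycles $\{c_f\}_{f\in F}$ of the bounded faces as a $\mathbb{Z}_2$-basis. This follows from Proposition 1 together with Euler's formula: the number of bounded faces equals $|\mathbf{E}|-|\mathbf{V}|+1 = D(\mathcal{C}_G)$, and face cycles are known to span $\mathcal{C}_G$. Hence every $c \in \mathcal{C}_G$ has a unique decomposition $c = \prod_{f \in F'} c_f$ for some $F' \subseteq F$.

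Under this decomposition the theorem reduces to the following claim: whenever $\sigma(c) = I$, we must have $\sigma(c_f) = I$ for every $f \in F'$. Minimality of the generating set is then automatic since any subset of the basis $\{c_f\}$ is linearly independent in $\mathcal{C}_G$. Equivalently, I need to show that $\{\sigma(c_f) : \sigma(c_f) \neq I\}$ is linearly independent as a set of Pauli operators (up to phase) in the qubit Pauli group.

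To analyze $\sigma(c_f)$, note that in a Weight 3 planar encoding its support is confined to the vertex qubits of $f$, the face qubit $q_f$ (if present), and the face qubits of faces sharing an edge with $f$. At each vertex $v$ of $f$, the two edges of $c_f$ meeting at $v$ must anti-commute as edge operators; a short case analysis of the $X/Y$ vertex Paulis versus any shared face-qubit Paulis shows the resulting action on the vertex qubit is a factor $Z_v$ unless both edges act on a common face qubit with distinct Paulis. Since consecutive edges around $f$ share only the face $f$ itself, the only face qubit they can both touch is $q_f$. Consequently, whenever $f$ has no face qubit, the vertex action of $\sigma(c_f)$ is the full product $\prod_{v \in f} Z_v$, which is nontrivial, so every such face automatically lies in $\{f : \sigma(c_f) \neq I\}$.

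The main obstacle is elevating these local observations into a global linear-independence statement. The strategy is to assign each face $f$ with $\sigma(c_f) \neq I$ a distinguishing signature in its image -- the full vertex support of $f$ for faces without a face qubit, and either the $q_f$-action or a specific vertex Pauli for faces with a face qubit whose cycle nevertheless fails to vanish -- and then rule out cancellations among sums of these signatures. I expect the delicate case to involve cancellations among face cycles whose images overlap heavily on shared vertex qubits or shared face qubits of neighboring faces; resolving this will likely require a careful planar or inductive argument on the graph structure, exploiting that each face qubit is controlled by only one face and each vertex by a bounded set of incident faces.
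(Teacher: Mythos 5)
Your setup is sound: reducing the theorem to the statement that the nontrivial images $\{\sigma(c_f) : \sigma(c_f)\neq I\}$ of the face cycles form an independent set in the Pauli group is a correct reformulation given the face-cycle basis, and your local analysis of how two consecutive edges of a face act on their shared vertex qubit (a $Z_v$ factor unless they anticommute through a common face qubit) is accurate. But the proposal stops exactly where the theorem lives. The claim that no nonempty product of non-kernel face-cycle images equals the identity \emph{is} the content of the result, and you leave it as a plan (``will likely require a careful planar or inductive argument'') rather than an argument. The difficulty you flag is real: the images of distinct face cycles overlap heavily --- a vertex qubit picks up a $Z$ from every incident face whose cycle contributes one, and a face qubit is acted on by the cycles of all neighbouring faces --- so cancellation in a large product is not excluded by the per-face ``signatures'' you propose, and nothing in the proposal identifies a qubit on which some chosen face acts alone.

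The paper's proof avoids the global independence statement altogether. It takes an arbitrary $c\in K$ and a single vertex on it, and exploits the requirement that the edge operators in $\sigma(c)$ cancel on that vertex qubit: since no edge applies $Z$ to a vertex qubit, two edges of $c$ must act there with the same Pauli, hence they anticommute through a shared face qubit, leaving a residual Pauli on that face qubit which can only be cancelled by further edges of $c$ around that same face; chasing this forces an entire $3$- or $4$-sided face of $c$ to multiply to the identity, i.e.\ exhibits a face cycle lying in $K$. Stripping off such face cycles and iterating decomposes $c$ into kernel face cycles directly. To close your version you would essentially have to prove this same local cancellation lemma and then convert it into your independence statement, at which point the detour through signatures buys you nothing; as it stands, the key step is missing.
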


\begin{proof}
	The full set of face cycles on a planar graph form an independent basis of the cycle space so any subset of these will also be independent. It suffices now to prove that any element of $K$ can be reduced to identity by application of face cycles in the kernel.

	
For a planar graph, every cycle has a unique decomposition into a product of face cycles. First we argue that for a cycle in $K$, at least one of the face cycles in its unique decomposition must be in $K$. Consider a cycle $c \in K$, for any vertex in the cycle, the edge operators in $\sigma(c)$ must cancel to identity on the corresponding vertex qubit. Let one of the edge operators act on this qubit with $X$, label this operator $\tilde{E}_1$. Since edge operators may not apply $Z$ to vertex qubits the only way to cancel this is with another edge operator that applies $X$, label this $\tilde{E}_2$. These edge operators must anticommute so they must share a face and each act on the qubit with different Paulis ($X$ and $Y$). The product $E_1 E_2$ then applies a $Z$ to this face qubit which can only be cancelled by other edge operators around the associated face.
	
	The $Z$ may be cancelled by a single edge operator acting with a $Z$ or two acting with $X$ and $Y$. Let $E_Z$ be an edge operator applying $Z$ to the face qubit. $E_Z$ must anticommute with $E_1$ and $E_2$ so it must share vertex qubits with both, forming a 3-edge loop (to commute with either of them it would also need to share vertex qubits but this would require it to anticommute anyway), furthermore it would need to act with the same Paulis as $E_1$ and $E_2$ on their respective shared vertex qubits. The product $\ii^3 E_1E_2E_Z$ (with the appropriate orientation) must be identity, meaning the associated face cycle is in $K$ (see \cref{fig:planar kernel proof}).
	
	For the other case, let $E_X$ and $E_Y$ be edge operators around the face that act with $X$ and $Y$ on the face qubit respectively. For similar reasons to above, $E_Y$ ($E_X$) must anticommute with $E_1$ ($E_2$) meaning they will share a vertex qubit, as well as that, $E_X$ and $E_Y$ must anticommute and share a vertex themselves with the edges forming a 4-edged loop. As above, the edges that share vertex qubits will act on them with the same Paulis and the product $\ii^4E_1E_2E_XE_Z$ (with appropriate orientation) will be identity and the associated face cycle is in $K$ (see \cref{fig:planar kernel proof}).
	
Consider a cycle $c \in K$, which admits a decomposition into a set of face cycles $F \not \subseteq K$
\begin{equation}
c=\prod_{f \in F} f.
\end{equation} We may construct a new cycle $c'$ by removing  all cycles in $F \cap K$
\begin{equation}
c' = \left(\prod_{f \in F \cap K} f \right)c
\end{equation}
However $c'$ must be in $K$, which is a contradiction since it includes no face cycles in $K$.

\end{proof}
%


\begin{figure}
	\centering
	\begin{tikzpicture}[thick, line cap = round]
	\node at (0,0)[circle,fill=black,scale=.4]{};
	\node at (-.18,0)[scale=.6]{$X$};
	\draw (-.27,0)--(-.7,0);
	\draw[dotted] (-.7,0)--(-1,0);
	\node at (-.5,.2)[scale=.7]{$E_1$};
	
	\draw[line cap=butt,thin,-implies,double equal sign distance] (0.2,-.5) -- (.8,-.5);
	
	\begin{scope}[shift={(2,0)}]
	\node at (-.5,.2)[scale=.7]{$E_1$};
	\node at (.2,-.5)[scale=.7]{$E_2$};
	\node at (0,0)[circle,fill=black,scale=.4]{};
	\node at (-.18,0)[scale=.6]{$X$};
	\draw (-.27,0)--(-.7,0);
	\draw[dotted] (-.7,0)--(-1,0);
	\node at (0,-.2)[scale=.6]{$X$};
	\draw (0,-.34)--(0,-.7);
	\draw[dotted] (0,-.7)--(0,-1);
	\node at (-.5,-.5)[circle,fill=black,scale=.4]{};
	\node at (-.5+.18,-.5)[scale=.6]{$Y$};
	\node at (-.5,-.5+.2)[scale=.6]{$X$};
	\draw (-.5,0)--(-.5,-.16);
	\draw (0,-.5)--(-.23,-.5);
	
	\begin{scope}[shift={(0.3,0)}]
	\draw[line cap=butt,thin,-implies,double equal sign distance] (0.2,-.5) .. controls (1.2,-.5) and (.2,-1.5) .. (1.2,-1.5);
	\draw[line cap=butt,thin,-implies,double equal sign distance] (0.2,-.5) .. controls (1.2,-.5) and (.2,.5) .. (1.2,.5);
	\end{scope}
	
	\begin{scope}[shift={(3.1,1)}]
	\node at (-.5,.2)[scale=.7]{$E_1$};
	\node at (.2,-.5)[scale=.7]{$E_2$};
	\node at (0,0)[circle,fill=black,scale=.4]{};
	\node at (-1,0)[circle,fill=black,scale=.4]{};
	\node at (0,-1)[circle,fill=black,scale=.4]{};
	\node at (-.18,0)[scale=.6]{$X$};
	\node at (-1+.18,0)[scale=.6]{$X$};
	\node at (-.18,-1)[scale=.6]{$X$};
	\draw (-.27,0)--(-.73,0);
	\node at (0,-.2)[scale=.6]{$X$};
	\node at (0,-1+.2)[scale=.6]{$X$};
	\node at (-1,-.2)[scale=.6]{$X$};
	\draw (0,-.34)--(0,-.66);
	\node at (-.5,-.5)[circle,fill=black,scale=.4]{};
	\node at (-.5+.18,-.5)[scale=.6]{$Y$};
	\node at (-.5,-.5+.2)[scale=.6]{$X$};
	\draw (-.5,0)--(-.5,-.16);
	\draw (0,-.5)--(-.23,-.5);
	\node at (-.5-.15,-.5-.15)[scale=.6]{$Z$};
	\draw (-1,-.34) .. controls (-.9,-.9) .. (-.27,-1);
	\draw (-.83,-.83)--(-.75,-.75);
	\node at (-1,-1)[scale=.7]{$E_Z$};
	\end{scope}
	
	\begin{scope}[shift={(3.1,-1)}]
	\node at (-.5,.2)[scale=.7]{$E_1$};
	\node at (.2,-.5)[scale=.7]{$E_2$};
	\node at (-.5,-1.2)[scale=.7]{$E_X$};
	\node at (-1.2,-.5)[scale=.7]{$E_Y$};
	\node at (0,0)[circle,fill=black,scale=.4]{};
	\node at (-1,0)[circle,fill=black,scale=.4]{};
	\node at (0,-1)[circle,fill=black,scale=.4]{};
	\node at (-1,-1)[circle,fill=black,scale=.4]{};
	\node at (-.18,0)[scale=.6]{$X$};
	\node at (-1+.18,0)[scale=.6]{$X$};
	\node at (-.18,-1)[scale=.6]{$X$};
	\node at (-1+.18,-1)[scale=.6]{$X$};
	\draw (-.27,0)--(-.73,0);
	\draw (-.27,-1)--(-.73,-1);
	\node at (0,-.2)[scale=.6]{$X$};
	\node at (0,-1+.2)[scale=.6]{$X$};
	\node at (-1,-.2)[scale=.6]{$X$};
	\node at (-1,-1+.2)[scale=.6]{$X$};
	\draw (0,-.34)--(0,-.66);
	\draw (-1,-.34)--(-1,-.66);
	\node at (-.5,-.5)[circle,fill=black,scale=.4]{};
	\node at (-.5+.18,-.5)[scale=.6]{$Y$};
	\node at (-.5,-.5+.2)[scale=.6]{$X$};
	\node at (-.5-.18,-.5)[scale=.6]{$Y$};
	\node at (-.5,-.5-.2)[scale=.6]{$X$};
	\draw (-.5,0)--(-.5,-.16);
	\draw (0,-.5)--(-.23,-.5);
	\draw (-.5,-1)--(-.5,-1+.16);
	\draw (-1,-.5)--(-1+.23,-.5);
	\end{scope}
	
	\end{scope}

	\end{tikzpicture}
	\caption{Graphical representation of the proof of \cref{thm:planar kernel face cycles}. }\label{fig:planar kernel proof}
\end{figure}

The fact that face qubits may only be cancelled to identity in two ways yields the following corollary.
\begin{corollary}
The	face cycles in $K$ of a Weight 3 planar encoding may only be around 3 sided or 4 sided faces.
\end{corollary}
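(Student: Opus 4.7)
The plan is to extract the face-counting statement that is essentially already present in the proof of \cref{thm:planar kernel face cycles}, and to make explicit why the argument there does not leave room for any face size other than $3$ or $4$.

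First I would set up the situation: let $c_f \in K$ be a face cycle around some face $f$, so $\sigma(c_f) = \mathbb{I}$. This forces the edge operators around $f$ to cancel on every vertex qubit adjacent to $f$ and on the face qubit of $f$ (if one is assigned). Pick any vertex $v$ on the boundary of $f$ and let $E_1, E_2$ be the two edges of $f$ incident to $v$. By \cref{def:planar W3 encoding} each of $E_1, E_2$ acts on $v$ with $X$ or $Y$ (never $Z$), so cancellation at $v$ forces one of them to act with $X$ and the other with $Y$. In exact parallel with the theorem's proof, this means the product $E_1 E_2$ deposits a Pauli $Z$ on the face qubit of $f$.

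Next I would enumerate how this $Z$ can be killed by the remaining edge operators around $f$. Since every edge operator acts on the face qubit of $f$ with at most one of $\{X, Y, Z\}$, the remaining operators must contribute either a single $Z$, or the pair $\{X, Y\}$ in some order. The case analysis in the theorem's proof shows exactly what each alternative forces: a single $E_Z$ must anti-commute with both $E_1$ and $E_2$, hence share a vertex qubit with each, which closes $f$ into a triangle; while a pair $E_X, E_Y$ must anti-commute with $E_2$ and $E_1$ respectively and with each other, which closes $f$ into a quadrilateral. In both configurations the boundary of $f$ has been completely specified, and the cycle cannot be extended further without reintroducing uncancelled Paulis.

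The only real point to check, and the mildly subtle step, is ruling out faces of size $\ge 5$. Here I would argue that any additional edge operator around $f$ would have to touch the face qubit with one of $\{I, X, Y, Z\}$; adding an $I$ contribution cannot affect the face qubit but still contributes Pauli factors at two new vertex qubits that must then cancel on their own, which contradicts the already-fixed $X/Y$ pattern propagating around $f$, while adding another non-identity Pauli on the face qubit forces a repeat among $\{X, Y, Z\}$ and leaves a residual $Z$, $Y$, or $X$ on the face qubit that can no longer be cancelled without once more growing the boundary and iterating the contradiction. Hence every face cycle in $K$ lies around a face with exactly $3$ or $4$ sides.
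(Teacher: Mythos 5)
Your overall plan is the same as the paper's: the corollary is read off from the case analysis inside the proof of \cref{thm:planar kernel face cycles}, where the Pauli deposited on the face qubit by $E_1E_2$ can only be cancelled by a single $E_Z$ or by an $E_X,E_Y$ pair, closing the face after $3$ or $4$ edges. However, your justification for why a $Z$ lands on the face qubit in the first place inverts the key step and is incorrect as written. Cancellation of $\sigma(c_f)$ on the vertex qubit at $v$ forces $E_1$ and $E_2$ to act there with the \emph{same} Pauli (both $X$ or both $Y$), since $XY\propto Z\neq I$; your claim that cancellation at $v$ forces one $X$ and one $Y$ would instead leave an uncancelled $Z$ on the vertex qubit. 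The $X$/$Y$ mismatch belongs on the \emph{face} qubit, and it is forced by an argument you never invoke: $E_1$ and $E_2$ share a vertex, so by the defining relations of $M_G$ they must anticommute; since they commute on the shared vertex qubit and share no other vertex qubit, they can only anticommute by both acting on the face qubit of $f$ with two distinct Paulis, whose product is (w.l.o.g.) $Z$. Without this anticommutation step your sentence ``this means the product $E_1E_2$ deposits a Pauli $Z$ on the face qubit'' is a non sequitur, and the case analysis that follows has no foundation.

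Your final paragraph ruling out faces with $\ge 5$ sides is in the right spirit but is also best tightened with the same commutation bookkeeping rather than an informal ``iterating the contradiction.'' Every edge of $f$ must act nontrivially on the face qubit of $f$, because it must anticommute with each of its two neighbours in the face cycle while agreeing with them on the shared vertex qubits; consecutive edges must therefore carry \emph{distinct} Paulis on the face qubit, while non-consecutive edges of $f$ share no vertex, must commute, and hence must carry \emph{equal} Paulis there. These constraints are simultaneously satisfiable only for $3$ edges (all three Paulis distinct) or $4$ edges (alternating pair), which is exactly the content of the paper's remark that face qubits may only be cancelled to identity in two ways.
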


\begin{corollary}\label{cor:w3disp}
The disparity of a Weight-3 Planar Encoding is given by:
\begin{equation}
\Delta = F_K - EF,
\end{equation}
$F_K$ denoting the number of face cycles in $K$, and $EF$ denoting the faces without a qubit. 
\end{corollary}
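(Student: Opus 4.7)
The plan is to assemble this identity directly from the formulas already established in the excerpt. Starting from the disparity formula in Corollary (following Theorem 1),
\begin{equation*}
\Delta = (N-M) - (D(\mathcal{C}_G) - D(K)),
\end{equation*}
I would evaluate each piece for a Weight-3 Planar Encoding in turn.

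First, I would count qubits. By Definition \ref{def:planar W3 encoding}, the encoding assigns one vertex qubit per vertex of $G$ (giving $M$ qubits) and at most one auxiliary qubit per face of $G$. Writing $F$ for the total number of (bounded) faces and $EF$ for the number of faces that carry \emph{no} auxiliary qubit, this yields $N - M = F - EF$.

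Next, I would identify $D(\mathcal{C}_G)$ and $D(K)$. From Proposition 1, $D(\mathcal{C}_G) = |\mathbf{E}| - |\mathbf{V}| + 1$, and Euler's formula for a connected planar graph (excluding the unbounded face, as stipulated just before Theorem \ref{thm:planar kernel face cycles}) gives $|\mathbf{V}| - |\mathbf{E}| + F = 1$, so $D(\mathcal{C}_G) = F$. For $D(K)$, Theorem \ref{thm:planar kernel face cycles} states that $K$ is minimally generated by face cycles, and by definition $F_K$ is precisely the number of such face cycles that lie in $K$, so $D(K) = F_K$.

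Substituting everything into the disparity formula gives
\begin{equation*}
\Delta = (F - EF) - F + F_K = F_K - EF,
\end{equation*}
which is the claim. There is no real obstacle here: the proof is simply the algebraic combination of Equation \ref{eq:disparity2}, Proposition 1, Euler's formula, and Theorem \ref{thm:planar kernel face cycles}. The one place to be careful is the bookkeeping convention for the unbounded face, which the excerpt has already fixed by excluding it, so that $D(\mathcal{C}_G)$ coincides with the number of bounded faces $F$ rather than $F-1$.
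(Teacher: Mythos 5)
Your proof is correct and follows essentially the same route as the paper's: both substitute into $\Delta = (N-M) - (D(\mathcal{C}_G) - D(K))$ using $N-M$ equal to the number of faces carrying a qubit, $D(\mathcal{C}_G)$ equal to the number of bounded faces, and $D(K) = F_K$ from Theorem \ref{thm:planar kernel face cycles}. The only cosmetic difference is that you justify $D(\mathcal{C}_G) = F$ explicitly via Proposition 1 and Euler's formula, where the paper simply asserts $D(\mathcal{C}_G) - D(K) = F - F_K$; your handling of the unbounded-face convention is consistent with the paper's.
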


\begin{proof}
Recalling that $\Delta = N-M - ( D(\mathcal{C}_G) - D(K) )$ we note that $N-M$ is the number of faces with a qubit, and $D(\mathcal{C}_G) - D(K) = F - F_K$ where $F$ is the number of faces. Thus $D(\mathcal{C}_G) - D(K) = EF + N-M - F_K$, and so $\Delta = F_K - EF$.
\end{proof}

\section{Examples of Weight-3 Planar Encodings}\label{sec:other codes}





In this section we present weight 3 planar encodings for every possible uniform tiling of degree $\leq 4$, except for the square and hexagonal tilings which are presented in \cite{Derby20}. With each encoding we also give the qubit to mode ratio and how their disparity $\Delta$ scales with lattice size and shape.

\subsection{Diagram Notation}

For each lattice structure we include a diagram showing the lattice connectivity and a unit cell of the encoding which shows the form of the edge operators. To to simplify diagrams, we use the following shorthand for the unit cell diagrams.

An edge incidence on a vertex will either be an arrow or a blank line. As before, an arrow denotes that the corresponding edge operator will act on that qubit with a $Y$ and a blank line denotes an $X$.
\begin{center}
\begin{tikzpicture}[thick,>=stealth,line cap=round,scale=1.3]
\node at (0,0)[circle,fill=black,scale=.4]{};
\draw (-.15,0)--(-.7,0);
\draw[dotted] (-.7,0)--(-1,0);

\node at (.5,0){=};

\begin{scope}[shift={(2,0)}]
\node at (0,0)[circle,fill=black,scale=.4]{};
\node at (-.18,0)[scale=.7]{$X$};
\draw (-.27,0)--(-.7,0);
\draw[dotted] (-.7,0)--(-1,0);
\end{scope}

\begin{scope}[shift={(0,-1)}]
\node at (0,0)[circle,fill=black,scale=.4]{};
\draw[<-] (-.15,0)--(-.7,0);
\draw[dotted] (-.7,0)--(-1,0);

\node at (.5,0){=};

\begin{scope}[shift={(2,0)}]
\node at (0,0)[circle,fill=black,scale=.4]{};
\node at (-.18,0)[scale=.7]{$Y$};
\draw[<-] (-.27,0)--(-.7,0);
\draw[dotted] (-.7,0)--(-1,0);
\end{scope}
\end{scope}
\end{tikzpicture}
\end{center}
With these choices of Pauli acting on the vertex qubits, the vertex operators $V_j$ on these codes are represented by $Z$ on the corresponding qubit as before.

All the encodings in the following sections have odd faces (faces with a qubit assigned) of only two forms. We use the following shorthands to denote how their surrounding edge operators act on the face qubit
\begin{equation}\label{eq:face_notation}
\begin{tikzpicture}[thick,>=stealth,line cap=round,scale=1.6]
\foreach \x in {0,1}{
	\foreach \y in {0,1}{
		\node at (\x,\y)[circle,fill=black,scale=.4]{};
	}
}

\node at (.5,.5)[circle,fill=black,scale=.4]{};

\draw[->] (.25,0)--(.75,0);
\draw[<-] (.25,1)--(.75,1);
\draw[->] (0,.25)--(0,.75);
\draw[<-] (1,.25)--(1,.75);

\draw (.5,0)--(.5,.3);
\draw (.5,1)--(.5,.7);
\draw (0,.5)--(.3,.5);
\draw (1,.5)--(.7,.5);

\node at (1.5,.5){=};

\begin{scope}[shift={(2,0)}]
\foreach \x in {0,1}{
	\foreach \y in {0,1}{
		\node at (\x,\y)[circle,fill=black,scale=.4]{};
	}
}

\node at (.5,.5)[circle,fill=black,scale=.4]{};

\draw[->] (.25,0)--(.75,0);
\node at (.15,0)[scale=.7]{$X$};
\node at (.85,0)[scale=.7]{$Y$};
\draw[<-] (.25,1)--(.75,1);
\node at (.15,1)[scale=.7]{$Y$};
\node at (.85,1)[scale=.7]{$X$};
\draw[->] (0,.25)--(0,.75);
\node at (0,.15)[scale=.7]{$X$};
\node at (0,.85)[scale=.7]{$Y$};
\draw[<-] (1,.25)--(1,.75);
\node at (1,.15)[scale=.7]{$Y$};
\node at (1,.85)[scale=.7]{$X$};

\draw (.5,0)--(.5,.2);
\draw (.5,1)--(.5,.8);
\draw (0,.5)--(.2,.5);
\draw (1,.5)--(.8,.5);

\node at (.5,.33)[scale=.7]{$X$};
\node at (.5,.67)[scale=.7]{$X$};
\node at (.33,.5)[scale=.7]{$Y$};
\node at (.67,.5)[scale=.7]{$Y$};
\end{scope}

\begin{scope}[shift={(4,0)}]
\node at (0,0)[circle,fill=black,scale=.4]{};
\node at (1,0)[circle,fill=black,scale=.4]{};
\node at (60:1)[circle,fill=black,scale=.4]{};
\node at (.5,{0.5*tan(30)})[circle,fill=black,scale=.4]{};

\draw ($(0,0)!.15!(1,0)$)--($(0,0)!.85!(1,0)$);
\draw ($(0,0)!.15!(60:1)$)--($(0,0)!.85!(60:1)$);
\draw ($(1,0)!.15!(60:1)$)--($(1,0)!.85!(60:1)$);

\begin{scope}[shift={(.5,{0.5*tan(30)})}]
\draw ($(0,0)!.4!(30:{0.5*tan(30)})$)--(30:{0.5*tan(30)});
\draw ($(0,0)!.4!(150:{0.5*tan(30)})$)--(150:{0.5*tan(30)});
\draw ($(0,0)!.4!(-90:{0.5*tan(30)})$)--(-90:{0.5*tan(30)});
\end{scope}

\node at (1.5,.5){=};

\begin{scope}[shift={(2,0)}]

\node at (0,0)[circle,fill=black,scale=.4]{};
\node at (1,0)[circle,fill=black,scale=.4]{};
\node at (60:1)[circle,fill=black,scale=.4]{};
\node at (.5,{0.5*tan(30)})[circle,fill=black,scale=.4]{};

\draw ($(0,0)!.25!(1,0)$)--($(0,0)!.75!(1,0)$);
\node at ($(0,0)!.15!(1,0)$)[scale=.7]{$X$};
\node at ($(0,0)!.85!(1,0)$)[scale=.7]{$X$};
\draw ($(0,0)!.25!(60:1)$)--($(0,0)!.75!(60:1)$);
\node at ($(0,0)!.15!(60:1)$)[scale=.7]{$X$};
\node at ($(0,0)!.85!(60:1)$)[scale=.7]{$X$};
\draw ($(1,0)!.25!(60:1)$)--($(1,0)!.75!(60:1)$);
\node at ($(1,0)!.15!(60:1)$)[scale=.7]{$X$};
\node at ($(1,0)!.85!(60:1)$)[scale=.7]{$X$};

\begin{scope}[shift={(.5,{0.5*tan(30)})}]
\draw ($(0,0)!.8!(30:{0.5*tan(30)})$)--(30:{0.5*tan(30)});
\draw ($(0,0)!.8!(150:{0.5*tan(30)})$)--(150:{0.5*tan(30)});
\draw ($(0,0)!.8!(-90:{0.5*tan(30)})$)--(-90:{0.5*tan(30)});

\node at ($(0,0)!.5!(30:{0.5*tan(30)})$)[scale=.7]{$X$};
\node at ($(0,0)!.5!(-90:{0.5*tan(30)})$)[scale=.7]{$Z$};
\node at ($(0,0)!.5!(150:{0.5*tan(30)})$)[scale=.7]{$Y$};
\end{scope}
\end{scope}
\end{scope}
\end{tikzpicture}
\end{equation}
Edge incidences may be switched arbitrarily provided that incidences on the same vertex qubit commute and anticommute in the same manner. The Paulis acting on face qubits may also be changed provided that opposite edges act with the same Pauli in the square case and that all three are different in the triangular case.

\subsection{The 4.8.8 Uniform Tiling}
We begin with the $4.8.8$ uniform tiling, illustrated in Figure \ref{fig:488}. This tiling has the notable property that it may be readily used to represent a spinful fermionic system on a square lattice, as illustrated in Figure \ref{fig:spinful}. Using the compact encoding on a $4.8.8$ uniform tiling, a spinful fermi-hubbard model on a square lattice may be represented on a planar hardware interaction architecture, with weight-2 spin-spin interactions, and weight-4 hopping terms.

\begin{figure}[ht]
	\centering
	\begin{tikzpicture}[scale=.8,thick,>=stealth,line cap=round,every node/.style={circle,fill=black,scale=.4}]
	\foreach \x in {22.5,67.5,112.5,157.5,202.5,247.5,292.5,337.5}{
		\node at (\x:{0.5/sin(22.5)}){};
		\draw[->] ($(\x:{0.5/sin(22.5)})!0.15!(\x+45:{0.5/sin(22.5)})$)--($(\x:{0.5/sin(22.5)})!0.85!(\x+45:{0.5/sin(22.5)})$);
	}
	
	\begin{scope}[shift={(45:{1/tan(22.5)})}]
	\foreach \x in {22.5,67.5,112.5,157.5,202.5,247.5,292.5,337.5}{
		\node at (\x:{0.5/sin(22.5)}){};
		\draw[<-] ($(\x:{0.5/sin(22.5)})!0.15!(\x+45:{0.5/sin(22.5)})$)--($(\x:{0.5/sin(22.5)})!0.85!(\x+45:{0.5/sin(22.5)})$);
	}
	\end{scope}
	
	\begin{scope}[shift={(112.5:{0.5/sin(22.5)})}]
	\node at (.5,.5){};
	\node at (0,1){};
	\draw[<-] (0,.15)--(0,.85);
	\draw[->] (.15,1)--(.85,1);
	\draw (.5,0)--(.5,.25);
	\draw (.5,1)--(.5,.75);
	\draw (0,.5)--(.25,.5);
	\draw (1,.5)--(.75,.5);
	\end{scope}
	
	\begin{scope}[shift={(-22.5:{0.5/sin(22.5)})}]
	\node at (.5,.5){};
	\node at (1,0){};
	\draw[<-] (1,.15)--(1,.85);
	\draw[->] (.15,0)--(.85,0);
	\draw (.5,0)--(.5,.25);
	\draw (.5,1)--(.5,.75);
	\draw (0,.5)--(.25,.5);
	\draw (1,.5)--(.75,.5);
	\end{scope}
	
	\begin{scope}[scale=.5,shift={($(-16,1)+(45:1)$)}]
	\eightfour[(0,0)][0]
	\eightfour[($(2,0)+(45:1)+(-45:1)$)][0]
	\eightfour[($(4,0)+(45:2)+(-45:2)$)][90]
	\eightfour[($(1,-1)+(-45:1)$)][0]
	\eightfour[($(3,-1)+(-45:2)+(45:1)$)][0]
	\eightfour[($(1,-1)+(-45:1)$)][90]
	\end{scope}
	\end{tikzpicture}
	\caption{The 4.8.8 Uniform Tiling and the unit cell of its encoding.}
	\label{fig:488}
\end{figure}
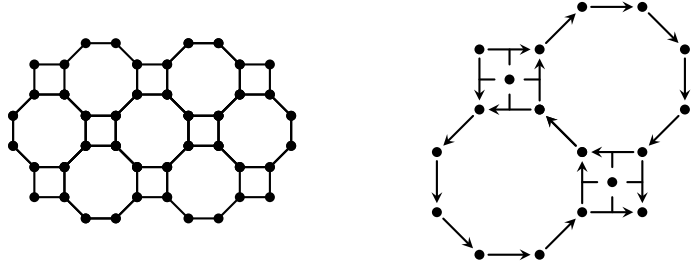

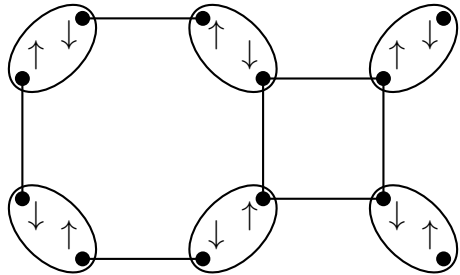
\begin{figure}
\centering
	\begin{tikzpicture}[scale=.8,thick,>=stealth,line cap=round,every node/.style={circle,fill=black,scale=.6}]
	\node[label={[shift={(-.3,-.2)}]\huge $\uparrow$}] at (1,0){};
	\node[label={[shift={(.3,-.2)}]\huge $\downarrow$}] at (3,0){};
	\node[label={[shift={(.3,-1.1)}]\huge $\downarrow$}] at (0,1){};
	\node [label={[shift={(-.3,-1.1)}]\huge $\uparrow$}]at (4,1){};
	\node[label={[shift={(.3,-.2)}]\huge $\uparrow$}] at (0,3){};
	\node[label={[shift={(-.3,-1.1)}]\huge $\downarrow$}] at (1,4){};
	\node[label={[shift={(.3,-1.1)}]\huge $\uparrow$}] at (3,4){};
	\node[label={[shift={(-.3,-.2)}]\huge $\downarrow$}] at (4,3){};
	\node[label={[shift={(-.3,-.2)}]\huge $\uparrow$}] at (7,0){};
	\node [label={[shift={(.3,-1.1)}]\huge $\downarrow$}]at (6,1){};
	\node[label={[shift={(.3,-.2)}]\huge $\uparrow$}] at (6,3){};
	\node[label={[shift={(-.3,-1.1)}]\huge $\downarrow$}] at (7,4){};

	\draw[rotate around={45:(.5,.5)}] (.5,.5) ellipse (15pt and 25pt);
	\draw[rotate around={-45:(.5,3.5)}] (.5,3.5) ellipse (15pt and 25pt);
	
	\draw[rotate around={-45:(3.5,.5)}] (3.5,.5) ellipse (15pt and 25pt);
	\draw[rotate around={45:(3.5,3.5)}] (3.5,3.5) ellipse (15pt and 25pt);
	
		\draw[rotate around={45:(6.5,.5)}] (6.5,.5) ellipse (15pt and 25pt);
	\draw[rotate around={-45:(6.5,3.5)}] (6.5,3.5) ellipse (15pt and 25pt);
	
	\draw (1,0)--(3,0);
	\draw (0,1)--(0,3);
	\draw (1,4)--(3,4);
	\draw (4,3)--(4,1);
	\draw (4,3)--(6,3);
	\draw (6,3)--(6,1);
	\draw (6,1)--(4,1);

	\end{tikzpicture}
	\caption{Layout of a spinful fermionic system on a square lattice, embedded in the $4.8.8$ uniform tiling.}\label{fig:spinful}
\end{figure}

Loops around octagonal faces are non-trivial Paulis and generate the stabilizer, loops around square faces are identity. The qubit to mode ratio is $<1.25$.

This tiling pattern has the same even/odd face pattern as the square lattice with octagons in place of the even square faces and so the disparity follows the same rules. That is, for a rectangular shaped lattice such as the one shown in \cref{fig:488} the disparity is given by
\begin{equation}
\Delta=OF-EF
\end{equation}
and may take values of -1, 0 or 1.

\subsection{The 6.4.3.4 Uniform Tiling}

\begin{figure}[ht]
	\centering
	\begin{tikzpicture}[thick,>=stealth,line cap=round,every node/.style={circle,fill=black,scale=.4}]
	\foreach \x in {0,60,120,180,240,300}{
		\node at (\x:1){};
		\draw[->] ($(\x:1)!0.15!(\x+60:1)$)--($(\x:1)!0.85!(\x+60:1)$);
	}
	
	\begin{scope}[shift={(120:1)}]
	\node at (0,1){};
	\node at (1,1){};
	\node at (.5,.5){};
	\draw[<-] (0,.15)--(0,.85);
	\draw[->] (1,.15)--(1,.85);
	\draw[->] (.15,1)--(.85,1);
	\draw[->] ($(1,1)+(-30:.15)$)--($(1,1)+(-30:.85)$);
	\draw (.5,0)--(.5,.25);
	\draw (.5,1)--(.5,.75);
	\draw (0,.5)--(.25,.5);
	\draw (1,.5)--(.75,.5);
	\end{scope}
	
	\begin{scope}[rotate=-60,shift={(120:1)}]
	\node at (0,1){};
	\node at (1,1){};
	\node at (.5,.5){};
	\draw[<-] (0,.15)--(0,.85);
	\draw[->] (1,.15)--(1,.85);
	\draw[->] (.15,1)--(.85,1);
	\draw[->] ($(1,1)+(-30:.15)$)--($(1,1)+(-30:.85)$);
	\draw (.5,0)--(.5,.25);
	\draw (.5,1)--(.5,.75);
	\draw (0,.5)--(.25,.5);
	\draw (1,.5)--(.75,.5);
	\end{scope}
	
	\begin{scope}[rotate=-120,shift={(120:1)}]
	\node at (0,1){};
	\node at (1,1){};
	\node at (.5,.5){};
	\draw[<-] (0,.15)--(0,.85);
	\draw[->] (1,.15)--(1,.85);
	\draw[->] (.15,1)--(.85,1);
	\draw (.5,0)--(.5,.25);
	\draw (.5,1)--(.5,.75);
	\draw (0,.5)--(.25,.5);
	\draw (1,.5)--(.75,.5);
	\end{scope}
	
	\begin{scope}[scale=.5,shift={($(-12,-.5)+(0,{-1*cos(30)})$)}]
	\sixfourthreefour[(0,0)][0]
	\sixfourthreefour[($(0,1)+(0,{2*sin(60)})$)][0]
	\sixfourthreefour[($(60:1)+(1,0)+(30:1)$)][0]
	\sixfourthreefour[($(-60:1)+(1,0)+(-30:1)$)][0]
	\sixfourthreefour[($(-60:1)+(1,0)+(-30:1)$)][-120]
	\sixfourthreefour[($(60:1)+(1,1)+(30:1)+(0,{2*sin(60)})$)][0]
	\sixfourthreefour[($(60:1)+(1,1)+(30:1)+(0,{2*sin(60)})$)][120]
	\sixfourthreefour[($(3,0)+(30:1)+(-30:1)$)][0]
	\sixfourthreefour[($(3,0)+(30:1)+(-30:1)$)][-120]
	\sixfourthreefour[($(3,1)+(30:1)+(-30:1)+(0,{2*sin(60)})$)][0]
	\sixfourthreefour[(0,0)][240]
	\sixfourthreefour[(0,0)][120]
	\sixfourthreefour[($(0,1)+(0,{2*sin(60)})$)][120]
	\end{scope}
	
	\end{tikzpicture}
	\caption{The 6.4.3.4 Uniform Tiling and the unit cell of its encoding.}
	\label{fig:6434}
\end{figure}
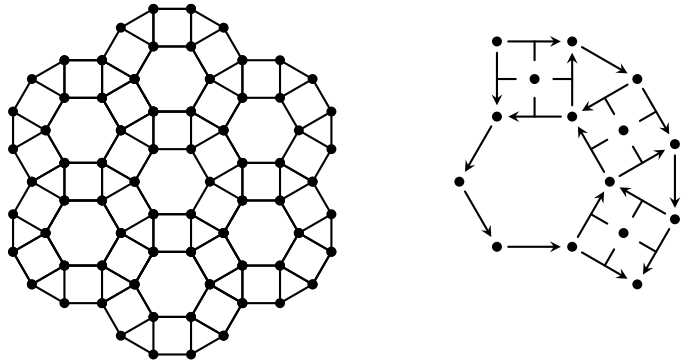

See \cref{fig:6434} for the lattice structure and the unit cell of the encoding. Loops around triangular and hexagonal faces are non-trivial Paulis and generate the stabilizer, loops around square faces are identity. The qubit to mode ratio is $<1.5$.

\begin{prop}\label{prop:6434 delta}
	For a connected 6.4.3.4 lattice without holes and with all hexagonal faces fully surrounded by square and triangular faces (e.g. \cref{fig:6434}), the disparity is
	$$
	\Delta=-1.
	$$
\end{prop}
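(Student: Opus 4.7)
The plan is to reduce the claim, via Corollary \ref{cor:w3disp}, to a combinatorial identity on face counts, and then evaluate that identity using an auxiliary super-graph on hexagonal faces.

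First, I would identify which faces carry qubits and which face cycles lie in $K$. In the encoding of Figure \ref{fig:6434} each square face carries a face qubit (and no other face does), while loops around squares multiply to the identity and loops around triangles and hexagons give non-trivial stabilizers. Hence $EF = T + H$, and by Theorem \ref{thm:planar kernel face cycles} the kernel $K$ is generated exactly by the square-face cycles, so $F_K = S$. Corollary \ref{cor:w3disp} then yields
\begin{equation}
\Delta = S - T - H.
\end{equation}

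To evaluate $S - T - H$, I would introduce a \emph{hex super-graph} $\Gamma$: one super-vertex for every hexagonal face of the lattice, one super-edge for every pair of hexes sharing a square, and one super-face for every triple of hexes meeting at a common triangle. Because in the $3.4.6.4$ tiling each square borders exactly two hex positions and each triangle borders exactly three (and by the full-surrounding hypothesis every square and triangle in the lattice is adjacent to at least one hex), double-counting hex--square and hex--triangle incidences gives
\begin{equation}
S = 6H - e, \qquad T = 6H - 2e + t,
\end{equation}
where $H$, $e$, $t$ count the super-vertices, super-edges, and bounded super-faces of $\Gamma$. Substituting yields $\Delta = e - t - H$.

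The finale applies Euler's formula to $\Gamma$ as a planar embedded subgraph of the triangular super-lattice: if every bounded face of $\Gamma$ is one of its $t$ super-triangles, then $H - e + (t+1) = 2$ and hence $\Delta = -1$. The main obstacle is justifying this topological claim. Concretely, one has to argue that the hypotheses (connected, no holes, every hex fully surrounded) force $\Gamma$ to be connected and simply connected: any non-triangular bounded face of $\Gamma$ would come from a cluster of missing hex positions surrounded by active ones, and the absence of those hex faces in the original lattice would produce a topological hole bounded by the surrounding squares and triangles, contradicting the no-holes assumption.
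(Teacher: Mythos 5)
Your proof is correct, but it takes a genuinely different route from the paper's. The paper argues by induction on the number of hexagons: it verifies $\Delta=-1$ for a single fully surrounded hexagon (seven even faces, six odd faces) and then enumerates the local configurations in which a new fully surrounded hexagon can be attached, checking that each one adds equal numbers of even and odd faces. You instead reduce the claim to the single identity $\Delta = S - T - H$ via Corollary \ref{cor:w3disp} and evaluate it globally: the double counts $S = 6H - e$ and $T = 6H - 2e + t$ are right (for the triangle count, note $2e = T_2 + 3T_3$ where $T_j$ is the number of triangles adjacent to $j$ present hexagons, which gives exactly your formula), and the conclusion then rests on the Euler relation $H - e + t = 1$ for the hexagon-adjacency graph $\Gamma$. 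What your approach buys is transparency about where the hypotheses enter: ``no holes'' is used precisely to force every bounded face of $\Gamma$ to be a super-triangle, and your formula $\Delta = e - t - H$ immediately predicts how the disparity changes for lattices with holes or ring-like hexagon arrangements, which the paper's induction cannot easily address. What the paper's approach buys is that it sidesteps the topological step entirely by construction (at the cost of implicitly assuming every admissible lattice can be grown hexagon-by-hexagon, which is the same topological fact in disguise). The one step you rightly flag as the obstacle -- that a non-triangular bounded face of $\Gamma$ forces a hole -- does go through: since every present hexagon is fully surrounded, $\Gamma$ is the full induced subgraph of the triangular super-lattice on the present positions, so a non-triangular bounded face must contain a missing hexagon position in its interior; the squares and triangles bounding that position adjacent to present hexagons are all present, so its absence is a hole of the original lattice. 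Note also that, like the paper, you implicitly assume every square and triangle is adjacent to some hexagon (ruling out faces dangling off the boundary); this is consistent with the paper's reading of its own hypothesis, but is worth stating.
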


\begin{proof}
	Consider a 6.4.3.4 lattice with a single fully surrounded hexagonal face. Clearly the disparity is -1 in this case as there are seven even faces and six odd faces. Consider constructing a lattice by adding fully surrounded hexagonal faces. Each hexagonal face added will amount to adding one of the following combinations of faces:
	\begin{center}
		\begin{tikzpicture}[thick,>=stealth,line cap=round,every node/.style={circle,fill=black,scale=.4},scale=.4]
		\Kagome[(0,0)][0]*
		\draw (60:1)--($(60:1)+(0,1)$)--($(120:1)+(0,1)$)--(120:1);
		\node at ($(60:1)+(0,1)$){};
		\node at ($(120:1)+(0,1)$){};
		
		\begin{scope}[shift={(4,0)},rotate=-60]
		\Kagome[(0,0)][0]*
		\draw (60:1)--($(60:1)+(0,1)$)--($(120:1)+(0,1)$)--(120:1);
		\node at ($(60:1)+(0,1)$){};
		\node at ($(120:1)+(0,1)$){};
		\begin{scope}[rotate=60]
		\draw ($(60:1)+(0,1)$)--($(60:1)+(30:1)$);
		\draw (60:1)--($(60:1)+(0,1)$)--($(120:1)+(0,1)$)--(120:1);
		\node at ($(60:1)+(0,1)$){};
		\node at ($(120:1)+(0,1)$){};
		\end{scope}
		\end{scope}
		
		\begin{scope}[shift={(8,0)},rotate=-120]
		\Kagome[(0,0)][0]*
		\draw (60:1)--($(60:1)+(0,1)$)--($(120:1)+(0,1)$)--(120:1);
		\node at ($(60:1)+(0,1)$){};
		\node at ($(120:1)+(0,1)$){};
		\begin{scope}[rotate=60]
		\draw ($(60:1)+(0,1)$)--($(60:1)+(30:1)$);
		\draw (60:1)--($(60:1)+(0,1)$)--($(120:1)+(0,1)$)--(120:1);
		\node at ($(60:1)+(0,1)$){};
		\node at ($(120:1)+(0,1)$){};
		\begin{scope}[rotate=60]
		\draw ($(60:1)+(0,1)$)--($(60:1)+(30:1)$);
		\draw (60:1)--($(60:1)+(0,1)$)--($(120:1)+(0,1)$)--(120:1);
		\node at ($(60:1)+(0,1)$){};
		\node at ($(120:1)+(0,1)$){};
		\end{scope}
		\end{scope}
		\end{scope}
		
		\begin{scope}[shift={(12,0)},rotate=-180]
		\Kagome[(0,0)][0]*
		\draw (60:1)--($(60:1)+(0,1)$)--($(120:1)+(0,1)$)--(120:1);
		\node at ($(60:1)+(0,1)$){};
		\node at ($(120:1)+(0,1)$){};
		\begin{scope}[rotate=60]
		\draw ($(60:1)+(0,1)$)--($(60:1)+(30:1)$);
		\draw (60:1)--($(60:1)+(0,1)$)--($(120:1)+(0,1)$)--(120:1);
		\node at ($(60:1)+(0,1)$){};
		\node at ($(120:1)+(0,1)$){};
		\begin{scope}[rotate=60]
		\draw ($(60:1)+(0,1)$)--($(60:1)+(30:1)$);
		\draw (60:1)--($(60:1)+(0,1)$)--($(120:1)+(0,1)$)--(120:1);
		\node at ($(60:1)+(0,1)$){};
		\node at ($(120:1)+(0,1)$){};
		\begin{scope}[rotate=60]
		\draw ($(60:1)+(0,1)$)--($(60:1)+(30:1)$);
		\draw (60:1)--($(60:1)+(0,1)$)--($(120:1)+(0,1)$)--(120:1);
		\node at ($(60:1)+(0,1)$){};
		\node at ($(120:1)+(0,1)$){};
		\end{scope}
		\end{scope}
		\end{scope}
		\end{scope}
		
		\begin{scope}[shift={(17,0)},rotate=-240]
		\Kagome[(0,0)][0]*
		\draw (60:1)--($(60:1)+(0,1)$)--($(120:1)+(0,1)$)--(120:1);
		\node at ($(60:1)+(0,1)$){};
		\node at ($(120:1)+(0,1)$){};
		\begin{scope}[rotate=60]
		\draw ($(60:1)+(0,1)$)--($(60:1)+(30:1)$);
		\draw (60:1)--($(60:1)+(0,1)$)--($(120:1)+(0,1)$)--(120:1);
		\node at ($(60:1)+(0,1)$){};
		\node at ($(120:1)+(0,1)$){};
		\begin{scope}[rotate=60]
		\draw ($(60:1)+(0,1)$)--($(60:1)+(30:1)$);
		\draw (60:1)--($(60:1)+(0,1)$)--($(120:1)+(0,1)$)--(120:1);
		\node at ($(60:1)+(0,1)$){};
		\node at ($(120:1)+(0,1)$){};
		\begin{scope}[rotate=60]
		\draw ($(60:1)+(0,1)$)--($(60:1)+(30:1)$);
		\draw (60:1)--($(60:1)+(0,1)$)--($(120:1)+(0,1)$)--(120:1);
		\node at ($(60:1)+(0,1)$){};
		\node at ($(120:1)+(0,1)$){};
		\begin{scope}[rotate=60]
		\draw ($(60:1)+(0,1)$)--($(60:1)+(30:1)$);
		\draw (60:1)--($(60:1)+(0,1)$)--($(120:1)+(0,1)$)--(120:1);
		\node at ($(60:1)+(0,1)$){};
		\node at ($(120:1)+(0,1)$){};
		\end{scope}
		\end{scope}
		\end{scope}
		\end{scope}
		\end{scope}
		\end{tikzpicture}
	\end{center}
	These combinations all have the same number of even and odd faces and will not change the disparity from -1.
\end{proof}

The lattice can be made to simulate the full fermionic algebra by adding a single square face to the outer edge where single particle operators may be injected at its corners.


\subsection{The 4.6.12 Uniform Tiling}

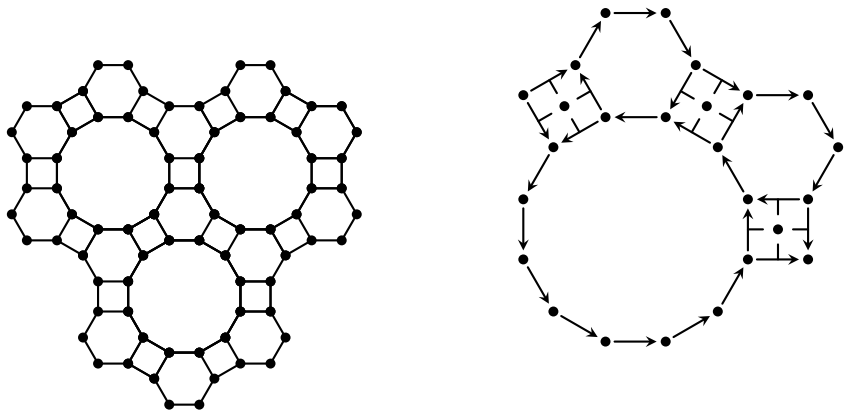
\begin{figure}[h]
	\centering
	\begin{tikzpicture}[thick,>=stealth,line cap=round,every node/.style={circle,fill=black,scale=.4}]
	\begin{scope}[scale=.8]
	\foreach \x in {15,45,75,105,135,165,195,225,255,285,315,345}{
		\node at (\x:{0.5/sin(15)}){};
		\draw[->] ($(\x:{0.5/sin(15)})!0.15!(\x+30:{0.5/sin(15)})$)--($(\x:{0.5/sin(15)})!0.85!(\x+30:{0.5/sin(15)})$);
	}
	
	\begin{scope}[rotate=30,shift={(105:{0.5/sin(15)})}]
	\node at (1,1){};
	\node at (0,1){};
	\node at (.5,.5){};
	\draw[<-] (0,.15)--(0,.85);
	\draw[->] (1,.15)--(1,.85);
	\draw[->] (.15,1)--(.85,1);
	\draw (.5,0)--(.5,.25);
	\draw (.5,1)--(.5,.75);
	\draw (0,.5)--(.25,.5);
	\draw (1,.5)--(.75,.5);
	\draw[->] ($(1,1)+(30:.15)$)--($(1,1)+(30:.85)$);
	\node at ($(1,1)+(30:1)$){};
	\draw[->] ($(1,1)+(30:1)+(-30:.15)$)--($(1,1)+(30:1)+(-30:.85)$);
	\node at ($(1,1)+(30:1)+(-30:1)$){};
	\draw[->] ($(1,1)+(30:1)+(-30:1)+(-90:.15)$)--($(1,1)+(30:1)+(-30:1)+(-90:.85)$);
	\end{scope}
	
	\begin{scope}[rotate=-30,shift={(105:{0.5/sin(15)})}]
	\node at (1,1){};
	\node at (0,1){};
	\node at (.5,.5){};
	\draw[<-] (0,.15)--(0,.85);
	\draw[->] (1,.15)--(1,.85);
	\draw[->] (.15,1)--(.85,1);
	\draw (.5,0)--(.5,.25);
	\draw (.5,1)--(.5,.75);
	\draw (0,.5)--(.25,.5);
	\draw (1,.5)--(.75,.5);
	\draw[->] ($(1,1)+(30:.15)$)--($(1,1)+(30:.85)$);
	\node at ($(1,1)+(30:1)$){};
	\draw[->] ($(1,1)+(30:1)+(-30:.15)$)--($(1,1)+(30:1)+(-30:.85)$);
	\node at ($(1,1)+(30:1)+(-30:1)$){};
	\draw[->] ($(1,1)+(30:1)+(-30:1)+(-90:.15)$)--($(1,1)+(30:1)+(-30:1)+(-90:.85)$);
	\end{scope}
	
	\begin{scope}[rotate=-90,shift={(105:{0.5/sin(15)})}]
	\node at (1,1){};
	\node at (0,1){};
	\node at (.5,.5){};
	\draw[<-] (0,.15)--(0,.85);
	\draw[->] (1,.15)--(1,.85);
	\draw[->] (.15,1)--(.85,1);
	\draw (.5,0)--(.5,.25);
	\draw (.5,1)--(.5,.75);
	\draw (0,.5)--(.25,.5);
	\draw (1,.5)--(.75,.5);
	\end{scope}
	\end{scope}
	
	\begin{scope}[scale=.4,shift={($(-16,.5)+(150:1)+(120:1)$)}]
	\twelvesixfour[(0,0)][0]
	\twelvesixfour[(0,0)][120]
	\twelvesixfour[($(2,0)+(30:1)+(-30:1)+(60:1)+(-60:1)$)][0]
	\twelvesixfour[($(2,0)+(30:1)+(-30:1)+(60:1)+(-60:1)$)][-60]
	\twelvesixfour[($(1,-1)+(-60:2)+(-120:1)+(-30:1)$)][0]
	\twelvesixfour[($(1,-1)+(-60:2)+(-120:1)+(-30:1)$)][120]
	\twelvesixfour[($(1,-1)+(-60:2)+(-120:1)+(-30:1)$)][-120]
	\end{scope}
	\end{tikzpicture}
	\caption{The 4.6.12 Uniform Tiling and the unit cell of its encoding.}
	\label{fig:4.6.12}
\end{figure}

See \cref{fig:4.6.12} for the lattice structure and the unit cell of the encoding. Loops around hexagonal and dodecagonal faces are non-trivial Paulis and generate the stabilizer, loops around square faces are identity. The qubit to mode ratio is $<1.25$.

This tiling has the same odd/even face pattern as the 6.4.3.4 tiling and so its disparity follows a similar rule in that a 4.6.12 lattice with no holes and fully surrounded dodecagonal faces will have disparity
$$
\Delta=-1.
$$

\subsection{The Kagome Lattice (3.6.3.6 Uniform Tiling)}
So far we have presented Weight-3 Planar encodings whose disparity does not grow with the bulk, this is because the unit cells have as many faces in $K$ as they do faces with no auxilliary qubits. However the Weight-3 planar encoding we have found for the Kagome lattice does not have this property.

\begin{figure}[h]
	\centering
	\begin{tikzpicture}[thick,>=stealth,line cap=round,every node/.style={circle,fill=black,scale=.4}]
	\begin{scope}[scale=.5,shift={(-12,0)}]
	\foreach \x in {0,2,4}{
		\Kagome[(\x,0)][0]
		\Kagome[($(\x,0)+(120:2)$)][0]
		\Kagome[($(\x,0)+(-120:2)$)][0]
	}
	\Kagome[(0,0)][180]
	\Kagome[(6,0)][180]
	\Kagome[($(6,0)+(120:2)$)][0]
	\Kagome[($(6,0)+(-120:2)$)][0]
	\end{scope}
	
	\foreach \x in {0,60,120,180,240,300}{
		\node at (\x:1){};
	}
	\foreach \x in {0,120,240}{
		\draw[<->] ($(\x:1)!0.15!(\x+60:1)$)--($(\x:1)!0.85!(\x+60:1)$);
		\draw[] ($(\x+60:1)!0.15!(\x+120:1)$)--($(\x+60:1)!0.85!(\x+120:1)$);
	}
	\begin{scope}[shift={(1,0)}]
	\node at (60:1) {};
	\node at (-60:1) {};
	\node at (0,{0.5/cos(30)}){};
	\node at (0,{-0.5/cos(30)}){};
	\draw[<->] ($(0,0)!0.15!(60:1)$)--($(0,0)!0.85!(60:1)$);
	\draw[] ($(0,0)!0.15!(-60:1)$)--($(0,0)!0.85!(-60:1)$);
	\draw[<->] ($(60:1)!0.15!(120:1)$)--($(60:1)!0.85!(120:1)$);
	\draw[] ($(-60:1)!0.15!(-120:1)$)--($(-60:1)!0.85!(-120:1)$);
	
	\begin{scope}[shift={((0,{0.5/cos(30)}))}]
	\foreach \x in {-30,90,210}{
		\draw ($(0,0)!0.4!(\x:{0.5*tan(30)})$)--($(0,0)!1!(\x:{0.5*tan(30)})$);
	}
	\end{scope}
	
	\begin{scope}[shift={((0,-{0.5/cos(30)}))}]
	\foreach \x in {30,150,270}{
		\draw ($(0,0)!0.4!(\x:{0.5*tan(30)})$)--($(0,0)!1!(\x:{0.5*tan(30)})$);
	}
	\end{scope}
	\end{scope}
	
	\end{tikzpicture}
	\caption{(Left) A Kagome lattice with two triangular corners. (Right) The unit cell of the encoding showing all possible edge operators and faces.}
	\label{fig:Kagome}
\end{figure}
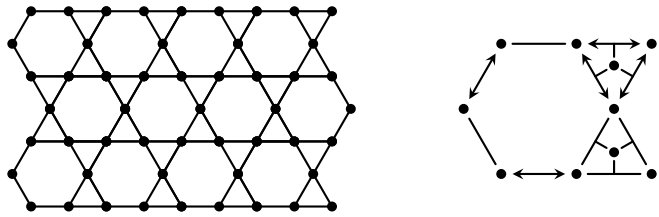
	
See \cref{fig:Kagome} for the lattice structure and the unit cell of the encoding. Loops around hexagonal faces are non-trivial Paulis and generate the stabilizer, loops around triangular faces are identity. The qubit to mode ratio is $<1.67$.

\begin{prop}\label{prop:kagome delta}
	The disparity of the encoding for a connected Kagome lattice of any shape without holes is given by
	\begin{equation}\label{eq:Kagome Delta Formulas}
	\Delta = HF + TC - 2 = \frac{1}{2}(TF+TC-2)
	\end{equation}
	where $HF$ is the number of hexagonal faces, $TF$ is the number of triangular faces and $TC$ is the number of triangular corners, that is, the number of vertices on the lattice boundary which belong only to a triangular face.
\end{prop}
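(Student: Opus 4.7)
The plan is to apply Corollary \ref{cor:w3disp} to reduce the statement to a purely combinatorial identity. In the Kagome encoding the auxiliary qubits sit on triangular faces while hexagons carry no qubits, so $EF = HF$; by \Cref{thm:planar kernel face cycles} together with the preceding observation that triangle cycles map to identity and hexagon cycles do not, $F_K = TF$. Hence $\Delta = TF - HF$, and the proposition reduces to the purely combinatorial claim $TF = 2HF + TC - 2$.

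The key structural input is that every Kagome edge belongs to exactly one triangular face. This follows from viewing Kagome as the line graph of the honeycomb lattice: each Kagome edge corresponds to a pair of adjacent honeycomb edges meeting at a common honeycomb vertex $u$, and is a side of the unique triangle associated with $u$; triangles at distinct honeycomb vertices share no edges. So $E = 3TF$, and substituting into Euler's formula $V - E + F = 2$ with $F = TF + HF + 1$ gives $V = 2TF - HF + 1$.

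I would then translate to the honeycomb dual. Let $V_R$ be the set of honeycomb vertices indexing the triangles of the shape (so $|V_R| = TF$); Kagome vertices correspond bijectively to honeycomb edges with at least one endpoint in $V_R$. Writing $E_R^*$ for honeycomb edges with both endpoints in $V_R$ and $N$ for those with exactly one, the identity $V = |E_R^*| + N$ combined with the double-count $3TF = 2|E_R^*| + N$ (each vertex of $V_R$ is incident to three honeycomb edges, which are counted twice or once according to whether both or one of their endpoints is in $V_R$) yields $N = TF - 2HF + 2$.

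The final step is to identify $N$ with $TC$. A Kagome vertex is a triangular corner iff it lies in exactly one triangle ($t=1$) and no hexagon ($h=0$); in the dual these become: its honeycomb edge $e$ has exactly one endpoint in $V_R$, and neither of the two honeycomb hexes adjacent to $e$ lies in $H_R$. The implication $t=1 \Rightarrow h=0$ is then automatic, since if the other endpoint $u'$ of $e$ is outside $V_R$ then $u'$ sits on both honeycomb hexes adjacent to $e$, each of which therefore lacks a vertex in $V_R$ and so cannot have all six vertices in $V_R$. Thus $TC = N = TF - 2HF + 2$, giving $\Delta = TF - HF = HF + TC - 2$; the second form follows by elementary algebra. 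The main subtlety in this plan is pinning down the correspondence with the honeycomb dual rigorously: specifically, one must show that the no-holes hypothesis forces $H_R$ to consist of \emph{every} honeycomb hex whose six vertices all lie in $V_R$ -- otherwise an excluded hex would manifest as a Kagome hole -- since this is the ingredient that makes the $t=1 \Rightarrow h=0$ implication clean.
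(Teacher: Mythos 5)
Your reduction to $\Delta = TF - HF$ via \Cref{cor:w3disp} matches the paper, and the combinatorial identity $TF = 2HF + TC - 2$ you are aiming for is the correct target. But the dual-lattice derivation of that identity breaks on boundary effects, in several places that all trace to the same oversight. First, the claim that every Kagome edge belongs to exactly one triangular face is true of the infinite tiling but false for a finite region: an edge of a hexagon on the outer boundary whose associated triangle is not part of the region lies in no triangle at all. A single hexagonal face (the paper's base case, with $HF=1$, $TF=0$, $TC=0$) already has $E = 6 \neq 3TF = 0$, so $E = 3TF$ and the downstream formula $V = 2TF - HF + 1$ fail; the same region kills the claimed bijection between Kagome vertices and honeycomb edges meeting $V_R$, since it has six vertices while $V_R = \emptyset$. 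Second, the identification $TC = N$ is wrong. Take a single hexagon with one attached triangle ($HF = TF = TC = 1$, a legitimate hole-free region): $V_R$ is a single honeycomb vertex, so $N = 3$, yet only one of the three vertices of that triangle is a triangular corner --- the other two also lie on the hexagon. Your implication ``$t=1 \Rightarrow h=0$'' needs every \emph{included} hexagon to have all six surrounding triangles included, i.e.\ $H_R \subseteq \{\text{fully surrounded hexes}\}$, which is exactly what fails at the boundary; the no-holes hypothesis only gives the reverse containment, as your closing caveat half-acknowledges. In that example both equalities $TC = N$ and $N = TF - 2HF + 2$ are false even though their composite happens to be true, which shows the cancellations in your chain are accidental rather than structural.

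The paper avoids all of this by proving $TF = 2HF + TC - 2$ inductively: any hole-free Kagome region is built by starting from one hexagon and adding hexagons one at a time (each new hexagon after the first induces exactly two new triangles in the ``slots'' it creates against existing hexagons), and then attaching corner triangles (each adds one triangle and one corner). If you want to salvage the Euler-characteristic route, you would need to carry the boundary terms explicitly --- counting separately the edges and vertices that lie in no triangle --- rather than assuming the bulk incidence relations hold globally.
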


\begin{proof}
Consider a triangular face, the product of any two encoded edge operators around the face gives the third one up to some phase, therefore if an encoding on a lattice includes two edge operators around such a face (e.g. on the lattice boundary), then the third is included in the lattice automatically (illustrated below).
\begin{center}
\begin{tikzpicture}[thick,>=stealth,line cap=round,every node/.style={circle,fill=black,scale=.4},scale=.5]
\Kagome[(0,0)][0]*
\Kagome[(2,0)][0]*

\node at (5,1)[fill=white,rectangle,scale=1.5,align=center]{Automatically \\ includes};
\draw[line cap=butt,thin,-implies,double equal sign distance] (4.5,0) -- (5.5,0);

\begin{scope}[shift={(8,0)}]
\draw[red] (60:1)--($(60:1)+(1,0)$);
\draw[red] (-60:1)--($(-60:1)+(1,0)$);
\Kagome[(0,0)][0]*
\Kagome[(2,0)][0]*
\end{scope}
\end{tikzpicture}
\end{center}
From this we can see that for a lattice of any shape, the triangular faces between hexagonal faces are automatically included in the encoding.

Now consider adding a Hexagonal face to the boundary of an existing lattice with no triangular corners. No matter where it is added, it will create two ``slots''  between itself and other hexagonal faces which induce a triangular face as shown above. All contexts in which a Hexagonal face can be added are shown below to illustrate this (the fifth is captured in the previous diagram), the extra vertices and edges forming the new hexagon are shown in blue and the third edge of the induced triangle face is shown in red.
\begin{center}
\begin{tikzpicture}[thick,>=stealth,line cap=round,every node/.style={circle,fill=black,scale=.4},scale=.4]
\begin{scope}[blue,every node/.style={circle,fill=white,draw=blue,scale=.3}]
\draw[red]($(-60:1)+(-120:1)$)--(-120:1);
\draw[red]($(-60:1)+(-120:1)+(2,0)$)--($(-60:1)+(2,0)$);
\Kagome[(-60:2)][0]*
\end{scope}
\Kagome[(0,0)][0]
\Kagome[(2,0)][180]

\begin{scope}[shift={(5,0)}]
\begin{scope}[blue,every node/.style={circle,fill=white,draw=blue,scale=.3}]
\draw[red]($(-60:1)+(-120:1)$)--(-120:1);
\draw[red]($(-60:3)$)--($(-60:3)+(1,0)$);
\Kagome[(-60:2)][0]*
\end{scope}
\Kagome[(0,0)][0]
\Kagome[(2,0)][180]
\Kagome[($(2,0)+(-60:2)$)][120]
\end{scope}

\begin{scope}[shift={(12,0)}]
\begin{scope}[blue,every node/.style={circle,fill=white,draw=blue,scale=.3}]
\draw[red]($(-60:3)$)--($(-60:3)+(1,0)$);
\draw[red]($(-60:3)+(-2,0)$)--($(-60:3)+(-1,0)$);
\Kagome[(-60:2)][0]*
\end{scope}
\Kagome[(0,0)][0]
\Kagome[(2,0)][180]
\Kagome[($(2,0)+(-60:2)$)][120]
\Kagome[(-120:2)][60]
\end{scope}

\begin{scope}[shift={(19,0)}]
\begin{scope}[blue,every node/.style={circle,fill=white,draw=blue,scale=.3}]
\draw[red]($(-60:3)+(-2,0)$)--($(-60:3)+(-1,0)$);
\draw[red]($(-60:3)+(-1,0)$)--($(-60:4)+(-1,0)$);
\Kagome[(-60:2)][0]*
\end{scope}
\Kagome[(0,0)][0]
\Kagome[(2,0)][180]
\Kagome[($(2,0)+(-60:2)$)][120]
\Kagome[(-120:2)][60]
\Kagome[(-60:4)][60]
\end{scope}
\end{tikzpicture}
\end{center}
Any Kagome lattice without triangular corners may be constructed this way, beginning with a single hexagonal face and adding further hexagons one at a time. 

From \Cref{cor:w3disp} the disparity of this encoding is
\begin{equation}
\Delta=TF-HF
\end{equation}
where $TF$ is the number of triangular faces and $HF$ is the number of hexagonal faces. A single hexagonal face has $\Delta=-1$. A hexagonal face added to it will change the delta by +1 as it will bring two triangular faces with it as shown earlier. From this it is clear that $\Delta=HF-2$ for a lattice with no triangular corners.

A lattice with triangular corners may be constructed by simply adding triangular faces to the boundary of a lattice with no such corners. Each added triangular corner will increase $\Delta$ by 1 so the general formula is therefore $\Delta=HF+TC-2$.

For the equivalent expression in terms of $TF$ rather than $HF$ consider the lattice with no triangular corners once more. When constructed as before it is clear that every Hexagonal face added also adds two triangular faces except for the first one so in this case:
\begin{equation}
TF=2HF-2.
\end{equation}
If triangular corners are added to such a lattice then they will each necessarily add another triangular face so in general the relationship is
\begin{equation}
TF=2HF+TC-2.
\end{equation}
Solving for $HF$ and substituting into the first expression in \cref{eq:Kagome Delta Formulas} yields the second.
\end{proof}

From the above we can see that the disparity of the encoding on a Kagome lattice actually grows with the lattice size and so, by \cref{thm:Delta Species relation}, does the number of distinct single particle species one can simultaneously define. As discussed in \cref{sec:species} the appropriate fusions of these species can produce any Pauli operator on the extra $(\mathbb{C}^2)^{\otimes\Delta}$ space attached to the encoding. To reduce errors, this space can be restricted to a single state by finding a set of commuting Paulis to act as a stabilizer, all that remains is to find a maximal set of particle species and to work out how to produce a stabilizer from them.


As with the square lattice encoding, vertices on which edge operators act with only one type of Pauli operator can serve as injection points for distinct species. These are only found on triangular corners on the lattice boundary and some lattices may not even have these so there must be other ways to define single particle species.

These other injection points take a more involved form. Consider a triangular face, the three edge operators around a triangular face will act on their vertex qubits with Pauli, $X$ and each of them will act on the face qubit with a different Pauli (see (\ref{eq:face_notation})). This argument also applies to triangular faces where edge operators act with $Y$ on vertex qubits, simply substitute $X$ with $Y$ where appropriate. Label these edge operators $E_X$, $E_Y$ and $E_Z$ according to their face qubit support. Also label the vertex qubits $q_X$, $q_Y$ and $q_Z$ according to the edge operator opposite, so $q_X$ is opposite $E_X$ etc. Now define an operator which acts with an $X$ on $q_Z$ and with a $Z$ on the face qubit. This operator will anticommute with the vertex operator on $q_Z$, all edge operators incident on $q_Z$ and will commute with all other edge and vertex operators. By \cref{def:Maj Species} this is a member of a single particle species. Consider a single operator defined similarly on $q_X$ ($X$ on $q_X$ and $X$ on the face qubit), it is also a single particle operator but it anticommutes with the particle on $q_Z$, by \cref{def:Distinct Species} this makes them members of the same species. This is confirmed by seeing that the two are related by the edge operator $E_Y$. See \cref{fig:Triangle injection} for illustration.

\begin{figure}[ht]
	\centering
	\begin{tikzpicture}[thick,>=stealth,line cap=round,scale=1.6]
	\node at (0,0)[circle,fill=black,scale=.4]{};
	\node at (1,0)[circle,fill=black,scale=.4]{};
	\node at (60:1)[circle,fill=black,scale=.4]{};
	\node at (.5,{0.5*tan(30)})[circle,fill=black,scale=.4]{};
	
	\draw ($(0,0)!.25!(1,0)$)--($(0,0)!.75!(1,0)$);
	\node at ($(0,0)!.15!(1,0)$)[scale=.7]{$X$};
	\node at ($(0,0)!.85!(1,0)$)[scale=.7]{$X$};
	\draw ($(0,0)!.25!(60:1)$)--($(0,0)!.75!(60:1)$);
	\node at ($(0,0)!.15!(60:1)$)[scale=.7]{$X$};
	\node at ($(0,0)!.85!(60:1)$)[scale=.7]{$X$};
	\draw ($(1,0)!.25!(60:1)$)--($(1,0)!.75!(60:1)$);
	\node at ($(1,0)!.15!(60:1)$)[scale=.7]{$X$};
	\node at ($(1,0)!.85!(60:1)$)[scale=.7]{$X$};

	\begin{scope}[shift={(.5,{0.5*tan(30)})}]
	\draw ($(0,0)!.8!(30:{0.5*tan(30)})$)--(30:{0.5*tan(30)});
	\draw ($(0,0)!.8!(150:{0.5*tan(30)})$)--(150:{0.5*tan(30)});
	\draw ($(0,0)!.8!(-90:{0.5*tan(30)})$)--(-90:{0.5*tan(30)});
	
	\node at ($(0,0)!.5!(30:{0.5*tan(30)})$)[scale=.7]{$X$};
	\node at ($(0,0)!.5!(-90:{0.5*tan(30)})$)[scale=.7]{$Z$};
	\node at ($(0,0)!.5!(150:{0.5*tan(30)})$)[scale=.7]{$Y$};
	
	\node at (30:.55)[scale=.9]{$E_X$};
	\node at (30:-.75)[scale=.9]{$q_X$};
	\node at (-90:.55)[scale=.9]{$E_Z$};
	\node at (-90:.9){$(a)$};
	\node at (-90:-.75)[scale=.9]{$q_Z$};
	\node at (150:.55)[scale=.9]{$E_Y$};
	\node at (150:-.75)[scale=.9]{$q_Y$};
	\end{scope}
	
	\begin{scope}[shift={(2,0)}]
	\node at (0,0)[circle,fill=black,scale=.4]{};
	\node at (1,0)[circle,fill=black,scale=.4]{};
	\node at (60:1)[circle,fill=black,scale=.4]{};
	\node at (.5,{0.5*tan(30)})[circle,fill=black,scale=.4]{};
	
	\node at (60:1)[scale=.8,circle,draw=black,fill=none]{};
	
	\draw ($(0,0)!.15!(1,0)$)--($(0,0)!.85!(1,0)$);
	\draw ($(0,0)!.15!(60:1)$)--($(0,0)!.85!(60:1)$);
	\draw ($(1,0)!.15!(60:1)$)--($(1,0)!.85!(60:1)$);
	
	\begin{scope}[shift={(.5,{0.5*tan(30)})}]
	\draw ($(0,0)!.4!(30:{0.5*tan(30)})$)--(30:{0.5*tan(30)});
	\draw ($(0,0)!.4!(150:{0.5*tan(30)})$)--(150:{0.5*tan(30)});
	\draw ($(0,0)!.4!(-90:{0.5*tan(30)})$)--(-90:{0.5*tan(30)});
	\node at (0,.15)[scale=.7]{$Z$};
	\node at (-90:-.77)[scale=.7]{$X$};
	\end{scope}
	\end{scope}
	
	\begin{scope}[shift={(3.5,0)}]
	\node at (0,0)[circle,fill=black,scale=.4]{};
	\node at (1,0)[circle,fill=black,scale=.4]{};
	\node at (60:1)[circle,fill=black,scale=.4]{};
	\node at (.5,{0.5*tan(30)})[circle,fill=black,scale=.4]{};
	
	\node at (0,0)[scale=.8,circle,draw=black,fill=none]{};
	
	\draw ($(0,0)!.15!(1,0)$)--($(0,0)!.85!(1,0)$);
	\draw ($(0,0)!.15!(60:1)$)--($(0,0)!.85!(60:1)$);
	\draw ($(1,0)!.15!(60:1)$)--($(1,0)!.85!(60:1)$);
	
	\begin{scope}[shift={(.5,{0.5*tan(30)})}]
	\draw ($(0,0)!.4!(30:{0.5*tan(30)})$)--(30:{0.5*tan(30)});
	\draw ($(0,0)!.4!(150:{0.5*tan(30)})$)--(150:{0.5*tan(30)});
	\draw ($(0,0)!.4!(-90:{0.5*tan(30)})$)--(-90:{0.5*tan(30)});
	\node at (0,.15)[scale=.7]{$X$};
	\node at (30:-.77)[scale=.7]{$X$};
	\node at (-90:.9){$(b)$};
	\end{scope}
	\end{scope}
	
	\begin{scope}[shift={(5,0)}]
	\node at (0,0)[circle,fill=black,scale=.4]{};
	\node at (1,0)[circle,fill=black,scale=.4]{};
	\node at (60:1)[circle,fill=black,scale=.4]{};
	\node at (.5,{0.5*tan(30)})[circle,fill=black,scale=.4]{};
	
	\node at (1,0)[scale=.8,circle,draw=black,fill=none]{};
	
	\draw ($(0,0)!.15!(1,0)$)--($(0,0)!.85!(1,0)$);
	\draw ($(0,0)!.15!(60:1)$)--($(0,0)!.85!(60:1)$);
	\draw ($(1,0)!.15!(60:1)$)--($(1,0)!.85!(60:1)$);
	
	\begin{scope}[shift={(.5,{0.5*tan(30)})}]
	\draw ($(0,0)!.4!(30:{0.5*tan(30)})$)--(30:{0.5*tan(30)});
	\draw ($(0,0)!.4!(150:{0.5*tan(30)})$)--(150:{0.5*tan(30)});
	\draw ($(0,0)!.4!(-90:{0.5*tan(30)})$)--(-90:{0.5*tan(30)});
	\node at (0,.15)[scale=.7]{$Y$};
	\node at (150:-.77)[scale=.7]{$X$};
	\end{scope}
	\end{scope}
	\end{tikzpicture}
	\caption{(a) Labelling of edge operators and qubits around a triangular face on the Kagome Lattice encoding. (b) Single particle operators on the circled vertices. These are all of the same species and may be transformed into each other by edge operators. On an equivalent face where the edge operators act on vertices with $Y$, the particle operators also act on vertices with $Y$.}\label{fig:Triangle injection}
\end{figure}
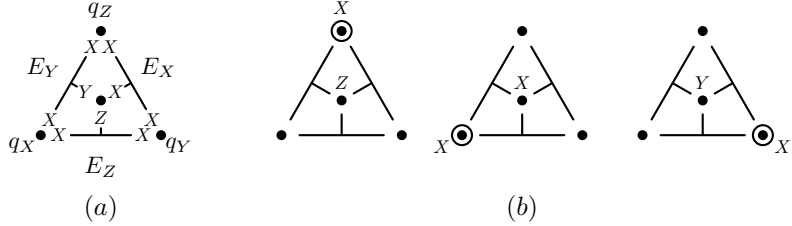

From this we see that, as well as triangular corners, distinct single particle species may be associated with any triangular face and injected at any of its vertices (including triangular faces that form triangular corners). Particles injected at different triangular faces and corners satisfy \cref{def:Distinct Species} as distinct species so we may have a set of simultaneously defined distinct species for every triangular face and triangular corner. Substituting the second expression in \cref{eq:Kagome Delta Formulas} into \cref{eq:Delta Species relation} reveals that this set is maximal and so the species may be used to define any operator on the excess Hilbert space. See \cref{fig:Kagome particles} for examples of single particle operators on a Kagome lattice encoding.

\begin{figure}[ht]
	\centering
	\begin{tikzpicture}[scale=1,thick,line cap=round,>=stealth,faint/.style={opacity=0.2}]
	\foreach \x in {0,60,120,180,240,300}{
		\node at (\x:1)[circle,fill=black,scale=.4]{};
	}
	\foreach \x in {0,120,240}{
		\draw[<->,faint] ($(\x:1)!0.2!(\x+60:1)$)--($(\x:1)!0.8!(\x+60:1)$);
		\draw[faint] ($(\x+60:1)!0.2!(\x+120:1)$)--($(\x+60:1)!0.8!(\x+120:1)$);
	}
	\begin{scope}[shift={(1,0)}]
	\node at (60:1)[circle,fill=black,scale=.4]{};
	\node at (-60:1)[circle,fill=black,scale=.4]{};
	\node at (0,{0.5/cos(30)})[circle,fill=black,scale=.4]{};
	\node at (0,{-0.5/cos(30)})[circle,fill=black,scale=.4]{};
	\draw[<->,faint] ($(60:1)!0.2!(120:1)$)--($(60:1)!0.8!(120:1)$);
	\draw[faint] ($(-60:1)!0.2!(-120:1)$)--($(-60:1)!0.8!(-120:1)$);
	
	\begin{scope}[shift={((0,{0.5/cos(30)}))}]
	\foreach \x in {-30,90,210}{
		\draw[faint] ($(0,0)!0.4!(\x:{0.5*tan(30)})$)--($(0,0)!1!(\x:{0.5*tan(30)})$);
	}
	\end{scope}
	
	\begin{scope}[shift={((0,-{0.5/cos(30)}))}]
	\foreach \x in {30,150,270}{
		\draw[faint] ($(0,0)!0.4!(\x:{0.5*tan(30)})$)--($(0,0)!1!(\x:{0.5*tan(30)})$);
	}
	\end{scope}
	\end{scope}
	
	\begin{scope}[shift={(2,0)}]
	\foreach \x in {0,60,120,180,240,300}{
		\node at (\x:1)[circle,fill=black,scale=.4]{};
	}
	\foreach \x in {0,120,240}{
		\draw[<->,faint] ($(\x:1)!0.2!(\x+60:1)$)--($(\x:1)!0.8!(\x+60:1)$);
		\draw[faint] ($(\x+60:1)!0.2!(\x+120:1)$)--($(\x+60:1)!0.8!(\x+120:1)$);
	}
	\begin{scope}[shift={(1,0)}]
	\node at (60:1)[circle,fill=black,scale=.4]{};
	\node at (-60:1)[circle,fill=black,scale=.4]{};
	\node at (0,{0.5/cos(30)})[circle,fill=black,scale=.4]{};
	\node at (0,{-0.5/cos(30)})[circle,fill=black,scale=.4]{};
	\draw[<->,faint] ($(0,0)!0.2!(60:1)$)--($(0,0)!0.8!(60:1)$);
	\draw[faint] ($(0,0)!0.2!(-60:1)$)--($(0,0)!0.8!(-60:1)$);
	\draw[<->,faint] ($(60:1)!0.2!(120:1)$)--($(60:1)!0.8!(120:1)$);
	\draw[faint] ($(-60:1)!0.2!(-120:1)$)--($(-60:1)!0.8!(-120:1)$);
	
	\begin{scope}[shift={((0,{0.5/cos(30)}))}]
	\foreach \x in {-30,90,210}{
		\draw[faint] ($(0,0)!0.4!(\x:{0.5*tan(30)})$)--($(0,0)!1!(\x:{0.5*tan(30)})$);
	}
	\end{scope}
	
	\begin{scope}[shift={((0,-{0.5/cos(30)}))}]
	\foreach \x in {30,150,270}{
		\draw[faint] ($(0,0)!0.4!(\x:{0.5*tan(30)})$)--($(0,0)!1!(\x:{0.5*tan(30)})$);
	}
	\draw[cbORANGE] ($(0,0)!0.4!(270:{0.5*tan(30)})$)--($(0,0)!1!(270:{0.5*tan(30)})$);
	\end{scope}
	\end{scope}
	\end{scope}
	
	\node at (-1,0)[circle,draw=cbBLUE,fill=none,scale=.8]{};
	\draw[cbBLUE] ($(-1,0)!0.2!(-120:1)$)--($(-1,0)!0.8!(-120:1)$);
	\draw[cbBLUE,<->] ($(-120:1)!0.2!(-60:1)$)--($(-120:1)!0.8!(-60:1)$);
	\node at (-1.3,0)[cbBLUE,scale=.7]{$X$};
	\node at ($(0,-.2)+(-120:1)$)[cbBLUE,scale=.7]{$Z$};
	\node at ($(0,-.2)+(-60:1)$)[cbBLUE,scale=.7]{$Z$};
	\node at ($(-60:1)+(30:0.38)$)[cbBLUE,scale=.7]{$X$};
	
	\node at ($(1,0)+(-60:1)$)[circle,draw=cbORANGE,fill=none,scale=.8]{};
	\draw[cbORANGE,<->] ($(-120:1)!0.2!(-60:1)+(2,0)$)--($(-120:1)!0.8!(-60:1)+(2,0)$);
	\draw[cbORANGE] ($(-120:1)!0.2!(-60:1)+(3,0)$)--($(-120:1)!0.8!(-60:1)+(3,0)$);
	\node at ($(1,-.3)+(-60:1)$)[cbORANGE,scale=.7]{$Y$};
	\node at ($(2,-.2)+(-60:1)$)[cbORANGE,scale=.7]{$Z$};
	\node at ($(3,-.2)+(-60:1)$)[cbORANGE,scale=.7]{$Z$};
	\node at ($(2.5,.45)+(-60:1)$)[cbORANGE,scale=.7]{$Z$};
	
	\node at ($(1,0)+(60:1)$)[circle,draw=cbGREEN,fill=none,scale=.8]{};
	\node at ($(1,.3)+(60:1)$)[cbGREEN,scale=.7]{$Y$};
	\node at ($(1,0)+(60:1)+(-150:.38)$)[cbGREEN,scale=.7]{$Y$};
	
	\node at ($(1,0)+(60:1)+(2,0)$)[circle,draw,fill=none,scale=.8]{};
	\node at ($(3,.3)+(60:1)$)[scale=.7]{\textbf{\textit{Y}}};
	
	\end{tikzpicture}
	\caption{Single particle operators on a Kagome lattice at the circled vertices. \textbf{Bold black:} A single particle operator at a triangular corner. \textcolor{cbORANGE}{Orange:} A single particle operator transported from a triangular corner by edge operators. \textcolor{cbGREEN}{Green:} A single particle operator defined at a triangular face as in \cref{fig:Triangle injection}. \textcolor{cbBLUE}{Blue:} A single particle operator transported from a triangular face via edge operators.}
	\label{fig:Kagome particles}
\end{figure}
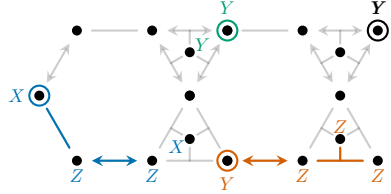

With a maximal set of single particle species we can now define stabilizers to restrict the excess Hilbert space of the encoding. To do this, simply group $2\Delta$ of the particle species into $\Delta$ disjoint pairs and fuse a each pair on any vertex to get $\Delta$ commuting Pauli operators. It is preferable to pair up particles with injection sites that are close to each other, that way their fusion will have a low Pauli weight, see \cref{fig:Kagome pairings} for illustration. As only $2\Delta$ species are involved in the pairings there are 2 left, these are then chosen to represent the single Majorana operators and the hole operators on the remaining encoded fermionic space.

\begin{figure}[ht]
	\centering
	\begin{tikzpicture}[scale=1,thick,line cap=round,>=stealth,faint/.style={opacity=0.2}]
	\foreach \x in {0,60,120,180,240,300}{
		\node at (\x:1)[circle,fill=black,scale=.4]{};
	}
	\foreach \x in {0,120,240}{
		\draw[<->,faint] ($(\x:1)!0.2!(\x+60:1)$)--($(\x:1)!0.8!(\x+60:1)$);
		\draw[faint] ($(\x+60:1)!0.2!(\x+120:1)$)--($(\x+60:1)!0.8!(\x+120:1)$);
	}
	\begin{scope}[shift={(1,0)}]
	\node at (60:1)[circle,fill=black,scale=.4]{};
	\node at (-60:1)[circle,fill=black,scale=.4]{};
	\node at (0,{0.5/cos(30)})[circle,fill=black,scale=.4]{};
	\node at (0,{-0.5/cos(30)})[circle,fill=black,scale=.4]{};
	\draw[<->,faint] ($(60:1)!0.2!(120:1)$)--($(60:1)!0.8!(120:1)$);
	\draw[faint] ($(-60:1)!0.2!(-120:1)$)--($(-60:1)!0.8!(-120:1)$);
	
	\begin{scope}[shift={((0,{0.5/cos(30)}))}]
	\foreach \x in {-30,90,210}{
		\draw[faint] ($(0,0)!0.4!(\x:{0.5*tan(30)})$)--($(0,0)!1!(\x:{0.5*tan(30)})$);
	}
	\end{scope}
	
	\begin{scope}[shift={((0,-{0.5/cos(30)}))}]
	\foreach \x in {30,150,270}{
		\draw[faint] ($(0,0)!0.4!(\x:{0.5*tan(30)})$)--($(0,0)!1!(\x:{0.5*tan(30)})$);
	}
	\end{scope}
	\end{scope}
	
	\begin{scope}[shift={(2,0)}]
	\foreach \x in {0,60,120,180,240,300}{
		\node at (\x:1)[circle,fill=black,scale=.4]{};
	}
	\foreach \x in {0,120,240}{
		\draw[<->,faint] ($(\x:1)!0.2!(\x+60:1)$)--($(\x:1)!0.8!(\x+60:1)$);
		\draw[faint] ($(\x+60:1)!0.2!(\x+120:1)$)--($(\x+60:1)!0.8!(\x+120:1)$);
	}
	\begin{scope}[shift={(1,0)}]
	\node at (60:1)[circle,fill=black,scale=.4]{};
	\node at (-60:1)[circle,fill=black,scale=.4]{};
	\node at (0,{0.5/cos(30)})[circle,fill=black,scale=.4]{};
	\node at (0,{-0.5/cos(30)})[circle,fill=black,scale=.4]{};
	\draw[<->,faint] ($(0,0)!0.2!(60:1)$)--($(0,0)!0.8!(60:1)$);
	\draw[faint] ($(0,0)!0.2!(-60:1)$)--($(0,0)!0.8!(-60:1)$);
	\draw[<->,faint] ($(60:1)!0.2!(120:1)$)--($(60:1)!0.8!(120:1)$);
	\draw[faint] ($(-60:1)!0.2!(-120:1)$)--($(-60:1)!0.8!(-120:1)$);
	
	\begin{scope}[shift={((0,{0.5/cos(30)}))}]
	\foreach \x in {-30,90,210}{
		\draw[faint] ($(0,0)!0.4!(\x:{0.5*tan(30)})$)--($(0,0)!1!(\x:{0.5*tan(30)})$);
	}
	\end{scope}
	
	\begin{scope}[shift={((0,-{0.5/cos(30)}))}]
	\foreach \x in {30,150,270}{
		\draw[faint] ($(0,0)!0.4!(\x:{0.5*tan(30)})$)--($(0,0)!1!(\x:{0.5*tan(30)})$);
	}
	\end{scope}
	\end{scope}
	\end{scope}
	
	\begin{scope}[shift={(-2,0)}]
	\foreach \x in {0,60,120,180,240,300}{
		\node at (\x:1)[circle,fill=black,scale=.4]{};
	}
	\foreach \x in {0,120,240}{
		\draw[<->,faint] ($(\x:1)!0.2!(\x+60:1)$)--($(\x:1)!0.8!(\x+60:1)$);
		\draw[faint] ($(\x+60:1)!0.2!(\x+120:1)$)--($(\x+60:1)!0.8!(\x+120:1)$);
	}
	\begin{scope}[shift={(1,0)}]
	\node at (60:1)[circle,fill=black,scale=.4]{};
	\node at (-60:1)[circle,fill=black,scale=.4]{};
	\node at (0,{0.5/cos(30)})[circle,fill=black,scale=.4]{};
	\node at (0,{-0.5/cos(30)})[circle,fill=black,scale=.4]{};
	\draw[<->,faint] ($(60:1)!0.2!(120:1)$)--($(60:1)!0.8!(120:1)$);
	\draw[faint] ($(-60:1)!0.2!(-120:1)$)--($(-60:1)!0.8!(-120:1)$);
	
	\begin{scope}[shift={((0,{0.5/cos(30)}))}]
	\foreach \x in {-30,90,210}{
		\draw[faint] ($(0,0)!0.4!(\x:{0.5*tan(30)})$)--($(0,0)!1!(\x:{0.5*tan(30)})$);
	}
	\end{scope}
	
	\begin{scope}[shift={((0,-{0.5/cos(30)}))}]
	\foreach \x in {30,150,270}{
		\draw[faint] ($(0,0)!0.4!(\x:{0.5*tan(30)})$)--($(0,0)!1!(\x:{0.5*tan(30)})$);
	}
	\end{scope}
	\end{scope}
	\end{scope}
	
	\node at (1,0)[circle,draw=cbBLUE,fill=none,scale=.8]{};
	\node at (1.3,0)[cbBLUE,scale=.7]{$Z$};
	\node at ($(.5,.45)+(-60:1)$)[cbBLUE,scale=.7]{$Z$};
	\node at ($(.5,-.48)+(60:1)$)[cbBLUE,scale=.7]{$Z$};
	
	\node at ($(3,0)+(-60:1)$)[circle,draw=cbORANGE,fill=none,scale=.8]{};
	\node at ($(3,-.3)+(-60:1)$)[cbORANGE,scale=.7]{$Z$};
	\node at ($(2.5,.45)+(-60:1)$)[cbORANGE,scale=.7]{$Y$};
	
	\node at ($(3,0)+(60:1)$)[circle,draw=cbGREEN,fill=none,scale=.8]{};
	\node at ($(3,.3)+(60:1)$)[cbGREEN,scale=.7]{$Z$};
	\node at ($(2.5,-.48)+(60:1)$)[cbGREEN,scale=.7]{$Y$};
	
	\end{tikzpicture}
	\caption{Possible stabilizers to restrict the excess Hilert space on a Kagome lattice encoding, formed by fusing particle species on the circled vertices. \textcolor{cbBLUE}{Blue:} Fusion of two species injected at adjacent triangular faces. \textcolor{cbGREEN}{Green:} Fusion of one species injected at a triangular corner and one injected at the same triangular face. \textcolor{cbORANGE}{Orange:} Same as Green. The particle species injected at the remaining triangular faces are then the Majorana and hole operators.}
	\label{fig:Kagome pairings}
\end{figure}
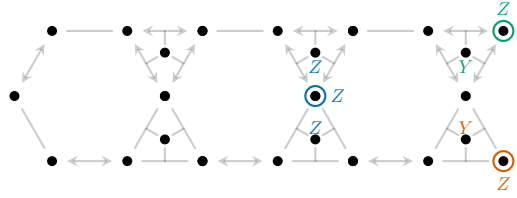



\subsection{The 3.12.12 Uniform Tiling}

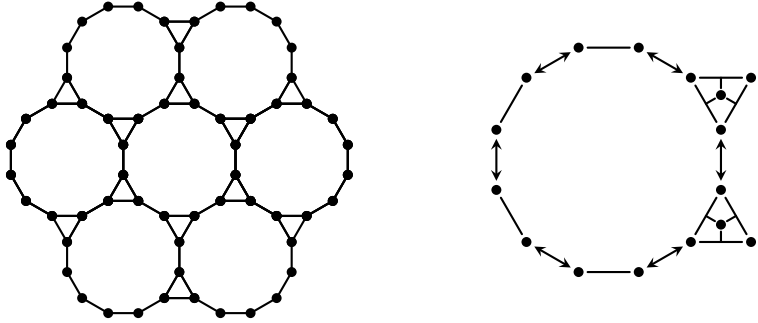
\begin{figure}[ht]
	\centering
	\begin{tikzpicture}[thick,>=stealth,line cap=round,every node/.style={circle,fill=black,scale=.4}]
	\begin{scope}[scale=.8]
	\foreach \x in {15,45,75,105,135,165,195,225,255,285,315,345}{
		\node at (\x:{0.5/sin(15)}){};
	}
	\foreach \x in {15,75,135,195,255,315}{
		\draw[] ($(\x:{0.5/sin(15)})!0.15!(\x+30:{0.5/sin(15)})$)--($(\x:{0.5/sin(15)})!0.85!(\x+30:{0.5/sin(15)})$);
		\draw[<->] ($(\x+30:{0.5/sin(15)})!0.15!(\x+60:{0.5/sin(15)})$)--($(\x+30:{0.5/sin(15)})!0.85!(\x+60:{0.5/sin(15)})$);
	}
	
	\begin{scope}[shift={(15:{0.5/sin(15)})}]
	\node at (60:1){};
	\node at (0,{0.5/cos(30)}) {};
	\draw[] ($(0,0)!0.15!(60:1)$)--($(0,0)!0.85!(60:1)$);
	\draw[] ($(60:1)!0.15!(120:1)$)--($(60:1)!0.85!(120:1)$);
	
	\begin{scope}[shift={((0,{0.5/cos(30)}))}]
	\foreach \x in {-30,90,210}{
		\draw ($(0,0)!0.4!(\x:{0.5*tan(30)})$)--($(0,0)!1!(\x:{0.5*tan(30)})$);
	}
	\end{scope}
	\end{scope}
	
	\begin{scope}[shift={(-15:{0.5/sin(15)})}]
	\node at (-60:1){};
	\node at (0,{-0.5/cos(30)}) {};
	\draw[] ($(0,0)!0.15!(-60:1)$)--($(0,0)!0.85!(-60:1)$);
	\draw[] ($(-60:1)!0.15!(-120:1)$)--($(-60:1)!0.85!(-120:1)$);
	
	\begin{scope}[shift={((0,-{0.5/cos(30)}))}]
	\foreach \x in {30,150,270}{
		\draw ($(0,0)!0.4!(\x:{0.5*tan(30)})$)--($(0,0)!1!(\x:{0.5*tan(30)})$);
	}
	\end{scope}
	\end{scope}
	\end{scope}
	
	\begin{scope}[scale=.4,shift={(-18,0)}]
	\twelvethree[(0,0)][0]
	\twelvethree[(0,0)][60]
	\twelvethree[(0,0)][-60]
	\twelvethree[($(2,0)+(30:1)+(-30:1)$)][0]
	\twelvethree[($(4,0)+(30:2)+(-30:2)$)][180]
	\twelvethree[($(4,0)+(30:2)+(-30:2)$)][120]
	\twelvethree[($(4,0)+(30:2)+(-30:2)$)][240]
	
	\twelvethree[($(1,2)+(120:1)+(60:1)+(-30:1)$)][0]
	\twelvethree[($(3,2)+(120:1)+(60:1)+(-30:2)+(30:1)$)][180]
	\twelvethree[($(1,-2)+(-120:1)+(-60:1)+(30:1)$)][0]
	\twelvethree[($(3,-2)+(-120:1)+(-60:1)+(30:2)+(-30:1)$)][180]
	\end{scope}
	\end{tikzpicture}
	\caption{The 3.12.12 Uniform Tiling and the unit cell of its encoding.}
	\label{fig:3.12.12}
\end{figure}

See \cref{fig:3.12.12} for the lattice structure and the unit cell of the encoding. Loops around dodecagonal faces are non-trivial Paulis and generate the stabilizer, loops around triangular faces are identity. The qubit to mode ratio is $<1.34$.

This tiling has the same even/odd face pattern as the Kagome lattice, with dodecagons instead of hexagons, accordingly the disparity of this encoding is
\begin{equation}
\Delta=DF + TC -2
\end{equation}
where $DF$ is the number of dodecagonal faces and $TC$ is the number of triangular corners. The argument in the proof for \cref{prop:kagome delta} is easily modified for this case.

\section{A Cubic Encoding}\label{sec:cubic}

We now introduce a specific 3d fermionic encoding on a cubic lattice. This encoding is a generalization of the 2d square encoding reviewed in Section \ref{LW encoding} in the following sense. We wish to define an orientation for each edge in the lattice as in the 2d case, with the condition that any 2d slice of the lattice has an edge orientation identical to a 2d encoding. There are a number of ways to do this.

Consider the $(0,0,0)$ corner of the cubic lattice. This corner has three edges extending away from it in the $\hat{x}$, $\hat{y}$ and  $\hat{z}$ direction. Specifying the orientation of those three edges completely specifies the orientation of all other edges, given the condition that any 2d slice looks identical to the edge orientation of a 2d encoding. Since we do not care about the case where all orientations are inverted, there are four different ways to orientate these three edges. In one case, all arrows point away from the corner, and in the three other cases, two arrows point away from the corner, and a third arrow points into the corner. This is illustrated in \cref{fig:fourCubes}. Thus there are four possible ways to give an orientation to the edges of the lattice.

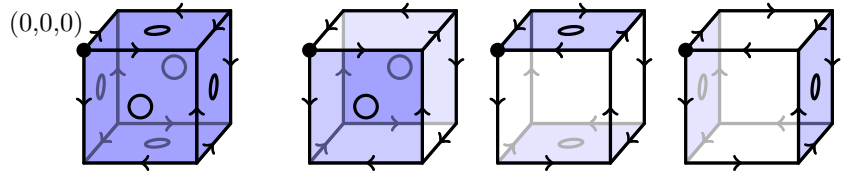
\begin{figure}[ht]
\begin{center}
\begin{tikzpicture}[scale=1.5]

\rotateReset{1.5}
\oddface[0][0]* 
\rotateSide{0}{0}{1.5}
\oddface*[-1][0]*
\rotateFlat{0}{0}{1.5}
\oddface*[0][-1]*
\rotateReset{1.5}
\oddface*[-.3][-.35]
\rotateSide{1}{0}{1.5}
\oddface[-1][0]
\rotateFlat{0}{1}{1.5}
\oddface[-1][-1]

\pgftransformreset

\begin{scope}[shift={(3,0)}]
\rotateReset{1.5}
\oddface[0][0]* 
\rotateSide{0}{0}{1.5}
\evenface*[-1][0]*
\rotateFlat{0}{0}{1.5}
\evenface[0][-1]*
\rotateReset{1.5}
\oddface*[-.3][-.35]
\rotateSide{1}{0}{1.5}
\evenface[-1][0]
\rotateFlat{0}{1}{1.5}
\evenface*[-1][-1]
\end{scope}

\begin{scope}[shift={(5.5,0)}]
\rotateReset{1.5}
\evenface[0][0]* 
\rotateSide{0}{0}{1.5}
\evenface*[-1][0]*
\rotateFlat{0}{0}{1.5}
\oddface[0][-1]*
\rotateReset{1.5}
\evenface*[-.3][-.35]
\rotateSide{1}{0}{1.5}
\evenface[-1][0]
\rotateFlat{0}{1}{1.5}
\oddface*[-1][-1]
\end{scope}

\begin{scope}[shift={(8,0)}]
\rotateReset{1.5}
\evenface[0][0]* 
\rotateSide{0}{0}{1.5}
\oddface*[-1][0]*
\rotateFlat{0}{0}{1.5}
\evenface*[0][-1]*
\rotateReset{1.5}
\evenface*[-.3][-.35]
\rotateSide{1}{0}{1.5}
\oddface[-1][0]
\rotateFlat{0}{1}{1.5}
\evenface[-1][-1]
\end{scope}

\node at (-.45,.975)[circle,fill=black,scale=0.6]{};
\node at (-.45-.5,.975+.35)[]{(0,0,0)};
\node at (3-.45,.975)[circle,fill=black,scale=0.6]{};
\node at (5.5-.45,.975)[circle,fill=black,scale=0.6]{};
\node at (8-.45,.975)[circle,fill=black,scale=0.6]{};
\end{tikzpicture}
\end{center}

\caption{Four possible orientations to the edges of the cubic lattice. Odd faces are coloured blue and have a circle in the center to denote the extra qubit. The rightmost cell is odd and the remaining three are even.}\label{fig:fourCubes}
\end{figure}

In the case of the 2d encoding, the orientations of the edges are in accordance with a checkerboard labelling of the faces, so that an even face has oriented edges circulating around it, and an odd face has all edges around it touching head to head or tail to tail. A similar checkerboard labelling is induced by ones choice of edge orientation for the corner $(0,0,0)$, with all three faces surrounding that corner being odd in the case where all edges point away from the corner, and with only one face being odd in the other three choices of edge orientation. 

Regardless of ones choice of orientation, a given cubic cell in the lattice will either have all six faces odd, which we call an \textit{odd cell}, or exactly two opposite faces odd, which we will call an \textit{even cell} (see \cref{fig:fourCubes}). There are no other possibilities. This is most easily seen by checking the four possible edge orientations of the $(0,0,0)$ corner in the case of a $2\times 2\times 2$ cubic lattice. The cells will be oriented in the cubic lattice such that every odd cell shares its faces only with even cells, and every even cell shares its odd faces only with odd cells, as illustrated in \cref{fig:cells}. Thus every odd face can either be associated with a unique odd cell, or else it is a face on the boundary of the cubic lattice. We call those odd faces on the boundary of the cubic lattice that are not also faces of an odd cell \textit{isolated odd faces}.

\begin{figure}
\centering
\begin{tikzpicture}
\rotateReset{1.5}
\oddface[0][0]* 
\rotateSide{0}{0}{1.5}
\oddface*[-1][0]*
\rotateFlat{0}{0}{1.5}
\oddface*[0][-1]*
\rotateReset{1.5}
\oddface*[-.3][-.35]*
\rotateSide{1}{0}{1.5}
\oddface[-1][0]*
\rotateFlat{0}{1}{1.5}
\oddface[-1][-1]*

\rotateReset*{1.5}
\begin{scope}[shift={(1,0)}]
\evenface*[0][0]* 
\rotateSide{0}{0}{1.5}
\oddface[-1][0]*
\rotateFlat{0}{0}{1.5}
\evenface[0][-1]*
\rotateReset{1.5}
\evenface[-.3][-.35]*
\rotateSide{1}{0}{1.5}
\oddface*[-1][0]
\rotateFlat{0}{1}{1.5}
\evenface*[-1][-1]*
\end{scope}

\rotateReset*{1.5}
\begin{scope}[shift={(-.3,-.35)}]
\oddface*[0][0]* 
\rotateSide{0}{0}{1.5}
\evenface[-1][0]*
\rotateFlat{0}{0}{1.5}
\evenface*[0][-1]*
\rotateReset{1.5}
\oddface[-.3][-.35]
\rotateSide{1}{0}{1.5}
\evenface*[-1][0]*
\rotateFlat{0}{1}{1.5}
\evenface[-1][-1]*
\end{scope}

\rotateReset*{1.5}
\begin{scope}[shift={(.7,-.35)}]
\evenface[0][0]* 
\rotateSide{0}{0}{1.5}
\evenface*[-1][0]*
\rotateFlat{0}{0}{1.5}
\oddface[0][-1]*
\rotateReset{1.5}
\evenface*[-.3][-.35]
\rotateSide{1}{0}{1.5}
\evenface[-1][0]
\rotateFlat{0}{1}{1.5}
\oddface*[-1][-1]*
\end{scope}

\rotateReset*{1.5}
\begin{scope}[shift={(0,1)}]
\evenface[0][0]* 
\rotateSide{0}{0}{1.5}
\evenface*[-1][0]*
\rotateFlat{0}{0}{1.5}
\oddface[0][-1]*
\rotateReset{1.5}
\evenface*[-.3][-.35]*
\rotateSide{1}{0}{1.5}
\evenface[-1][0]*
\rotateFlat{0}{1}{1.5}
\oddface*[-1][-1]
\end{scope}

\rotateReset*{1.5}
\begin{scope}[shift={(1,1)}]
\oddface*[0][0]* 
\rotateSide{0}{0}{1.5}
\evenface[-1][0]*
\rotateFlat{0}{0}{1.5}
\evenface*[0][-1]*
\rotateReset{1.5}
\oddface[-.3][-.35]*
\rotateSide{1}{0}{1.5}
\evenface*[-1][0]
\rotateFlat{0}{1}{1.5}
\evenface[-1][-1]
\end{scope}

\rotateReset*{1.5}
\begin{scope}[shift={(-.3,.65)}]
\evenface*[0][0]* 
\rotateSide{0}{0}{1.5}
\oddface[-1][0]*
\rotateFlat{0}{0}{1.5}
\evenface[0][-1]*
\rotateReset{1.5}
\evenface[-.3][-.35]
\rotateSide{1}{0}{1.5}
\oddface*[-1][0]*
\rotateFlat{0}{1}{1.5}
\evenface*[-1][-1]
\end{scope}

\rotateReset{1.5}
\begin{scope}[shift={(.7,.65)}]
\oddface[0][0]* 
\rotateSide{0}{0}{1.5}
\oddface*[-1][0]*
\rotateFlat{0}{0}{1.5}
\oddface*[0][-1]*
\rotateReset{1.5}
\oddface*[-.3][-.35]
\rotateSide{1}{0}{1.5}
\oddface[-1][0]
\rotateFlat{0}{1}{1.5}
\oddface[-1][-1]
\end{scope}

\end{tikzpicture}
\caption{The unit cell of the encoding which includes two odd cells (front top right, back bottom left) and six even cells in different orientations.}\label{fig:cells}
\end{figure}
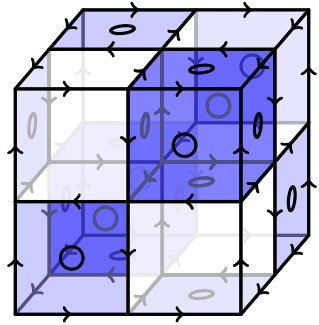

Having specified an orientation of the lattice, and an even/odd labelling of all of the faces and cells, we may now describe the encoding. In fact the encoding is almost exactly the same as in the 2d case. We associate a qubit with every vertex of the lattice, and a qubit with every odd face of the lattice. The vertex operator at vertex $i$ is given by $\tilde{V}_i = Z_i$, and the edge operator for edge $(i,j)$, with $i$ pointing to $j$ is given by $\tilde{E}_{ij}:= X_i Y_j P(i,j)$ ($\tilde{E}_{ji}:=-\tilde{E}_{ij}$) where $P(i,j)$ is the same Pauli operator $P$ on every face qubit adjacent to $(i,j)$, where if the edge $(i,j)$ is aligned to the $\hat{x}$ ($\hat{y}$, $\hat{z}$) direction, then $P =X$ ($Y$, $Z$) respectively. This ensures that any two edges, going in orthogonal directions, which share an adjacent odd face, will act on that face qubit with different Pauli operators. Finally, for every \emph{isolated odd face}, we must choose a single adjacent edge $(i,j)$, with $i$ pointing to $j$, and modify the definition of the operator for that edge to be $\tilde{E}_{ij}:= -X_i Y_j P(i,j)$. This ensures that the product of the edges around that face is equal to $1$ and not $-1$. \cref{fig:cubicOperators} illustrates the operators of this encoding.

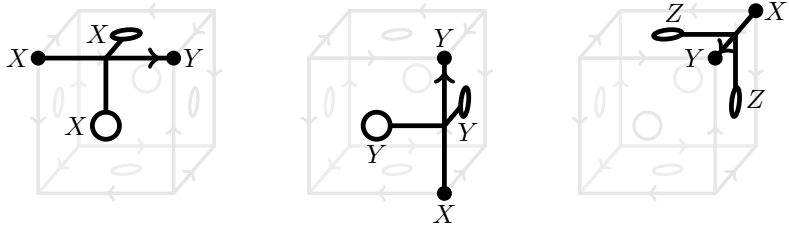
\begin{figure}
\centering
\begin{tikzpicture}[scale=1.8,]
\begin{scope}[white!90!black]
\rotateReset{1.8}
\whiteoddface[0][0] 
\rotateSide{0}{0}{1.8}
\whiteoddface*[-1][0]
\rotateFlat{0}{0}{1.8}
\whiteoddface*[0][-1]
\rotateReset{1.8}
\whiteoddface*[-.3][-.35]
\rotateSide{1}{0}{1.8}
\whiteoddface[-1][0]
\rotateFlat{0}{1}{1.8}
\whiteoddface[-1][-1]
\end{scope}

\begin{scope}[line width=1.8pt,
line cap=round,
decoration={
	markings,
	mark=at position .9 with {\arrow{>}}}]
\draw[postaction=decorate] (-.3,.65)--(.7,.65);
\draw (.2,.25)--(.2,.65)--(.32,.79);
\node at (-.3,.65)[circle,fill=black,scale=0.6]{};
\node at (-.45,.65){$X$};
\node at (.7,.65)[circle,fill=black,scale=0.6]{};
\node at (.85,.65){$Y$};
\node at (-.02,.15){$X$};
\draw[] (.2,.15) circle (.1cm);
\rotateFlat{0}{1}{1.8}
\draw[] (.5,-.5) circle (.1cm);
\node at (.29,-.5){$X$};
\end{scope}

\begin{scope}[shift={(2,0)}]
\begin{scope}[white!90!black]
\rotateReset{1.8}
\whiteoddface[0][0] 
\rotateSide{0}{0}{1.8}
\whiteoddface*[-1][0]
\rotateFlat{0}{0}{1.8}
\whiteoddface*[0][-1]
\rotateReset{1.8}
\whiteoddface*[-.3][-.35]
\rotateSide{1}{0}{1.8}
\whiteoddface[-1][0]
\rotateFlat{0}{1}{1.8}
\whiteoddface[-1][-1]
\end{scope}

\rotateReset{1.8}
\begin{scope}[line width=1.8pt,
line cap=round,
decoration={
	markings,
	mark=at position .9 with {\arrow{>}}}]
\draw[postaction=decorate] (.7,-.35)--(.7,.65);
\draw (.3,.15)--(.7,.15)--(.82,.29);
\node at (.7,-.35)[circle,fill=black,scale=0.6]{};
\node at (.7,-.5){$X$};
\node at (.7,.65)[circle,fill=black,scale=0.6]{};
\node at (.7,.8){$Y$};
\draw[] (.2,.15) circle (.1cm);
\node at (.2,-.06){$Y$};
\node at (.875,.325-.22){$Y$};
\rotateSide{1}{0}{1.8}
\draw[] (-.5,.5) circle (.1cm);

\end{scope}

\end{scope}

\begin{scope}[shift={(4,0)}]
\begin{scope}[white!90!black]
\rotateReset{1.8}
\whiteoddface[0][0] 
\rotateSide{0}{0}{1.8}
\whiteoddface*[-1][0]
\rotateFlat{0}{0}{1.8}
\whiteoddface*[0][-1]
\rotateReset{1.8}
\whiteoddface*[-.3][-.35]
\rotateSide{1}{0}{1.8}
\whiteoddface[-1][0]
\rotateFlat{0}{1}{1.8}
\whiteoddface[-1][-1]
\end{scope}

\rotateReset{1.8}
\begin{scope}[line width=1.8pt,
line cap=round,
decoration={
	markings,
	mark=at position .9 with {\arrow{>}}}]
\draw[postaction=decorate] (1,1)--(.7,.65);
\node at (1.15,1){$X$};
\node at (.55,.65){$Y$};
\node at (.4,.98){$Z$};
\node at (1,.35){$Z$};
\draw (.45,.825)--(.85,.825)--(.85,.425);
\node at (1,1)[circle,fill=black,scale=0.6]{};
\node at (.7,.65)[circle,fill=black,scale=0.6]{};
\rotateFlat{0}{1}{1.8}
\draw[] (.5,-.5) circle (.1cm);
\rotateSide{1}{0}{1.8}
\draw[] (-.5,-.5) circle (.1cm);
\end{scope}
\end{scope}

\end{tikzpicture}
\caption{Edge operators of the cubic encoding along edges aligned in the $x$, $y$ and $z$ directions (shown from left to right). If an edge is part of an isolated odd face then it will only act on one face qubit, if it is only part of two even faces then it will only act on vertex qubits.}\label{fig:cubicOperators}
\end{figure}
The maximum weight of the edge operators in this encoding is $4$. In the bulk the number of qubits per mode is $2.5$.

The disparity of the encoding, $\Delta$, is more complicated to compute than for the planar cases. In particular, $D(K)$ is not simply equal to the number of odd faces, as in the square lattice encoding.

\begin{figure}
	\centering
	\begin{tikzpicture}
	\begin{scope}[white!90!black]
	\rotateReset{1.8}
	\whiteoddface[0][0] 
	\rotateSide{0}{0}{1.8}
	\whiteoddface*[-1][0]
	\rotateFlat{0}{0}{1.8}
	\whiteoddface*[0][-1]
	\rotateReset{1.8}
	\whiteoddface*[-.3][-.35]
	\rotateSide{1}{0}{1.8}
	\whiteoddface[-1][0]
	\rotateFlat{0}{1}{1.8}
	\whiteoddface[-1][-1]
	\end{scope}
	
	\rotateReset{1.8}
	\begin{scope}[line width=1.8pt,
	line cap=round,
	decoration={
		markings,
		mark=at position .9 with {\arrow{>}}}]
	\draw[postaction=decorate] (.7,-.35)--(.7,.65);
	\draw (.7,.15)--(.82,.29);
	\node at (.875,.325-.22){$Y$};
	\rotateSide{1}{0}{1.8}
	\draw[] (-.5,.5) circle (.1cm);
	
	\rotateReset*{1.8}
	\begin{scope}[shift={(-1,0)}]
	\draw[postaction=decorate] (.7,.65)--(.7,-.35);
	\draw (.7,.15)--(.82,.29);
	\node at (.875,.325-.22){$Y$};
	\rotateSide{1}{0}{1.8}
	\draw[] (-.5,.5) circle (.1cm);
	\end{scope}
	
	\rotateReset*{1.8}
	\draw[postaction=decorate] (-.3,.65)--(.7,.65);
	\draw (.2,.65)--(.32,.79);
	\rotateFlat{0}{1}{1.8}
	\draw[] (.5,-.5) circle (.1cm);
	\node at (.29,-.5){$X$};
	
	\rotateReset*{1.8}
	\begin{scope}[shift={(0,-1)}]
	\draw[postaction=decorate] (.7,.65)--(-.3,.65);
	\draw (.2,.65)--(.32,.79);
	\rotateFlat{0}{1}{1.8}
	\draw[] (.5,-.5) circle (.1cm);
	\node at (.29,-.5){$X$};
	\end{scope}
	\end{scope}
	
	\node at (1.5-.15,.5-.175)[]{=};
	\node at (1.5-.15,-.7)[]{$(a)$};
	\node at (4.5-.15,-.7)[]{$(b)$};
	
	\begin{scope}[shift={(2,0)}]
	\begin{scope}[white!90!black]
	\rotateReset{1.8}
	\whiteoddface[0][0] 
	\rotateSide{0}{0}{1.8}
	\whiteoddface*[-1][0]
	\rotateFlat{0}{0}{1.8}
	\whiteoddface*[0][-1]
	\rotateReset{1.8}
	\whiteoddface*[-.3][-.35]
	\rotateSide{1}{0}{1.8}
	\whiteoddface[-1][0]
	\rotateFlat{0}{1}{1.8}
	\whiteoddface[-1][-1]
	\end{scope}
	
	\rotateReset{1.8}
	\begin{scope}[line width=1.8pt,
	line cap=round,
	decoration={
		markings,
		mark=at position .9 with {\arrow{>}}}]
	\draw[postaction=decorate] (1,1)--(1,0);
	\draw (.88,.36)--(1,.5);
	\node at (.875,.325+.22){$Y$};
	\rotateSide{1}{0}{1.8}
	\draw[] (-.5,.5) circle (.1cm);

	\rotateReset{1.8}
	\begin{scope}[shift={(-2,0)}]
	\draw[postaction=decorate] (1,0)--(1,1);
	\draw (.88,.36)--(1,.5);
	\node at (.875,.325+.22){$Y$};
	\rotateSide{1}{0}{1.8}
	\draw[] (-.5,.5) circle (.1cm);
	\end{scope}

	\rotateReset{1.8}
	\begin{scope}[shift={(-1,0)}]
	\draw[postaction=decorate] (1,1)--(0,1);
	\draw (.38,.86)--(.5,1);
	\rotateFlat{0}{1}{1.8}
	\draw[] (.5,-.5) circle (.1cm);
	\node at (.71,-.5){$X$};
	\end{scope}
	
	\rotateReset{1.8}
	\begin{scope}[shift={(-1,-1)}]
	\draw[postaction=decorate] (0,1)--(1,1);
	\draw (.38,.86)--(.5,1);
	\rotateFlat{0}{1}{1.8}
	\draw[] (.5,-.5) circle (.1cm);
	\node at (.71,-.5){$X$};
	\end{scope}
	
	\draw[very thick,white!90!black] (-1.3,.65)--(-.3,.65);
	\draw[very thick,white!90!black] (-.3,-.35)--(-.3,.65);
	
	\begin{scope}[shift={(1.5-.15,-.175)}]
	\draw[postaction=decorate] (0,0)--(0,1);
	\draw[postaction=decorate] (0,1)--(1,1);
	\draw[postaction=decorate] (1,1)--(1,0);
	\draw[postaction=decorate] (1,0)--(0,0);
	\node at (1,1)[circle,fill=black,scale=0.6]{};
	\node at (1.15,1){$Z$};
	\node at (0,1)[circle,fill=black,scale=0.6]{};
	\node at (0,1.15){$Z$};
	\node at (1,0)[circle,fill=black,scale=0.6]{};
	\node at (1,-.15){$Z$};
	\node at (0,0)[circle,fill=black,scale=0.6]{};
	\node at (-.15,0){$Z$};
	\draw (-.5,.5) circle (.1cm);
	\draw (0,.5)--(-.4,.5);
	\node at (-.5,.72){$Y$};
	\draw (1.5,.5) circle (.1cm);
	\draw (1,.5)--(1.4,.5);
	\node at (1.5,.72){$Y$};
	\draw (.5,1.5) circle (.1cm);
	\draw (.5,1)--(.5,1.4);
	\node at (.72,1.5){$X$};
	\draw (.5,-.5) circle (.1cm);
	\draw (.5,0)--(.5,-.4);
	\node at (.72,-.5){$X$};
	
	\rotateSide{0}{0}{1.8}
	\draw[] (-.5,.5) circle (.1cm);
	\draw (0,.5)--(-.4,.5);
	\node at (-1,.65){$Y$};
	\rotateSide{1}{0}{1.8}
	\draw[] (-.5,.5) circle (.1cm);
	\draw (0,.5)--(-.4,.5);
	\node at (-.5,.72){$Y$};
	\rotateFlat{-1}{0}{1.8}
	\draw[] (.5,.5) circle (.1cm);
	\draw (.5,0)--(.5,.4);
	\node at (.28,.5){$X$};
	\rotateFlat{0}{1}{1.8}
	\draw[] (.5,.5) circle (.1cm);
	\draw (.5,0)--(.5,.4);
	\node at (.72,.5){$X$};
	\end{scope}
	\end{scope}
	\end{scope}

	\end{tikzpicture}
	\caption{Loop operators around odd and even faces in the cubic encoding. (a) shows the identical loop operators around odd faces opposite to each other on an odd cell. Loop operators around odd cells aligned in different directions will have a similar form only with different Paulis. Loop operators around isolated odd faces are identity. (b) shows a loop operator around an even face. Similarly, operators oriented in different directions will have the same shape but different Paulis. Loop operators around even faces on the lattice boundary will have ``hanging'' Paulis omitted.}\label{fig:pauliFaceLoop}
\end{figure}
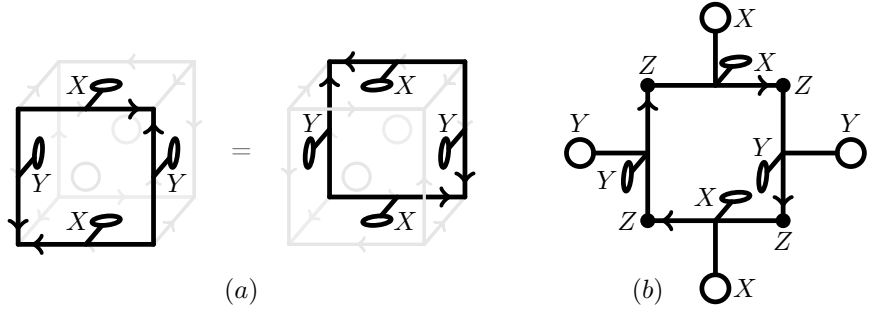

\begin{lemma}
\begin{equation}\label{eq:rankOfKernel}
D(K) = IOF + 2OC.
\end{equation}
Where $IOF$ denotes the number of isolated odd faces, and $OC$ denote the number of odd cells.
\end{lemma}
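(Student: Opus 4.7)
The plan is to exhibit a set of $IOF+2OC$ generators for $K$ and to establish both linear independence and spanning in one go via a rank count using Theorem 1. The proposed generators are: (i) for each of the $IOF$ isolated odd faces $f$, the face cycle $\ell_f$, which lies in $K$ because $\sigma(\ell_f)=I$ per the description accompanying \cref{fig:pauliFaceLoop}; and (ii) for each odd cell $C$ and each of two chosen directions $d\in\{\hat x,\hat y\}$, the pair-product $\ell_f\ell_{f'}$, where $(f,f')$ are the two opposite odd faces of $C$ in direction $d$. Each such pair-product lies in $K$ because opposite odd-face loops on an odd cell are represented by the same Pauli. The third $(\hat z)$ pair-product is omitted, since the cell relation $\prod_{f\in\partial C}\ell_f=I$ in $\mathcal{C}_G$ makes the three pair-products linearly dependent.

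Next I would compute $D(\mathcal{C}_G)$ and $D(S)$ separately. For the standard cubic lattice embedded as a topological $3$-ball, Euler's relation $|\mathbf{V}|-|\mathbf{E}|+F-C=1$ together with Proposition 1 gives $D(\mathcal{C}_G)=|\mathbf{E}|-|\mathbf{V}|+1=F-C$. The stabilizer group $S=\sigma(\mathcal{C}_G)$ is generated by the images of the face cycles, which reduce to $3OC$ distinct odd-cell Paulis $P_C^{(d)}$ (one per odd cell per direction, with opposite faces identified) together with $EF$ even-face Paulis $Q_g$, since isolated odd faces contribute the identity. The cell relations of $\mathcal{C}_G$ descend via $\sigma$ to relations on $S$: odd-cell relations become the trivial identity $\prod_d(P_C^{(d)})^2=I$, while each even cell contributes a non-trivial relation involving its $2$ odd-face Paulis and its $4$ even-face Paulis.

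The technical heart of the proof is to show that the $EC$ even-cell relations are $\mathbb{Z}_2$-independent in the relation module of $S$. A non-trivial combination $\sum_{C_e\in D}(\text{relation of }C_e)=0$ corresponds to a subset $D$ of even cells whose collective even-face contributions all cancel, and the alternating odd/even-cell adjacency together with the ball topology would then force $D=\varnothing$. Granting this, $D(S)=3OC+EF-EC$, and
\begin{align*}
D(K) &= D(\mathcal{C}_G)-D(S) \\
     &= (F-C)-(3OC+EF-EC) \\
     &= OF - 4OC \\
     &= (6OC+IOF)-4OC \;=\; IOF+2OC,
\end{align*}
using $F=EF+OF$, $C=OC+EC$, and $OF=6OC+IOF$ (each odd cell contributes six odd faces, the remainder being isolated by definition). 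The main obstacle is precisely this independence claim for the even-cell relations --- the one step not settled by local inspection of the encoding, requiring instead a global combinatorial/topological argument about the cubic lattice.
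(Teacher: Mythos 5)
There is a genuine gap, and it sits exactly where your approach appears to gain leverage. Recasting the problem as $D(K)=D(\mathcal{C}_G)-D(S)$ and computing $D(S)$ from a presentation of $S$ is a legitimate reformulation (and your arithmetic, together with the identities $F=EF+OF$, $C=OC+EC$, $OF=6OC+IOF$, is correct), but it does not avoid the hard direction --- it only relabels it. Exhibiting $EC$ \emph{independent} relations among your $3OC+EF$ generators of $S$ only yields $D(S)\le 3OC+EF-EC$, i.e.\ $D(K)\ge IOF+2OC$; this is the dual of the (easy) independence claim for your proposed generators of $K$. To conclude $D(S)=3OC+EF-EC$ you must also show that the even-cell relations, together with the identifications already absorbed into the generating set, \emph{exhaust} all relations in $S$ --- equivalently, that every element of $K$ lies in the span of the isolated-odd-face cycles and the opposite-odd-face pair products. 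You never address this inequality. It is precisely the substantive content of the paper's proof, which argues that any cycle in $K$ must have its edges pair up head-to-head and tail-to-tail at each vertex, that such a pair of edges bounds a unique common odd cell, hence that the cycle decomposes into a product of odd-face cycles localized to individual odd cells (plus isolated odd faces), after which a finite case check on a single odd cell finishes the job. Without some version of that argument your computation establishes only a lower bound on $D(K)$.

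A secondary point: even the step you do flag as the ``technical heart'' is left unproven, and your framing of it is slightly off. What is needed is independence of the even-cell relations \emph{modulo} the relations already used to collapse the generating set (isolated odd faces mapping to $I$, opposite odd faces on an odd cell being identified); this does admit a clean argument --- each even cell contributes four even faces, no even face appears in the collapsed relations, and a nonempty family of even cells whose even faces all cancel pairwise would have to propagate along a band of even cells until it hits the lattice boundary, a contradiction --- but you should supply it rather than defer it. The paper's route proves independence of the generators of $K$ directly by observing that an odd face shares edges only with odd faces bounding a common odd cell, which is considerably shorter.
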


\begin{proof}

Since $K$ is an abelian finitely generated group, its rank $D(K)$ is the size of a maximal independent subset of elements.

Note that for cycles $a$ around isolated odd faces $\sigma(a) = 1$, and so $a \in K$. Furthermore, for a cycle $a$ around an odd face that is bounding an odd cell, the cycle $b$ on the opposite face of that cell satisfies $\sigma(a)=\sigma(b)$ (see Figure \ref{fig:pauliFaceLoop}), and so the cycle $ab$ around that pair of faces is in $K$.

Consider a subset $g \subseteq K$ consisting of the cycles around:
\begin{itemize}
\item each isolated odd face.
\item one pair of opposite faces for every odd cell.
\item a second different pair of opposite faces for every odd cell
\end{itemize}
We claim that $K = \langle g \rangle$, and that no element of $g$ can be generated by any other elements of $g$. Thus $D(K)=|g| = IOF+2OC$. 

First we argue that every element of $g$ is independent. This can be seen straightforwardly by noting that every odd face only shares edges with odd faces bounding a common odd cell. Thus all the isolated odd faces are independent. Furthermore each cycle $ab$ around a pair of opposite odd faces in $g$ only shares edges with the other pair of opposite odd faces $cd \in g$, and $ab \neq cd$. Thus all elements are independent.

Secondly we argue that $K$ is generated by $g$, via proof by contradiction. Assume there exists an element $a\in K$ that is not in $\langle g \rangle$. We first note that there exists a set of odd face cycles $F$ such that 
\begin{equation}
a = \prod_{b\in F} b
\end{equation}
This can be seen by noting that if a cycle $a$ is in $K$ then for every edge $e_1$ in $a$ pointing into (away from) a vertex $v$ there exists another edge $e_2$ in $a$ which also points into (away from resp.) vertex $v$. Furthermore by inspection $e_1$ and $e_2$ must bound a unique common odd cell. Thus $a$ may be decomposed into a product of cycles, where each of these cycles are confined to the edges bounding a unique odd cell. Such cycles may be generated by the odd face cycles bounding the odd cell. 

For any odd cell $c$, we may define $F_c= \{b | b \in F \textrm{ and $b$ bounds $c$} \}$, and the corresponding product
\begin{align*}
a_c := \prod_{b \in F_c} b .
\end{align*}
W.l.o.g. we may assume to have chosen an $a$ s.t. no $b \in F$ is an isolated odd face. Thus $a = \prod_c a_c$. We note that any two odd face cycles $b$ and $b'$ that bound different odd cells have representations $\sigma(b)$ and $\sigma(b')$ that act on disjoint qubits. It must follow that $a_c \in K$. It can be seen by enumerating cases that all possible forms of $a_c$, ie all possible combinations of products of face cycles bounding a given odd cell whose product yields an element of $K$, may be generated by elements in $g$. Thus $a \in \langle g \rangle$, which is a contradiction.

\end{proof}

Given this expression for $D(K)$, the disparity $\Delta$ may now be determined from counting arguments.

\begin{theorem}
Let an odd corner vertex of a cubic lattice be a corner vertex of the lattice whose associated corner cell is odd. Given a cubic lattice encoding as defined above, the disparity is given by
$$\Delta = \frac{OCV}{2}-1 \in \{-1,0,1,3\} $$ 
Where $OCV$ denotes the number of odd corner vertices.
\end{theorem}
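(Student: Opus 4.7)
The plan is to reduce the disparity to a concrete counting identity and then verify it by case analysis on the parities of $L_x, L_y, L_z$ and on the four inequivalent edge orientations. Combining the general formula $\Delta = (N-M) - D(\mathcal{C}_G) + D(K)$ with the encoding-specific inputs $N-M = 6\,OC+IOF$ (every odd cell contributes $6$ odd faces, and no two odd cells share a face), $D(\mathcal{C}_G) = |\mathbf{E}|-|\mathbf{V}|+1$, and the preceding lemma $D(K) = IOF+2\,OC$, one obtains
\begin{equation*}
\Delta \;=\; 8\,OC + 2\,IOF + |\mathbf{V}| - |\mathbf{E}| - 1.
\end{equation*}
So the theorem reduces to the combinatorial identity $16\,OC + 4\,IOF + 2(|\mathbf{V}|-|\mathbf{E}|) - 2 = OCV$, together with $OCV\in\{0,2,4,8\}$.

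Next I would determine the parity pattern of odd cells. Applying the 2D face-parity analysis in each coordinate plane shows that for the representative orientation ``all three edges pointing out of $(0,0,0)$'', the $xy$-face at $(i,j,k)$ is odd iff $i+j$ is even, the $xz$-face iff $i+k$ is even, and the $yz$-face iff $j+k$ is even. Consequently a cell with minimum corner $(i,j,k)$ has all six faces odd iff the three pair-sums are all even iff $i,j,k$ share a common parity modulo~$2$. The other three admissible orientations give analogous criteria with one of the three parities flipped. Writing $\alpha = L_x\bmod 2$ (similarly $\beta,\gamma$) and $P_u = \lfloor L_u/2\rfloor$, $Q_u = \lfloor(L_u-1)/2\rfloor$ for the counts of even and odd indices along axis $u$ in $\{0,\dots,L_u-2\}$, we get $OC = P_x P_y P_z + Q_x Q_y Q_z$ for the chosen orientation. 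The corner cell at vertex $(\epsilon_x(L_x-1),\epsilon_y(L_y-1),\epsilon_z(L_z-1))$ has minimum corner $(\epsilon_x(L_x-2),\epsilon_y(L_y-2),\epsilon_z(L_z-2))$, whose three parities are $(\epsilon_x\alpha, \epsilon_y\beta, \epsilon_z\gamma)$, so it is odd iff these three coincide, yielding $OCV = (2-\alpha)(2-\beta)(2-\gamma) + \alpha\beta\gamma$. Direct enumeration over $(\alpha,\beta,\gamma)\in\{0,1\}^3$ (and over the other three orientations, where one factor $(2-\ast)$ and the matching $\ast$ swap roles) shows $OCV\in\{0,2,4,8\}$.

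The remaining ingredient is $IOF$: it counts boundary odd faces that lie on \emph{even} (not odd) cells, i.e., $IOF = OF - (\text{boundary odd faces of odd cells})$, where the total odd-face count $OF$ admits a similar product-sum expression in the $P_u, Q_u$ via the parity rule. Each boundary even cell contributes either $0$, $1$, or $2$ to $IOF$ depending on whether neither, one, or both of its opposing odd-face pair lie on the lattice boundary; this reduces to a one-dimensional parity count along the axis perpendicular to the odd-face pair. With explicit expressions for $OC$, $IOF$, $|\mathbf{V}|-|\mathbf{E}| = L_xL_y+L_yL_z+L_xL_z-2L_xL_yL_z$, and $OCV$ in hand, the identity $\Delta = OCV/2-1$ follows from a direct case check over the eight parity classes, completing the proof. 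The principal obstacle is the bookkeeping for $IOF$: the discrimination of boundary even cells contributing $1$ vs $2$ must be carefully enumerated, since it depends on whether the axis perpendicular to the odd-face pair has length $\geq 3$ or exactly $2$, and on the parity of that length.
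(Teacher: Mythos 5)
Your reduction coincides with the paper's own starting point: combining $\Delta=(N-M)-(D(\mathcal{C}_G)-D(K))$ with $N-M=OF=6\,OC+IOF$, the Euler-characteristic evaluation of $D(\mathcal{C}_G)$, and the lemma $D(K)=IOF+2\,OC$ is exactly how the paper proceeds, and your expression $\Delta=8\,OC+2\,IOF+|\mathbf{V}|-|\mathbf{E}|-1$ is algebraically equivalent to its intermediate form $\Delta=OF-EF-3\,OC+EC$. From there the routes diverge. You propose to evaluate $OC$, $IOF$ and $|\mathbf{V}|-|\mathbf{E}|$ globally, as explicit functions of the side lengths and the orientation, and to verify the identity by enumerating parity classes. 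The paper instead localizes the count: it writes $\Delta=\sum_v\Delta_v$ with $\Delta_v=OF_v/4-EF_v/4-3\,OC_v/8+EC_v/8$ and checks, from the finitely many possible vertex neighbourhoods, that $\Delta_v=0$ for bulk, face and edge vertices while $\Delta_v=3/8$ at odd corners and $\Delta_v=-1/8$ at even corners, giving $\Delta=\tfrac{3}{8}OCV-\tfrac{1}{8}(8-OCV)$ immediately. That device makes the boundary bookkeeping --- in particular the $IOF$ accounting you single out as the principal obstacle --- disappear entirely, because isolated odd faces never have to be identified or counted separately.

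As written, though, your argument has a genuine gap exactly where you flag one: the decisive case check is asserted rather than performed, and the $IOF$ enumeration (distinguishing boundary even cells contributing $0$, $1$ or $2$, with special treatment of side length $2$) is where essentially all the remaining work lives. A symptom of how error-prone this bookkeeping is: the target identity you state already contains an arithmetic slip --- it should read $16\,OC+4\,IOF+2(|\mathbf{V}|-|\mathbf{E}|)=OCV$, without the $-2$. The parity analysis you do carry out is correct: odd cells are those whose minimum corner has all coordinates of equal parity, so $OC=P_xP_yP_z+Q_xQ_yQ_z$ and $OCV=(2-\alpha)(2-\beta)(2-\gamma)+\alpha\beta\gamma$ for the all-arrows-out orientation, which does yield $OCV\in\{0,2,4,8\}$ over the four orientations (the paper obtains this by a coordinate-free argument about alternating columns of cells). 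So the membership claim $\Delta\in\{-1,0,1,3\}$ is within reach of your method once the main identity is known, but the main identity itself remains unproved. Either complete the explicit polynomial computation, or adopt the paper's local weighting, which reduces the entire problem to inspecting a handful of vertex neighbourhoods.
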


\begin{proof}
Let $V$ denote the number of vertices, $E$ the number of edges, $F$ the number of faces, $C$ the number of cells, $OF/EF$ the number of odd/even faces, $OC/EC$ the number of odd/even cells, $IOF$ the number of isolated odd faces, $N$ the number of qubits, and $M$ the number of fermionic modes.

Note that for this encoding
\begin{equation}
N-M = OF.
\end{equation}
Note also that
\begin{equation}
IOF = OF - 6OC.
\end{equation}
Note also that the Euler characteristic of a cubic lattice is given by 
\begin{equation}
V-E+F-C=1.
\end{equation}
This can be seen most easily by counting the edges vertices faces and cells of a cubic lattice with a single cell, and noting that the Euler characteristic is an invariant of the lattice size, since it is a topological invariant. Thus, using Equation \ref{eq:circuitRank} we retrieve
\begin{equation}
D(\mathcal{C}_G) = F-C
\end{equation}
Substituting these expressions, and \cref{eq:rankOfKernel}, into \cref{eq:disparity2} gives
\begin{equation}
\Delta=OF-EF-3OC+EC
\end{equation}
We need only count odd/even faces and cells. For a given vertex in the lattice, we define:
\begin{equation}
\Delta_v = OF_v/4-EF_v/4 -3OC_v/8+EC_v/8
\end{equation}
where $OF_v, EF_v, OC_v, EC_v$ are the number of odd faces, even faces, odd cells and even cells, respectively, containing the vertex $v$. $\Delta_v$ counts the number of odd/even faces and cells per vertex, with the factors of $1/4$ or $1/8$ corresponding to the number of other vertices sharing respectively a face or a cell. Thus
\begin{equation}
\Delta = \sum_v \Delta_v
\end{equation}
We now proceed by computing $\Delta_v$ for vertices in the bulk ($deg(v)=6$), on the face ($deg(v)=5$), on an edge ($deg(v)=4$), and on a corner ($deg(v)=3$) of the cubic lattice, by inspection of the vertex neighbourhoods illustrated in \cref{fig:vertexNeighbourhoods}.
For a vertex $v$ in the bulk of the lattice
\begin{equation}
\Delta_v = 6/4- 6/4 - 3*2/8 + 6/8 =0 .
\end{equation}
For a vertex $v$ on the face of the lattice
\begin{equation}
\Delta_v = 4/4 - 4/4 -3*1/8+ 3/8 = 0.
\end{equation}
For a vertex $v$ on the edge of the lattice, either
\begin{equation}
\Delta_v = 2/4 - 3/4 -3*0/8+ 2/8 = 0,
\end{equation}
or
\begin{equation}
\Delta_v = 3/4 - 2/4 -3*1/8+ 1/8 = 0.
\end{equation}
Finally, for a vertex $v$ on the corner of the lattice, either 
\begin{equation}
\Delta_v = 3/4-0/4-3*1/8+0/8 = 3/8
\end{equation}
if $v$ is adjacent to an odd cell (an odd corner vertex), or
\begin{equation}
\Delta_v = 1/4-2/4 -3*0/8+1/8 = -1/8
\end{equation}
if $v$ is adjacent to an even cell (an even corner vertex).
\begin{figure}[ht]
\begin{center}
	\begin{tikzpicture}[scale=1.5]
	\begin{scope}[shift={(3,0)}]
	\rotateSide{0}{0}{1.5}
	\evenface*[0][0]*
	\oddface*[0][1]*
	\rotateFlat{0}{1}{1.5}
	\oddface*[0][0]*
	\evenface[-1][0]*
	\rotateReset{1.5}
	\oddface[-1][-1]
	\oddface*[0][0]
	\evenface*[0][-1]
	\evenface[-1][0]
	
	\begin{scope}[very thick,decoration={
		markings,
		mark=at position 0.5 with {\arrow{>}}}]
	\draw[postaction={decorate}] (0,1)--(.3,1.35);
	\draw[postaction={decorate}] (1,0)--(1.3,.35);
	\draw[]  (.3,1.35)--(.3,1);
	\draw[]  (1.3,.35)--(1,.35);
	\end{scope}
	
	\node at (0,0)[circle,fill=black,scale=0.6]{};
	\end{scope}
	
	\rotateSide{0}{0}{1.5}
	\evenface*[0][0]*
	\oddface*[0][1]*
	
	\rotateFlat{0}{1}{1.5}
	\oddface*[0][0]*
	\evenface[-1][0]*
	
	\rotateReset{1.5}
	\oddface[-1][-1]*
	\oddface*[0][0]
	\evenface*[0][-1]*
	\evenface[-1][0]*
	
	\rotateSide{0}{0}{1.5}
	\oddface[-1][-1]*
	
	\rotateFlat{0}{0}{1.5}
	\oddface[-1][-1]*
	
	\rotateSide{0}{0}{1.5}
	\evenface[-1][0]
	\draw[very thick] (-.5,-.5) circle (.1cm);
	
	\rotateFlat{0}{0}{1.5}
	\evenface*[0][-1]
	\draw[very thick] (-.5,-.5) circle (.1cm);
	
	\rotateReset{1.5}
	\draw[very thick] (-.5,-.5) circle (.1cm);
	\begin{scope}[very thick,decoration={
		markings,
		mark=at position 0.5 with {\arrow{>}}}]
	\draw[postaction={decorate}] (0,1)--(.3,1.35);
	\draw[postaction={decorate}] (1,0)--(1.3,.35);
	\draw[postaction={decorate}] (-1,0)--(-1,1);
	\draw[postaction={decorate}] (-1,1)--(0,1);
	\draw[postaction={decorate}] (0,-1)--(1,-1);
	\draw[postaction={decorate}] (1,-1)--(1,0);
	\draw[postaction={decorate}] (-.3,-1.35)--(-.3,-.35);
	\draw[postaction={decorate}] (-.3,-1.35)--(0,-1);
	\draw[postaction={decorate}] (-1.3,-.35)--(-.3,-.35);
	\draw[postaction={decorate}] (-1.3,-.35)--(-1,0);
	\draw[]  (.3,1.35)--(.3,1);
	\draw[]  (1.3,.35)--(1,.35);
	\end{scope}
	
	\begin{scope}[very thick,decoration={
		markings,
		mark=at position 2/7 with {\arrow{>}}}]
	\draw[postaction={decorate}] (-.3,0)--(-1,0);
	\end{scope}
	
	\begin{scope}[very thick,decoration={
		markings,
		mark=at position 5/7 with {\arrow{>}}}]
	\draw[postaction={decorate}] (-1,-1)--(-.3,-1);
	\end{scope}
	
	\begin{scope}[very thick,decoration={
		markings,
		mark=at position 5/6.5 with {\arrow{>}}}]
	\draw[postaction={decorate}] (-1,-1)--(-1,-.35);
	\end{scope}
	
	\begin{scope}[very thick,decoration={
		markings,
		mark=at position 1.5/6.5 with {\arrow{>}}}]
	\draw[postaction={decorate}] (0,-.35)--(0,-1);
	\end{scope}
	
	\node at (0,0)[circle,fill=black,scale=0.6]{};
	
	\rotateReset{1.5}
	
	\begin{scope}[shift={(0,-3)}]
	\rotateSide{0}{0}{1.5}
	\oddface*[-1][0]*
	\rotateReset{1.5}
	\oddface*[-.3][-.35]
	\evenface[-1.3][-.35]
	\rotateFlat{0}{1}{1.5}
	\oddface[0][-1]
	\evenface*[-1][-1]
	\node at (0,-1)[circle,fill=black,scale=0.6]{};
	
	\rotateReset{1.5}
	\end{scope}
	
	\begin{scope}[shift={(3,-3)}]
	\rotateSide{0}{0}{1.5}
	\evenface*[-1][0]*
	\rotateReset{1.5}
	\evenface*[-.3][-.35]
	\oddface[-1.3][-.35]
	\rotateFlat{0}{1}{1.5}
	\oddface*[0][-1]
	\evenface[-1][-1]
	\node at (0,-1)[circle,fill=black,scale=0.6]{};

	\end{scope}
	
	\rotateReset{1.5}
	
	\begin{scope}[shift={(5.5,-.65)}]
	\rotateSide{0}{0}{1.5}
	\oddface*[-1][0]*
	\rotateReset{1.5}
	\oddface*[-.3][-.35]
	\rotateFlat{0}{1}{1.5}
	\oddface[0][-1]
	\node at (0,-1)[circle,fill=black,scale=0.6]{};
	\end{scope}
	
	\rotateReset{1.5}
	
	\begin{scope}[shift={(5.5,-3)}]
	\rotateSide{0}{0}{1.5}
	\evenface*[-1][0]*
	\rotateReset{1.5}
	\evenface*[-.3][-.35]
	\rotateFlat{0}{1}{1.5}
	\oddface*[0][-1]
	\node at (0,-1)[circle,fill=black,scale=0.6]{};
	\end{scope}
	
	\end{tikzpicture}
\end{center}
\caption{The possible neighbourhoods of vertices in the bulk, on a face, on a edge, and on a corner of the cubic lattice. }\label{fig:vertexNeighbourhoods}
\end{figure}
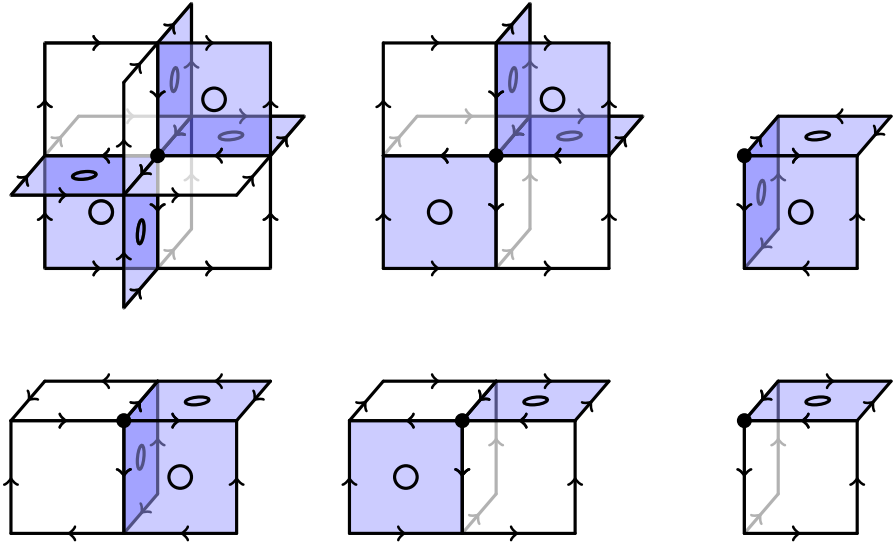
Thus, since there are $8$ corner vertices, which are either even corner vertices or odd corner vertices 
\begin{equation}
\Delta = \frac{3}{8} OCV -\frac{1}{8}(8-OCV) =\frac{1}{2}OCV -1.
\end{equation}
All that remains is to show that $\Delta \in \{-1,0,1,3\}$.

First note that an odd cell only shares faces with even cells, an even cell only shares odd faces with odd cells, and the face opposite an odd face on an even cell is also an odd face. Therefore, if we consider the three columns of cells extending away from a given odd cell in the three cardinal directions, each column will consist of alternating even and odd cells. Thus, \textit{given an odd corner cell, a corner opposite to it in one of the three cardinal directions will be odd if the side length of the lattice in that direction is even, and even if the side length is odd.}

Second note that any cell sharing an even face with an even cell must also be an even cell. Furthermore, the face opposite an even face on an even cell is also an even face. Therefore, if we consider the three columns of cells extending away from a given even cell in the three cardinal directions, only the cells in one of those columns contain odd cells, and the other two columns must only contain even cells. Thus, \textit{for a given even corner cell, of the three corner cells opposite this corner cell along the cardinal directions, at least two of them must be even corner cells, and the other is odd if and only if the side length of the lattice in that direction is odd.}

If a lattice has no odd corners, then the disparity is $\Delta =-1$. If on the other hand a lattice has at least one odd corner vertex, then the observations above completely fix the even or oddness of the remaining corner vertices based solely on the even/oddness of the side lengths of the lattice. This is illustrated in \cref{fig:evenOddCubic}. In these cases the number of corner vertices is $8$, $4$ or $2$, which yields disparities of $3$, $1$ and $0$.





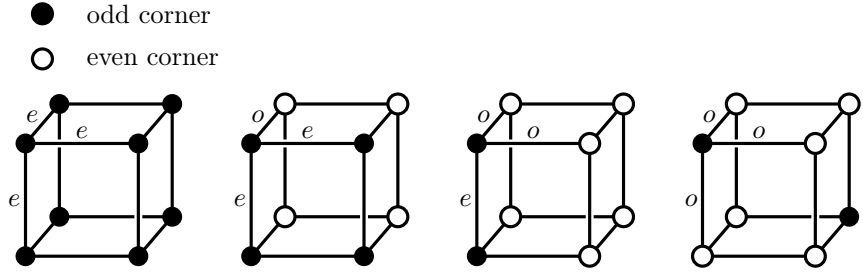
\begin{figure}[ht]
	\centering
	\begin{tikzpicture}[very thick,scale=1.5,line cap=round]
	\node at (0-.15,1.8)[circle,fill=black,scale=.8,draw]{};
	\node at (0-.15,1.4)[circle,fill=white,scale=.8,draw]{};
	
	\node at (0-.15+.3,1.8)[text width=5cm,anchor=west]{odd corner};
	\node at (0-.15+.3,1.4)[text width=5cm,anchor=west]{even corner};
	
	\draw (0,0) rectangle (1,1);
	\draw (0,0)--(-.3,-.35);
	\draw (1,0)--(1-.3,-.35);
	\draw (0,1)--(-.3,1-.35);
	\draw (1,1)--(1-.3,1-.35);
	\node at (0,0)[circle,fill=black,scale=.8]{};
	\node at (1,0)[circle,fill=black,scale=.8]{};
	\node at (0,1)[circle,fill=black,scale=.8]{};
	\node at (1,1)[circle,fill=black,scale=.8]{};
	
	\node at (-.24,1-.11){$e$};
	\node at (-.4,.15){$e$};
	\node at (.5-.3,1-.35+.1){$e$};

	\begin{scope}[shift={(-.3,-.35)}]
	\draw[white,line width=3pt] (0,0) rectangle (1,1);
	\draw (0,0) rectangle (1,1);
	\node at (0,0)[circle,fill=black,scale=.8]{};
	\node at (1,0)[circle,fill=black,scale=.8]{};
	\node at (0,1)[circle,fill=black,scale=.8]{};
	\node at (1,1)[circle,fill=black,scale=.8]{};
	\end{scope}
	
	\begin{scope}[shift={(2,0)}]
	\draw (0,0) rectangle (1,1);
	\draw (0,0)--(-.3,-.35);
	\draw (1,0)--(1-.3,-.35);
	\draw (0,1)--(-.3,1-.35);
	\draw (1,1)--(1-.3,1-.35);
	\node at (0,0)[circle,fill=white,scale=.8,draw]{};
	\node at (1,0)[circle,fill=white,scale=.8,draw]{};
	\node at (0,1)[circle,fill=white,scale=.8,draw]{};
	\node at (1,1)[circle,fill=white,scale=.8,draw]{};
	
	\node at (-.24,1-.11){$o$};
	\node at (-.4,.15){$e$};
	\node at (.5-.3,1-.35+.1){$e$};
	
	\begin{scope}[shift={(-.3,-.35)}]
	\draw[white,line width=3pt] (0,0) rectangle (1,1);
	\draw (0,0) rectangle (1,1);
	\node at (0,0)[circle,fill=black,scale=.8]{};
	\node at (1,0)[circle,fill=black,scale=.8]{};
	\node at (0,1)[circle,fill=black,scale=.8]{};
	\node at (1,1)[circle,fill=black,scale=.8]{};
	\end{scope}
	\end{scope}
	
	\begin{scope}[shift={(4,0)}]
	\draw (0,0) rectangle (1,1);
	\draw (0,0)--(-.3,-.35);
	\draw (1,0)--(1-.3,-.35);
	\draw (0,1)--(-.3,1-.35);
	\draw (1,1)--(1-.3,1-.35);
	\node at (0,0)[circle,fill=white,scale=.8,draw]{};
	\node at (0,1)[circle,fill=white,scale=.8,draw]{};
	\node at (1,0)[circle,fill=white,scale=.8,draw]{};
	\node at (1,1)[circle,fill=white,scale=.8,draw]{};
	
	\node at (-.24,1-.11){$o$};
	\node at (-.4,.15){$e$};
	\node at (.5-.3,1-.35+.1){$o$};
	
	\begin{scope}[shift={(-.3,-.35)}]
	\draw[white,line width=3pt] (0,0) rectangle (1,1);
	\draw (0,0) rectangle (1,1);
	\node at (0,0)[circle,fill=black,scale=.8]{};
	\node at (1,0)[circle,fill=white,scale=.8,draw]{};
	\node at (0,1)[circle,fill=black,scale=.8]{};
	\node at (1,1)[circle,fill=white,scale=.8,draw]{};
	\end{scope}
	\end{scope}
	
	\begin{scope}[shift={(6,0)}]
	\draw (0,0) rectangle (1,1);
	\draw (0,0)--(-.3,-.35);
	\draw (1,0)--(1-.3,-.35);
	\draw (0,1)--(-.3,1-.35);
	\draw (1,1)--(1-.3,1-.35);
	\node at (0,0)[circle,fill=white,scale=.8,draw]{};
	\node at (1,0)[circle,fill=black,scale=.8]{};
	\node at (0,1)[circle,fill=white,scale=.8,draw]{};
	\node at (1,1)[circle,fill=white,scale=.8,draw]{};
	
	\node at (-.24,1-.11){$o$};
	\node at (-.4,.15){$o$};
	\node at (.5-.3,1-.35+.1){$o$};
	
	\begin{scope}[shift={(-.3,-.35)}]
	\draw[white,line width=3pt] (0,0) rectangle (1,1);
	\draw (0,0) rectangle (1,1);
	\node at (0,0)[circle,fill=white,scale=.8,draw]{};
	\node at (1,0)[circle,fill=white,scale=.8,draw]{};
	\node at (0,1)[circle,fill=black,scale=.8]{};
	\node at (1,1)[circle,fill=white,scale=.8,draw]{};
	\end{scope}
	\end{scope}
	
	\end{tikzpicture}
	\caption{The possible configurations of even and odd corners on the cubic encoding, assuming there is at least one odd corner. The letters denote whether a lattice edge has an even or odd number of vertices.}\label{fig:evenOddCubic}
\end{figure}
\end{proof}

One notable feature of this construction is that for a lattice with infinite extent in any direction, there are no corner vertices, and so the disparity is $0$, corresponding to an encoding representing the full fermionic Hilbert space. The 2d square encoding is formally a subcase of the cubic encoding outlined here, and so this also applies to the 2d square encoding.

Since every edge is adjacent to at most $2$ odd faces, the maximum Pauli weight of an edge operator is $4$. In the bulk the number of odd faces per qubit is $6/4 = 1.5$, so the mode to qubit ratio is less than $1.5$.

\subsection{Majorana Species for \texorpdfstring{$\Delta=1$}{Lg} and \texorpdfstring{$\Delta=3$}{Lg}}
If all of the edge operators in a given fermionic encoding act with the same Pauli operator on a vertex qubit, then there exists a weight-1 Pauli operator acting on that vertex qubit which anti-commutes with all fermionic edge and vertex operators on that site. This Pauli operator is a logical Majorana operator. In the square lattice encoding, as well as the cubic encoding, this occurs at odd corner vertices. Majoranas may be ``injected" into the lattice by applying a Majorana operator at a corner vertex, followed by a string of edge operators, translating the Majorana to the desired location.

In the case where the disparity $\Delta=-1$ there are no odd corner vertices where a Majorana may be injected. This corresponds to the fact that only the even fermionic Hilbert space is encoded in this case. When $\Delta=0$, there are two odd corner vertices where a Majorana may be injected. If one takes by convention one odd corner vertex to correspond to the injection of a Majorana, then the other odd corner vertex corresponds to the injection of a Majorana hole, which anti-commutes with the Majorana particle. In this sense both corners correspond to the injection of the same species of Majorana, since they are equivalent up to stabilizers and the action of logical even fermionic operators. 

If the disparity $\Delta =1$, then the lattice has four odd corner vertices, and the additional qubit degree of freedom manifests as four species of mutually anti-commuting Majorana particles, each species distinguished by which of the four odd corner vertices the particle was originally "injected" into the lattice. These particles correspond to logical operations acting jointly on both the encoded fermionic Hilbert space, and the additional qubit degree of freedom. Specifically, one particle can be taken to be the canonical Majorana operator $\gamma$, and the other three particles then correspond to operators acting jointly on the logical qubit degree of freedom and the fermionic Hilbert space as a tensor product of a Pauli operator and a Majorana hole operator:
$$ h \otimes Z \;,\; h \otimes X \;,\; h \otimes Y$$
Importantly, the mutual anti-commutation relations are preserved in this logical representation. This feature appears in both the 2d square lattice encoding, as well as the cubic encoding, when $\Delta =1$.

Unlike the square lattice encoding, the cubic lattice encoding may also have a disparity of $\Delta =3$, corresponding to $8$ odd corner vertices, and $3$ additional qubit degrees of freedom. In this case, there are $8$ species of mutually anti-commuting Majorana particles which may be injected at the different corners. It is no coincidence that the size of the largest set of mutually anti-commuting Pauli operators on 3 qubits is $7$, and so the $8$ Majorana particles may be chosen to correspond to the following $8$ mutually anti-commuting operators on the joint fermionic-qubit logical space:

\begin{align}
\gamma \otimes III, \;\; h \otimes ZXI, \;\; h \otimes ZYI, \;\; h \otimes XIZ, \\
h \otimes YIZ, \;\; h \otimes IZX, \;\; h \otimes IZY, \;\; h \otimes ZZZ \nonumber
\end{align}

Note that all logical Pauli operators acting exclusively on the three qubit space can be generated by even products of the 7 operators associated with holes, for example:
\begin{equation}
I \otimes XII = (h \otimes XIZ)(h \otimes ZXI) (h \otimes ZYI) (h \otimes ZZZ).
\end{equation}
Thus all logical Paulis correspond to pairings of different species of Majorana particles.



\section{Discussion}\label{sec:discussion}

We have illustrated how the compact encoding may be applied to many other lattices, including all uniform tilings of degree less than $4$, and the cubic lattice. Most importantly we have illustrated how to analyse the code spaces of these encodings. Our hope is that these examples may be useful to others in tailoring the compact encoding to their own specific needs.

It would be valuable to have a general procedure for constructing these encodings. In particular a general procedure for choosing the orientations of the edges of the graphs, and assignment of auxilliary qubits, so that the weights of the edge operators are minimal and act locally. The constructions we present here were not too difficult to find, but ultimately emerged from a process of trial and error. We expect the problem can be straightforwardly framed as a combinatoric optimization problem, however this suggests that a generic set of instructions for optimal constructions is unlikely, especially given that ultimately the optimization problem will be dependent on the particular notion of locality furnished by the quantum computing device.


As in \cite{Derby20}, the encodings presented here have a low code distance with a single $Z$ physical error on any vertex qubit inducing an undetectable logical error. In \cite{bausch2020} the error mitigating properties of the compact encoding on the square lattice were discussed. We expect that many of these properties also apply to the encodings presented here.

In \cite{Derby20} it was shown how the compact encoding may be thought of as a method for condensing the particle excitations of the toric code into the low energy subspace. In this case removing the vertex qubits from the stabilizer generators reveals toric code stabilizers on the auxilliary qubits. We would like to understand better how this feature translates to the 3d cubic encoding. The structure of the encoding bears some resemblance to a 3d generalization of the toric code presented in \cite{Chamon2005, BRAVYI2011}, which has qubits living on the faces of cubes, and employs weight $6$ stabilizers on each cube. The cubic encoding also has its auxilliary qubits living on the faces of cubes, however these are more sparsely arranged, and the simplest stabilizers around faces, when considering only auxilliary qubits, have weights $4$ and $8$ (see Figure \ref{fig:pauliFaceLoop}). We have as yet been unable to identify a code in the literature which has stabilizers obviously resembling these \cite{dennis2002topological,levin2003fermions, haah2011local, webster2020fault}. This motivates further study of the structure and properties of this underlying code.

\section*{Acknowldegments}

We would like to thank Tom Scruby and Michael Vasmer for helpful discussions on 3D toric codes, Johannes Bausch for his input on spinful planar encodings and Toby Cubitt for his diligent proof-reading.

\printbibliography

\appendix

\section{Properties of the cycle group \texorpdfstring{$\mathcal{C}_G$}{Lg}} \label{app:CGProperties}

The cycle group $\mathcal{C}_G$ is defined by Equation \ref{eq:cycleGroup}. Here we prove some properties of this group. We say a cycle $c \in \mathcal{C}_G$ contains an edge if its expression as a product of edges contains that edge. We say a cycle $c$ contains a vertex if it contains an edge incident on that vertex.

\begin{prop}
$\mathcal{C}_G$ is in the centralizer of $M_G$.
\end{prop}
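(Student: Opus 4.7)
The plan is to verify that each element of $\mathcal{C}_G$ commutes with each generator of $M_G$. Since $M_G$ is generated by phases, vertices $v_i$, and directed edges $e_{ij}$, and phases commute with everything, it suffices to check commutation with vertex and edge generators. Moreover, since commutation is preserved by products of group elements (two elements commute with a third iff their product does, up to a sign convention already encoded in $\pm 1$), it is enough to take an arbitrary cycle generator $c = \ii^{|L|}\prod_{x=1}^{|L|-1} e_{i_x, i_{x+1}}$ and show $[c, v_k] = 0$ and $[c, e_{kl}] = 0$ for all $k, l$.

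The key observation underlying both cases is combinatorial: for any cyclic sequence $L = (i_1, i_2, \dots, i_{|L|})$ with $i_1 = i_{|L|}$, every vertex $v$ has even degree in the multiset of edges $\{e_{i_x, i_{x+1}}\}_{x=1}^{|L|-1}$. This is because each occurrence of $v$ as an internal index contributes exactly two incident edges, one incoming and one outgoing. Call this even number $d_v(L)$.

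For commutation with $v_k$: the relations in \cref{eq:MGconds} give $\{e_{ij}, v_k\} = 0$ if $k \in \{i,j\}$ and $[e_{ij}, v_k] = 0$ otherwise. Pushing $v_k$ through $c$ thus introduces a sign of $(-1)^{d_k(L)}$, which is $+1$ since $d_k(L)$ is even. Hence $[c, v_k] = 0$.

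For commutation with $e_{kl}$: using \cref{eq:MGconds}, two directed edges $e_{ij}$ and $e_{kl}$ anti-commute precisely when $|\{i,j\} \cap \{k,l\}| = 1$; they commute when the intersection is empty, and when $\{i,j\} = \{k,l\}$ they commute as well (since $e_{kl}$ is self-inverse and $e_{lk} = -e_{kl}$). Let $n$ denote the number of factors in $c$ equal to $\pm e_{kl}$. Among the $d_k(L)$ edges in $c$ incident to $k$, exactly $n$ are $\pm e_{kl}$ and the remaining $d_k(L) - n$ share only the vertex $k$ with $e_{kl}$, hence anti-commute with it; symmetrically for $l$. The total number of anti-commuting factors is $(d_k(L) - n) + (d_l(L) - n) = d_k(L) + d_l(L) - 2n$, which is even. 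Therefore $[c, e_{kl}] = 0$.

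The main obstacle to watch out for is the bookkeeping in the edge-edge case: one must avoid double counting the factors of $\pm e_{kl}$ in $c$ (which are incident to both $k$ and $l$), and handle the fact that $e_{ij}$ and $e_{ji}$ differ by a sign but are the same group element up to a central phase, so they contribute identically to the commutation parity. Once this accounting is done carefully, both parities are manifestly even, and the proposition follows.
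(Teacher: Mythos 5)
Your proof is correct and follows essentially the same route as the paper's: reduce to commutation with the vertex and edge generators, and use the fact that every vertex has even degree in the edge multiset of a closed walk to show the accumulated sign is $+1$ in both cases. Your explicit count $(d_k(L)-n)+(d_l(L)-n)$ for the edge--edge case is just a more careful spelling-out of the paper's parenthetical remark that the parity argument holds ``even if $c$ contains $e$.''
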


\begin{proof}
Consider an element $c \in \mathcal{C}_G$. For every vertex contained in $c$, $c$ contains an even number of edges incident on that vertex. $M_G$ is generated by edge and vertex operators. First we note that $c$ commutes with all vertex operators since it contains an even number of edge operators incident on that vertex. Next we note that $c$ commutes with all edge operators, since for every edge operator $e$, $c$ contains an even number of edge operators not equal to $e$ and incident on common vertices with $e$ (this is true even if $c$ contains $e$). Thus $\mathcal{C}_G$ commutes with $M_G$.
\end{proof}

\begin{corollary}
$\mathcal{C}_G$ is an abelian normal subgroup of $M_G$.\end{corollary}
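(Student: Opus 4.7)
The preceding Proposition has already done the essential work by showing that every element of $\mathcal{C}_G$ commutes with every element of $M_G$. From this the corollary follows almost for free, and the plan is simply to check each of the three properties in turn.

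First, I would settle the abelian property. Since $\mathcal{C}_G \subseteq M_G$, the centralizer claim applies in particular to pairs of elements drawn from $\mathcal{C}_G$ itself, so any two cycles commute. This step is essentially a one-line observation.

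Second, I would verify that $\mathcal{C}_G$ is actually a subgroup of $M_G$. Closure under multiplication follows from the definition of $\mathcal{C}_G$: a product of two elements of the form $\ii^{|L|} \prod e_{i_x,i_{x+1}}$ can be reorganized (using the commutation established in the Proposition, and the relations in Equation \ref{eq:MGconds} to cancel any repeated edges via $e_{ij}^2 = \mathbb{I}$) into another product of directed edges whose underlying edge multiset still has even incidence at every vertex, hence admits a decomposition into directed cycles and so lies in $\mathcal{C}_G$. For inverses, each edge is self-inverse ($e_{ij}^2 = \mathbb{I}$), so given $c = \ii^{|L|} e_{i_1,i_2} \cdots e_{i_{|L|-1},i_{|L|}}$ its inverse is obtained by reversing the sequence and conjugating the phase; using $e_{ji} = -e_{ij}$ this is, up to sign, exactly the cycle corresponding to the reversed traversal $L^{\mathrm{rev}}$, which is again an element of $\mathcal{C}_G$.

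Third, normality is immediate from the centralizer property: for any $g \in M_G$ and $c \in \mathcal{C}_G$, we have $g c g^{-1} = c g g^{-1} = c \in \mathcal{C}_G$, hence $g \mathcal{C}_G g^{-1} = \mathcal{C}_G$. The only step requiring genuine care is the closure argument, where one must be attentive to signs and to the fact that a product of two cycles need not be a single simple cycle but rather a product of cycles; invoking the remark (also established in Appendix \ref{app:CGProperties}) that $\mathcal{C}_G$ is isomorphic to the cycle space of $G$ — which is itself a $\mathbb{Z}_2$ vector space and hence closed under sums — sidesteps any combinatorial bookkeeping and delivers closure cleanly.
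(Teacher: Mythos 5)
Your proof is correct and follows the same route as the paper, which treats this corollary as an immediate consequence of the preceding centralizer proposition: commutation within $\mathcal{C}_G$ gives abelianness and $gcg^{-1}=c$ gives normality. The extra care you take to verify closure and inverses (via the cycle-space isomorphism and $e_{ij}^2=\mathbb{I}$, $e_{ji}=-e_{ij}$) addresses a subgroup axiom the paper leaves implicit, and is a welcome addition rather than a divergence.
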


\begin{definition}[Eulerian Graph]
A graph is eulerian if each of its vertices has an even number of incident edges.
\end{definition}

\begin{definition}[Simple Cycle]
A connected eulerian subgraph in which all vertices have degree two.
\end{definition}

\begin{definition}[Cycle Space $\Gamma$ of $G$]
Given a graph $G=(\bf{E},\bf{V})$, let $\mathcal{E}=P(\bf{E})$ be the power set of edges in the graph, also known as the edge space. The edge space forms an abelian group, with the product operation being the disjunctive union $ab = a \cup b - a \cap b$. The cycle space $\Gamma$ is the set of edge spaces of eulerian subgraphs of $G$, and it is a subgroup under this group operation. Every element of the cycle space can be expressed as the product of simple cycles, thus there exists a basis -- called a cycle basis -- consisting of independent simple cycles which generates $\Gamma$.
\end{definition}

\begin{prop}
$\mathcal{C}_G$ is isomorphic to the cycle space $\Gamma$ of $G$.
\end{prop}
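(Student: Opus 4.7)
The plan is to construct an explicit isomorphism $\phi : \mathcal{C}_G \to \Gamma$ by forgetting the directions and the prefactor of each directed cycle, and then to verify it is well-defined, a homomorphism, injective and surjective. For an element $c=\ii^{|L|}\prod_{x=1}^{|L|-1} e_{i_x,i_{x+1}}$, let $\phi(c)$ be the disjunctive union (i.e.\ multiset modulo $2$) of the underlying undirected edges $\{i_x,i_{x+1}\}$. The first task is to check that $\phi$ is well-defined on $\mathcal{C}_G$: I would show that the two presentational ambiguities---cyclic shift of the starting index and global reversal of orientation---leave $\phi(c)$ invariant, which is immediate since both operations only change the order of the factors and do not add or remove any undirected edge. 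I would also observe that changing the orientation of a single edge factor from $e_{ij}$ to $e_{ji}=-e_{ij}$ leaves the underlying edge unchanged and only flips the sign, and the $\ii^{|L|}$ factor is by definition independent of all such choices.

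The second step is to show that $\phi$ is a homomorphism. Given two cycles $c_1,c_2\in\mathcal{C}_G$, I would use the relations in \eqref{eq:MGconds} to reorder $c_1c_2$ so that identical edge factors are adjacent; by $e_{ij}^2=\mathbb{I}$ these pairs cancel, leaving an element whose underlying undirected edge multiset is exactly the symmetric difference $\phi(c_1)\triangle\phi(c_2)=\phi(c_1)\phi(c_2)$. What remains to check is that the leftover scalar is precisely $\ii^{|L'|}$, where $L'$ is the length of a traversal realising the resulting cycle; this is a bookkeeping exercise tracking how many edge-pair cancellations occurred and noting that each cancellation removes two edges and contributes $\ii^2(\pm1)^2=-1$, while each reversal $e_{ji}=-e_{ij}$ can be reabsorbed by flipping the traversal. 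This is the main technical step, and I expect it to be the main obstacle: the signs and $\ii$-factors must be reconciled globally, not just locally.

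Surjectivity is straightforward using the simple-cycle basis of $\Gamma$: every simple cycle in $G$ has a lift to $\mathcal{C}_G$ obtained by picking an orientation of its traversal and inserting the prefactor $\ii^{|L|}$; then $\phi$ hits each basis element of $\Gamma$, and combined with the homomorphism property, it hits all of $\Gamma$. For injectivity, suppose $\phi(c)=\varnothing$. Then $c$ has an expression as a product of directed edges in which every undirected edge appears an even number of times. Using the anti-commutation and involution relations of \eqref{eq:MGconds} to pair up and cancel these duplicated edges shows that the group-theoretic content of $c$ reduces to $\pm 1$ or $\pm\ii$. By the very definition \eqref{eq:cycleGroup}, the prefactor $\ii^{|L|}$ is fixed by $|L|$, and since cancelling a duplicated edge pair decreases $|L|$ by $2$, the surviving scalar is $\ii^{0}=1$. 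Hence $c=\mathbb{I}$, so $\ker\phi$ is trivial and $\phi$ is an isomorphism.
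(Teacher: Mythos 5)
Your map runs in the opposite direction to the paper's: the paper lifts each simple cycle of $\Gamma$ into $\mathcal{C}_G$ (checking independence of basepoint and orientation) and then extends multiplicatively over a cycle basis, whereas you push $\mathcal{C}_G$ down onto $\Gamma$ by forgetting orientations and phases. That route is viable, but as written two steps are not secured. First, well-definedness: an element of $\mathcal{C}_G$ is an element of the abstract group $M_G$, and two closed walks can represent the same group element for reasons beyond cyclic shift, global reversal, or reorienting a single factor --- those are symmetries of a chosen word, not a classification of when two words coincide in $M_G$. The clean repair is to note that the assignment $e_{ij}\mapsto\{i,j\}$, $v_i\mapsto\varnothing$, phases $\mapsto\varnothing$ respects every relation in \eqref{eq:MGconds} (each relation preserves edge multiplicities mod $2$), hence extends to a homomorphism $M_G\to\mathbb{Z}_2^{\mathbf{E}}$ whose restriction to $\mathcal{C}_G$ is your $\phi$. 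This settles well-definedness and the homomorphism property in one stroke, and it makes your ``main technical step'' about the leftover scalar unnecessary for the homomorphism claim: $\phi$ discards the scalar, so that computation is really about closure of $\mathcal{C}_G$ under multiplication, not about $\phi$.

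Second, injectivity is where the real content lies, and your sketch does not carry it. You must show that a closed walk in which every undirected edge occurs an even number of times evaluates to exactly $+\mathbb{I}$, and not to $-\mathbb{I}$ or $\pm\ii$; otherwise $\ker\phi$ contains a nontrivial scalar and $\phi$ is not injective. Your two sign claims are in tension --- you first say each cancellation contributes $\ii^2(\pm1)^2=-1$, and later that the surviving scalar is $\ii^0=1$ --- and reconciling them requires tracking, for each cancelling pair, the parity of anticommutations incurred while commuting the two copies of the edge together, and showing that this parity always conspires with the factor coming from the change $\ii^{|L|}\to\ii^{|L|-2}$. This is exactly the bookkeeping the paper's direction avoids: by defining the map only on simple cycles (where no edge repeats) and extending over a cycle basis of $\Gamma$, which is free as a $\mathbb{Z}_2$-module, no cancellation of repeated edges ever arises. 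Either carry out that parity computation in full, or restructure the injectivity step to reduce to simple cycles as the paper does.
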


\begin{proof}

Every simple cycle $a \in \Gamma$ is a set of undirected edges which may be given an ordering and directedness such that one may traverse the cycle, passing over each vertex in $a$ exactly once. The choice of ordering is unique up to the choice of starting point and the direction of travel. For every simple cycle $a\in \Gamma$ define the function $f: \Gamma \rightarrow \mathcal{C}_G$ in accordance with an arbitrary choice of starting point and direction of travel
\begin{equation}
f(a) =i^{|a|} \prod_{e_{i,i+1} \in a} e_{i, i+1}.
\end{equation}
We may note that $f(a)$ is invariant under any choice of starting point and direction of travel by noting that a simple cycle never goes over the same vertex twice and so
\begin{equation}
e_{1, 2} e_{2,3}... e_{|a|,1} = e_{2,3}... e_{|a|,1}e_{1, 2}
\end{equation}
and furthermore inverting direction of travel yields
\begin{align}
e_{1, |a|}... e_{3,2}e_{2,1} &= (-1)^{|a|}e_{|a|, 1}... e_{2,3}e_{1,2} \\
&=(-1)^{|a|}e_{1,2} (e_{|a|, 1}... e_{2,3})\\
&=(-1)^{|a|} e_{1,2} e_{2,3} (-1)^1(e_{|a|, 1}... e_{3,4})\\
&=(-1)^{|a|} e_{1,2} e_{2,3}... e_{|a|-1, |a|} (-1)^{|a|-2}e_{|a|, 1}\\
&= e_{1, 2} e_{2,3}... e_{|a|,1}
\end{align}
Thus one may uniquely retrieve $a$ from $f(a)$, and so $f$ is invertible on the simple cycles. 
Choose a cycle basis $B$ of $\Gamma$. Define $f$ on all of $\Gamma$ as the product of its action on $B$, ie if $c = \prod b_i \;, b_i \in B$, then $f(c) = \prod_i  f(b_i)$.
We can see by construction that $f$ is a homomorphism. Furthermore, since $f$ is invertible on all simple cycles, it is invertible on the cycle basis, and is thus an isomorphism.
\end{proof}

\begin{prop}
$M_E$ is isomorphic to $M_G/\mathcal{C}_G$
\end{prop}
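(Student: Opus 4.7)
The plan is to exhibit mutually inverse homomorphisms $\bar{\phi}: M_G/\mathcal{C}_G \to M_E$ and $\psi: M_E \to M_G/\mathcal{C}_G$, leveraging the fact (asserted in the paper) that Eqs.~\ref{eq:EVconds} together with Eq.~\ref{eq:Eloopcond} form a complete presentation of $M_E$.

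First I would define $\phi: M_G \to M_E$ on the generating set of $M_G$ by $e_{ij} \mapsto E_{ij}$, $v_j \mapsto V_j$, and phases to themselves, and check this is well-defined by observing that the defining relations Eqs.~\ref{eq:MGconds} of $M_G$ are a subset of the relations Eqs.~\ref{eq:EVconds} satisfied by the $E_{ij}, V_j$ in $M_E$. Surjectivity of $\phi$ is immediate since the $E_{ij}$, $V_j$ and phases generate $M_E$. The containment $\mathcal{C}_G \subseteq \ker\phi$ follows directly from Eq.~\ref{eq:Eloopcond}: the image of a cycle generator $\mathrm{i}^{|L|}\prod_x e_{i_x,i_{x+1}}$ is the left-hand side of Eq.~\ref{eq:Eloopcond}, which equals $\mathbb{I}$. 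Hence $\phi$ descends to a surjective homomorphism $\bar{\phi}: M_G/\mathcal{C}_G \to M_E$.

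Next I would construct $\psi: M_E \to M_G/\mathcal{C}_G$ on generators by $E_{ij} \mapsto [e_{ij}]$, $V_j \mapsto [v_j]$, and phases to themselves. To verify $\psi$ is well-defined, I need to check that the cosets $[e_{ij}], [v_j]$ in $M_G/\mathcal{C}_G$ satisfy both families of relations defining $M_E$. Eqs.~\ref{eq:EVconds} are precisely Eqs.~\ref{eq:MGconds} under the substitution $e \leftrightarrow E$ and $v \leftrightarrow V$, and thus hold already in $M_G$ (hence in the quotient). The loop relations Eq.~\ref{eq:Eloopcond} hold because $\mathrm{i}^{|L|}\prod e_{i_x,i_{x+1}}$ is by definition an element of $\mathcal{C}_G$, so its coset is the identity in $M_G/\mathcal{C}_G$. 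With $\psi$ in hand, I would verify $\bar{\phi}\circ\psi$ and $\psi\circ\bar{\phi}$ act as the identity on generators; since generators generate, both compositions are the identity, and $\bar{\phi}$ is an isomorphism.

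The main obstacle is justifying the use of the presentation of $M_E$ by Eqs.~\ref{eq:EVconds} and Eq.~\ref{eq:Eloopcond}: the paper asserts this without proof, and a fully self-contained argument would need to show that no additional relations among the $E_{ij}, V_j$ are hidden in the multiplication of even Majorana monomials. A counting-style backup would fix a spanning tree $T\subseteq \mathbf{E}$ of $G$, use the fundamental cycles of $T$ to reduce every element of $M_G$ modulo $\mathcal{C}_G$ to a normal form $\pm \mathrm{i}^k \prod_{e\in T} e^{a_e} \prod_{v\in \mathbf{V}} v^{b_v}$, and then compare the resulting $4\cdot 2^{|T|+|\mathbf{V}|} = 2^{2M+1}$ candidate cosets against the $2^{2M+1}$ signed Pauli strings arising from a Jordan--Wigner representation along $T$ to establish injectivity of $\bar{\phi}$ directly. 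This would avoid invoking the presentation claim at the cost of a tedious commutation/sign-tracking calculation.
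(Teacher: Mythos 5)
Your proposal is correct and takes essentially the same approach as the paper: both define the correspondence on generators ($e_{ij}\mathcal{C}_G \leftrightarrow E_{ij}$, $v_i\mathcal{C}_G \leftrightarrow V_i$), observe that the relations of $M_G$ hold among the $E_{ij}, V_j$, and use Eq.~\ref{eq:Eloopcond} to kill $\mathcal{C}_G$. Your write-up is in fact more careful than the paper's --- you construct the two-sided inverse explicitly and correctly flag that both arguments lean on the unproven claim that Eqs.~\ref{eq:EVconds} and~\ref{eq:Eloopcond} are a complete presentation of $M_E$, a gap your spanning-tree counting backup would close.
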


\begin{proof} 
It is not difficult to identify the isomorphism. For all vertices $i$ in $G$ define the invertible transformation
\begin{equation}
f: v_i C_G \rightarrow V_i 
\end{equation}
for every edge $\{v_i,v_j\}$ in $G$ define the invertible transformation
\begin{equation}
f: e_{ij} \mathcal{C}_G \rightarrow E_{ij} \;,\; e_{ji}\mathcal{C}_G \rightarrow E_{ji} 
\end{equation}
Since $G$ is connected, all other edge operators $E_{ij}$ for $\{v_i, v_j\}$ not in $G$ may be constructed by composition of defined edge operators, using the recursion relation
\begin{equation}
E_{ij}E_{jk} = - i E_{ik} 
\end{equation}
By this construction Relation \ref{eq:Eloopcond} is satisfied for cyclic paths not in $G$, and it is also satisfied for paths in $G$ since 
\begin{align}
f^{-1}\left(\ii^{|L|}\prod_{x=1}^{|L|-1}E_{i_x,i_{x+1}} \right) &= \ii^{|L|}\prod_{x=1}^{|L|-1}f^{-1}(E_{i_x,i_{x+1}})\\
&= \ii^{|L|}\prod_{x=1}^{|L|-1}e_{i_x,i_{x+1}} \mathcal{C}_G \\
&= \mathbb{I}\mathcal{C}_G 
\end{align}


\end{proof}

\typeout{get arXiv to do 4 passes: Label(s) may have changed. Rerun}
\end{document}

